\documentclass[11pt, reqno]{amsart}


\makeatletter
\g@addto@macro{\endabstract}{\@setabstract}
\makeatother

\usepackage{graphics, stackrel}
\usepackage{amsmath, amssymb, amsthm}
\usepackage{amsfonts}
\usepackage{graphicx}
\usepackage{verbatim}
\usepackage{natbib}


\usepackage{fancyvrb}
\usepackage{color}
\usepackage{mdwlist}



\usepackage[citecolor=blue, colorlinks=true, linkcolor=blue]{hyperref}

\usepackage{mathrsfs}
\usepackage{bbm}

\usepackage[left=1.35in, right=1.35in, top=1.0in, bottom=1.15in, includehead, includefoot]{geometry}

\usepackage{enumitem}
\setlist[enumerate]{itemsep=2pt,topsep=3pt}
\setlist[itemize]{itemsep=2pt,topsep=3pt}
\setlist[enumerate,1]{label=(\alph*)}



\usepackage{multirow}

\renewcommand{\leq}{\leqslant}
\renewcommand{\geq}{\geqslant}



\setlength{\parskip}{1.5ex plus0.5ex minus0.5ex}
\setlength{\parindent}{0pt}

\DeclareMathOperator*{\argmax}{argmax}
\DeclareMathOperator*{\argmin}{argmin}


\newcommand{\setntn}[2]{ \{ #1 : #2 \} }

\newcommand{\1}{\mathbbm 1}

\newcommand*\diff{\mathop{}\!\mathrm{d}}
\renewcommand{\epsilon}{\varepsilon}


\newcommand{\cC}{\mathscr C}

\newcommand{\vV}{\mathscr V}


\newcommand{\rR}{\mathcal R}

\newcommand{\XX}{\mathsf X}
\renewcommand{\AA}{\mathsf A}

\newcommand{\ZZ}{\mathsf Z}
\renewcommand{\SS}{\mathsf S}


\newcommand{\GG}{\mathbbm G}



\newcommand{\RR}{\mathbbm R}

\newcommand{\NN}{\mathbbm N}

\newcommand{\EE}{\mathbbm E \,}

\theoremstyle{plain}
\newtheorem{theorem}{Theorem}[section]

\newtheorem{lemma}[theorem]{Lemma}
\newtheorem{proposition}[theorem]{Proposition}

\theoremstyle{definition}

\newtheorem{assumption}{Assumption}[section]

\begin{document}

\title{}

\date{\today}

\begin{center}
  \Large
  Dynamic Programming with Recursive Preferences: \\ Optimality and
  Applications\footnote{This paper has benefited from thoughtful comments by
      Jaroslav Borovi\v{c}ka, Yiannis Vailakis and Gaetano Bloise.  
  Financial support from ARC grant FT160100423 is gratefully acknowledged.
  Email addresses: \texttt{guanlong.ren@anu.edu.au} and \texttt{john.stachurski@anu.edu.au} }

  \bigskip
  \normalsize
  Guanlong Ren and John Stachurski \par \bigskip

  Research School of Economics, Australian National University \bigskip

  \today
\end{center}

\begin{abstract} 
    This paper provides new conditions for dynamic optimality in discrete time and uses them to establish fundamental dynamic programming results for several commonly used recursive preference specifications.   These include Epstein-Zin preferences, risk-sensitive preferences, narrow framing models and recursive preferences with sensitivity to ambiguity.  The results obtained for these applications include (i) existence of optimal policies, (ii) uniqueness of solutions to the Bellman equation, (iii) a complete characterization of optimal policies via Bellman's principle of optimality, and (iv) a globally convergent method of computation via value function iteration.  

    \vspace{1em}
    \noindent
    \textit{JEL Classifications:} C61, C63, E20 \\
    \textit{Keywords:} Dynamic programming, recursive preferences, ambiguity
\end{abstract}


\section{Introduction}

Economists have constructed progressively more realistic representations of
agents and their choices.  In the context of intertemporal choice, these
preferences have come to include such features as independent sensitivity to
intertemporal substitution and intratemporal risk, desire for robustness, the
impact of narrow framing and some forms of ambiguity.  Models with such
features now play a key role in connecting quantitative
models to data in several areas of macroeconomics and finance.\footnote{For example,
    the recursive preference specification of
    \cite{epstein1989} now forms a core component of the quantitative asset
    pricing literature, while also finding use in applications ranging from
    optimal taxation to fiscal policy and business cycles (see, e.g.,
    \cite{bansal2004yaron}, \cite{kaplan2014model},
    or \cite{schorfheide2018identifying}).  
    For a canonical study of desire for robustness in economic modeling see
    \cite{hansen2008robustness}.  Well known studies of ambiguity and
    intertemporal choice include
    \cite{epstein2008ambiguity},
    \cite{klibanoff2009recursive},
    \cite{siniscalchi2011dynamic}
    and \cite{ju2012ambiguity}.  An overview of recursive preference models can be found in \cite{backus2004exotic}.}

At the same time, uptake of sophisticated intertemporal preferences is
hampered by technical difficulties, which are in turn a result of the
nonlinearities embedded in such preferences.  Moreover, even for those
settings where recursive preferences \emph{are} routinely adopted, such as
quantitative models of asset prices and equity premia, it has become clear
that the extensive approximations commonly used to simplify them can generate large and
economically significant errors (see, e.g., \cite{pohl2018higher}).
 
There are two major technical difficulties associated with recursive
preference models.  One is existence and
uniqueness of utility when the latter is recursively defined.  For example, if
lifetime utility $U_t$ at time $t$ is specified as 
\begin{equation} \label{eq:egru} U_t = W(C_t, \rR (U_{t+1}))   ,
\end{equation}
where $W$ is a some form of aggregator, $\rR$ is a certainty equivalent
operator and $\{C_t\}$ is a given consumption path (see, e.g., \cite{epstein1989}), then is 
utility $\{U_t\}$ well defined?  Is it uniquely defined?  What conditions are required on $W$ and
$\{C_t\}$ such that the answers are affirmative?

After a period of extensive research, these issues have
largely been settled for specifications in common use.\footnote{See,
    for example, \cite{epstein1989}, \cite{rincon2007recursive},
    \cite{klibanoff2009recursive}, \cite{marinacci2010unique},
\cite{hayashi2011intertemporal}, \cite{hansen2012recursive},
\cite{martins2013fixed}, \cite{becker2017rzapa},
\cite{borovioka2017stachurski} or \cite{guo2019he}.} However,
there remains the second significant problem associated with recursive
utilities: how to \emph{maximize} them.  For example, if consumption $\{C_t\}$
in \eqref{eq:egru} is one of many time paths associated with different
household savings and investment policies, then which feasible policy will
maximize $U_t$?  Does such a policy exist and, if so, how can we obtain it?
More generally, how can we bring the theory of dynamic programming to bear on
optimal choice when preferences are defined recursively?  

In this paper, we develop a theory of dynamic programming suitable for
recursive preference models that is broad, simple to state, practical and highly applicable.  We then use this theory to obtain fundamental
optimality results for several common specifications in the literature,
including empirically relevant parameterizations of Epstein--Zin models with
constant elasticity of substitution aggregators, risk-sensitive and robust
control models, models with narrow framing and ambiguity sensitive
preferences.  In each case we show that the value function is the unique
solution to the Bellman equation, that value function iteration converges
uniformly to the value function, that Bellman's principle of optimality is
valid, and that an optimal policy exists.

The novelty of the approach to dynamic programming taken in this paper is that
we use monotone convex operators to study the maximization problems associated
with dynamic programming.  While monotone convex operators have received
little attention in the economics literature to date (unlike, say, contraction
mappings or monotone \emph{concave} operators), they have a significant
advantage in the present context: \emph{convexity is preserved under the taking of
pointwise suprema.}  We exploit this heavily in our results, along with the
fact that, under suitable conditions, monotone convex operators enjoy
attractive stability properties.

We combine these ideas with the abstract dynamic programming framework of
\cite{bertsekas2013abstract}, who obtains strong optimality results in a
general setting under a (weighted) contraction condition.
This latter condition fails for most of the recursive preference models used
routinely in economics and finance.
Here, we maintain the abstract dynamic programming framework
used by \cite{bertsekas2013abstract}, along with a fundamental monotonicity
condition, but replace his contraction condition with a convexity condition.
The latter condition (i) turns out to be satisfied by many recursive preference
models, and (ii) can be used to connect with the theory of monotone convex
operators discussed above, replacing Banach's contraction mapping theory and
delivering the foundational results of dynamic programming (e.g., convergence of
value function iteration, validity of Bellman's principle of optimality, and
existence of an optimal policy).

As well as the theory and applications discussed above, we also provide two
extensions of our results.  First, in addition to
maximization problems, we treat minimization problems in the same abstract
dynamic programming framework.  Second, we show how the framework can be
extended to handle unbounded rewards, providing fundamental dynamic
programming results for an
Epstein--Zin recursive preference model with noncompact state space.  

Our work is inspired by several existing studies, in addition to those
mentioned above.  Two of the most closely related are \cite{bloise2018convex}
and \cite{marinacci2019unique}, both of which analyze dynamic programming
in the setting of recursive utility (in addition to making other contributions
such as general fixed point theory for monotone operators and properties of
Koopmans operators).  In both cases, the authors exploit monotonicity and
concavity properties inherent in some recursive preferences to
obtain conditions under which the Bellman operator has a unique and globally
attracting solution within a given class.

Our treatment of foundational results in dynamic programming draws on and
extends their theoretical analysis in several directions, under a set of
assumptions that is related but not directly comparable.  In particular, in
addition to providing conditions under which the Bellman operator has a unique
and globally attracting solution within a given class, as in
\cite{bloise2018convex} and \cite{marinacci2019unique}, we also show that,
under the same conditions, that solution is the value function and Bellman's
principle of optimality is valid, in the sense that a policy is optimal if and
only if it is obtained by taking the maximizing action at each state on the
right hand side of the Bellman equation.  Furthermore, we show that at least one such
policy exists.

On the practical side, we extend \cite{bloise2018convex} and
\cite{marinacci2019unique} by providing a detailed set of dynamic programming
results for Markov decision processes with (i) Epstein--Zin preferences, (ii)
risk-sensitive preferences, (iii) recursive smooth ambiguity preferences and
(iv) narrow framing.
These are among the most common formulations of recursive preferences in the
applied literature.  


Our treatment of dynamic programming in the setting of Epstein--Zin
preferences, contained in section~\ref{ss:ezp-app} below, adds value to the
original work of \cite{epstein1989}, as well as an extension by
\cite{guo2019he}.  For example, while these dynamic programming results treat
specific consumption-savings and portfolio selection problems with no
endowments, taxes, labor income, or other forms of non-financial income (see
\cite{epstein1989}, theorem~5.1, and \cite{guo2019he}, section~5), we treat a
generic Markov decision problem.  This facilitates adopting Epstein--Zin
preferences in models with features such as optional default, transaction
costs or occasionally binding constraints.  

Our treatment of risk-sensitive preferences complements recent work in
\cite{bauerle2018stochastic}, who consider a stochastic optimal growth problem
in that setting.  We
replace the one-sector optimal growth model with a generic Markov decision
problem, modified to allow for risk-sensitivity with respect to continuation
values.  This result can be applied directly to a large range of dynamic
programs, which is significant partly because, in many cases, the Bellman
equation of a risk-sensitive agent can also be interpreted as the key
functional equation for determining the optimal policy of a robust controller
(see, e.g., \cite{hansen2008robustness}).

Our treatment of dynamic programming in the setting of recursive smooth
ambiguity is based on the preference specification used in
\cite{ju2012ambiguity}. This specification admits a three way separation among
risk aversion, ambiguity aversion and intertemporal substitution, and has
proved valuable in interpreting a variety of asset-pricing phenomena.
In that paper, consumption is given exogenously.
While existence and uniqueness of recursive utility is a solved problem (see,
e.g., \cite{hayashi2011intertemporal}), dynamic programming is not.
Hence, we add to the existing stock of knowledge on this framework an
existence result for optimal policies, a characterization of these policies
via Bellman's principle of optimality, and a global means of computation via
value function iteration.

Our treatment of Epstein--Zin models with unbounded rewards extends previous
work in economics on dynamic programming with additively separable prefernces
as unbounded rewards, as found, for example, in \cite{rincon2003existence}, \newline
\cite{martins2010existence}, \cite{kamihigashi2014elementary} and
\cite{bauerle2018stochastic}.  Our approach is based on weighting functions
and departs from earlier research in that
weighting is combined with a convexity assumption rather than a
weighted supremum norm contraction condition, in order to handle common
recursive preference specifications.

There are several other papers related to the theory of dynamic programming
with recursive preferences that should be mentioned here.  One is
\cite{le2005recursive}, who study optimal growth in the presence of
recursively defined utility, allowing for both bounded and unbounded returns.
They provide detailed optimality results, including existence of optimal
policies and a version of Bellman's principle of optimality, as well as a way
to compute solutions.  On the other hand, as the authors themselves point out,
the weighted contraction conditions they require fail for many standard
preference specifications, including Epstein--Zin utility.

Also related is \cite{balbus2016}, who considers a class of non-negative
recursive utilities with certain types of nonlinear aggregators and certainty
equivalent operators, and studies the corresponding dynamic programming
problem.  His results for existence, uniqueness and convergence of solutions
to the Bellman equation rest upon the theory of monotone $\alpha$-concave
operators.  We omit a detailed discussion because the 
$\alpha$-concave property does not appear to hold for many of the specifications of
recursive utility we consider here, such as Epstein--Zin utility when
the elasticity of intertemporal substitution differs from unity.\footnote{The theory of monotone concave
    operators was previously used in \cite{levan2008morhaim} for additively
    separable problems, and to show existence of Markov equilibria in the
    presence of distortions in \cite{datta2002existence},
\cite{morand2003existence} and several related papers.} 

Similarly, \cite{ozaki1996streufert} provide a comprehensive study of the
recovery of recursive preferences and the corresponding dynamic programming
problem under a non-Markovian environment.  Their results are useful for
studying dynamic programming for non-additive stochastic objectives in a very
general setting, although their assumptions exclude some popular
specifications.  For example, with Epstein--Zin preferences and elasticity of
intertemporal substitution greater than one, the conditions of Theorem~D fail
(since the parameters related to the variable discount factor $\bar{\delta}$
and $\delta$ are infinite).


Our paper also has some connection to the recent work by
\cite{pavoni2018dual}, who introduce a recursive dual approach suitable for
incentive-constrained programming
problems.  The authors construct a dual formulation of the
applications they consider, which is recursive and such that the dual Bellman operator is
contractive under a Thompson-like metric.  Their theory can handle problems where
preferences are specified via a general time aggregator and stochastic
aggregator.  In the problems we consider below, forward looking constraints
are absent and we can directly consider the primal optimization problem.  This
allows us to avoid assumptions used to tie the primal problem to the
dual and obtain contractivity on the dual side.

The remainder of the paper is structured as follows: Section~\ref{s:grs}
contains our main results.  Section~\ref{s:a} has applications.
Section~\ref{s:ubdd} considers an extension (unbounded rewards).  Apart from
some simple arguments, all proofs are deferred to the appendix.

\section{General Results}

\label{s:grs}

Let $\XX$ and $\AA$ be separable metric spaces, called the
\emph{state} and \emph{action space} respectively.
Let $\RR^\XX$ represent all functions from $\XX$ to $\RR$ and
let $\| \cdot \|$ denote the supremum norm on the bounded functions in
$\RR^\XX$.  For $f$ and $g$ in $\RR^\XX$,
the statement $f \leq g$ means $f(x) \leq g(x)$ for all $x \in \XX$.
Let $\Gamma$ be a correspondence from $\XX$ to $\AA$, referred to
below as the \emph{feasible correspondence}.  We understand $\Gamma(x)$ as
representing all actions
available to the controller in state $x$.  The correspondence $\Gamma$ in turn
defines the set of \emph{feasible state-action pairs}
\begin{equation*}
    \GG := \setntn{(x, a) \in \XX \times \AA}{a \in \Gamma(x)}.
\end{equation*}
Let 
\begin{itemize}
    \item $w_1$ and $w_2$ be bounded continuous functions in $\RR^\XX$ satisfying $w_1
        \leq w_2$,
    \item $\vV$ be all Borel measurable functions $v$ in $\RR^\XX$ satisfying $w_1 \leq v \leq w_2$, and
    \item $\cC$ be the continuous functions in $\vV$.  
\end{itemize}
Both $\vV$ and $\cC$ are understood as classes of candidate value functions.  
The functions $w_1$ and $w_2$ serve as lower and upper bounds for
lifetime value respectively. Their role will be clarified below.

Current and future payoffs are subsumed into a \emph{state-action aggregator} $H$, which maps a
feasible state-action pair $(x, a)$ and function $v$ in $\vV$ into a real value $H(x,
a, v)$.  The interpretation of $H(x, a, v)$ is total lifetime rewards, contingent
on current action $a$, current state $x$ and the use of $v$ to evaluate future
states.  In other words, $H(x, a, v)$ corresponds to the right hand side of
the Bellman equation when $v$ represents the value function. 
A simple example is given in Section~\ref{ss:addsep-app}.

The central role of convexity and concavity was discussed in the introduction.
To implement the corresponding restrictions, we call $H$ 
\emph{value-convex} if 
\begin{equation*}
H(x, a, \lambda v + (1-\lambda) w) \leq \lambda H(x, a, v ) + (1-\lambda) H(x, a, w)
\end{equation*}
for each $(x, a) \in \GG$, $\lambda \in [0, 1]$ and $v, w$ in $\vV$.
Similarly, $H$ will be called \emph{value-concave} when the reverse inequality
holds (i.e., when $-H$ is value-convex).  One of these restrictions will be
imposed on each problem we consider.

We also impose some basic properties that will be assumed in every case:

\begin{assumption}
    \label{a:ath}
    The following conditions hold:
    \begin{enumerate}
        \item The feasible correspondence $\Gamma$ is nonempty, compact valued and
            continuous.
        \item The map $(x, a) \mapsto H(x, a, v)$ is Borel measurable on $\GG$
            whenever $v \in \vV$ and continuous on $\GG$ whenever $v \in \cC$.
        \item The state-action aggregator satisfies 
            \begin{equation}
                \label{eq:mon}
                v \leq v' \implies
                H(x, a, v) \leq H(x, a, v')
                \quad \text{for all } (x, a) \in \GG.
            \end{equation}
        \item The functions $w_1$ and $w_2$ satisfy
            \begin{equation}
                \label{eq:uls}
                w_1(x) \leq H(x, a, w_1)
                \quad \text{and} \quad
                H(x, a, w_2) \leq w_2(x)
            \end{equation}
            for all $(x, a)$ in $\GG$.
    \end{enumerate}
\end{assumption}

The primary role of (a) and (b) is to obtain existence of
solutions.  If the state and action space are discrete (finite or countably
infinite) then we adopt the discrete topology, in which case the continuity
requirements in (a) and (b) are satisfied automatically, while the compactness
requirement on $\Gamma$ is satisfied if $\Gamma(x)$ is finite for each $x$.  

Condition (c) imposes the natural requirement that higher continuation values
increase lifetime values, while condition (d) is a consistency requirement
that allows $w_1$ and $w_2$ to act as lower and upper bounds for lifetime
value.  The conditions in assumption~\ref{a:ath} are held to be true
throughout the remainder of the paper. 

Let $\Sigma$ be a family of maps from $\XX$ to $\AA$, referred to below as the
set of all \emph{feasible policies}, such that each $\sigma \in \Sigma$ is
Borel measurable and satisfies $\sigma(x) \in \Gamma(x)$ for all $x \in \XX$.

\begin{lemma}
    \label{l:cst}
    The map $w(x) := H(x, \sigma(x), v)$ is an element of $\vV$ for all $v \in \vV$.   
\end{lemma}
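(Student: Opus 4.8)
The plan is to verify directly the two defining requirements for membership in $\vV$: that $w$ is Borel measurable, and that $w_1 \leq w \leq w_2$ pointwise on $\XX$. Both follow quickly from Assumption~\ref{a:ath}, so I expect no genuine obstacle; the only point needing a little care is the measurability step.

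First I would dispose of well-definedness and measurability. Since $\sigma \in \Sigma$ satisfies $\sigma(x) \in \Gamma(x)$ for every $x$, the map $\phi \colon \XX \to \XX \times \AA$ given by $\phi(x) := (x, \sigma(x))$ takes values in $\GG$, so $H(x, \sigma(x), v)$ is a well-defined real number for each $x$ (recall $v \in \vV$). The map $\phi$ is Borel measurable because each of its coordinate maps --- the identity on $\XX$ and the policy $\sigma$ --- is Borel measurable, and a map into a product of metric spaces is measurable precisely when its coordinate maps are. Viewing $\phi$ as a map into $\GG$ equipped with its trace Borel $\sigma$-algebra, it remains Borel measurable, and Assumption~\ref{a:ath}(b) tells us $(x,a) \mapsto H(x, a, v)$ is Borel measurable on $\GG$ since $v \in \vV$. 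Hence $w = H(\cdot, \cdot, v) \circ \phi$ is Borel measurable, being a composition of Borel measurable maps.

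Next I would establish the bounds. Fix $x \in \XX$ and set $a := \sigma(x)$, so that $(x, a) \in \GG$. Since $v \in \vV$ we have $w_1 \leq v \leq w_2$, so the monotonicity condition~\eqref{eq:mon} gives
\begin{equation*}
    H(x, a, w_1) \leq H(x, a, v) \leq H(x, a, w_2).
\end{equation*}
Combining this with the consistency inequalities~\eqref{eq:uls}, namely $w_1(x) \leq H(x, a, w_1)$ and $H(x, a, w_2) \leq w_2(x)$, and recalling $w(x) = H(x, a, v)$, we obtain $w_1(x) \leq w(x) \leq w_2(x)$. As $x \in \XX$ was arbitrary, $w_1 \leq w \leq w_2$, and together with measurability this shows $w \in \vV$. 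The entire argument is routine; if anything deserves emphasis it is only the observation that $\phi$ is measurable \emph{into} $\GG$ with its trace $\sigma$-algebra, which legitimizes composing it with $H(\cdot, \cdot, v)$.
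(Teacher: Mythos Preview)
Your proof is correct and follows essentially the same route as the paper's: measurability of $w$ via composition of the measurable map $x \mapsto (x,\sigma(x))$ with the measurable aggregator $H(\cdot,\cdot,v)$, and the pointwise bounds via \eqref{eq:mon} and \eqref{eq:uls}. The paper's version is terser, but your added care about the trace $\sigma$-algebra on $\GG$ is a legitimate refinement of the same argument.
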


\begin{proof}
    Borel measurability of $(x, a) \mapsto H(x, a, v)$ and $\sigma$ imply that 
    $w$ is Borel measurable on $\XX$.  Moreover, since
    $w_1 \leq v$, the inequalities in \eqref{eq:mon} and \eqref{eq:uls} imply
    $w_1(x) \leq H(x, \sigma(x), w_1) \leq H(x, \sigma(x), v)$ for all $x$.  In particular,
    $w_1 \leq w$.  A similar argument gives $w \leq w_2$, so $w \in \vV$.
\end{proof}

Given $\sigma \in \Sigma$, a function $v_\sigma \in \vV$ that satisfies
\begin{equation}
    \label{eq:vsig}
    v_\sigma(x) = H(x, \sigma(x), v_\sigma) 
    \quad \text{for all } x \in \XX
\end{equation}
is called a \emph{$\sigma$-value function}.  The value $v_\sigma(x)$ can be
interpreted as the lifetime value of following policy $\sigma$.
Its existence and uniqueness are discussed below.

\subsection{Maximization}

\label{ss:max}

We begin by studying maximization of value.  Our key assumption is that the state-action aggregator
satisfies value-convexity and possesses a strict upper solution:

\begin{assumption}
    [Convex Program]
    \label{a:cvx}
    The following conditions are satisfied:
    \begin{enumerate}
        \item $H$ is value-convex.
        \item There exists an $\epsilon > 0$ such that $H(x, a, w_2) \leq w_2(x) - \epsilon$ for all $(x, a) \in \GG$.
    \end{enumerate}
\end{assumption}

Note that part (b) is a strengthening of one of the conditions in
\eqref{eq:uls}.

\begin{proposition}
    \label{p:regp}
    If assumption~\ref{a:cvx} holds, then, for each $\sigma$ in $\Sigma$,
    the set $\vV$ contains exactly one $\sigma$-value function $v_\sigma$.
\end{proposition}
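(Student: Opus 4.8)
The plan is to fix $\sigma \in \Sigma$, introduce the \emph{policy operator} $T_\sigma$ on $\vV$ defined by $(T_\sigma v)(x) := H(x, \sigma(x), v)$, and prove it has exactly one fixed point in $\vV$. Lemma~\ref{l:cst} already gives $T_\sigma \colon \vV \to \vV$; the monotonicity condition \eqref{eq:mon} makes $T_\sigma$ order preserving, and value-convexity of $H$ (setting $a = \sigma(x)$) makes $v \mapsto T_\sigma v$ convex on $\vV$. I would then run value iteration from both ends of the order interval: put $\overline v_0 := w_2$, $\underline v_0 := w_1$, and $\overline v_{n+1} := T_\sigma \overline v_n$, $\underline v_{n+1} := T_\sigma \underline v_n$. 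From Assumption~\ref{a:ath}(d), $\underline v_n$ is nondecreasing; from Assumption~\ref{a:cvx}(b) (applied once at $w_2$) together with monotonicity, $\overline v_n$ is nonincreasing; and both stay in $\vV$ by Lemma~\ref{l:cst}. Hence each sequence has a pointwise limit in $\vV$, say $\underline v \le \overline v$, and the whole proposition reduces to showing $\underline v = \overline v$: the common value $v_\sigma$ is then a fixed point, since $\underline v_n \le v_\sigma \le \overline v_n$ gives $\underline v_{n+1} \le T_\sigma v_\sigma \le \overline v_{n+1}$ and letting $n \to \infty$ squeezes $T_\sigma v_\sigma = v_\sigma$; and any $v \in \vV$ with $T_\sigma v = v$ satisfies $\underline v_n = T_\sigma^n w_1 \le T_\sigma^n v = v \le T_\sigma^n w_2 = \overline v_n$, forcing $v = v_\sigma$.

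The crux — and the step I expect to be the main obstacle — is a geometric decay estimate for the gap $\overline v_n - \underline v_n$, since a plain supremum-norm contraction is \emph{not} available for $T_\sigma$ in general. The trick is to exploit Assumption~\ref{a:cvx}(b) twice. First, chaining Assumption~\ref{a:ath}(d), monotonicity, and Assumption~\ref{a:cvx}(b) gives $w_2 - w_1 \geq \epsilon \1$ (so all ratios below are unambiguous), and iterating once shows the iterates stay uniformly away from the ceiling: $\underline v_n \le \overline v_n \le w_2 - \epsilon \1$ for every $n \geq 1$. Second, introduce the Thompson-type quantity
\[
\beta_n := \sup_{x \in \XX} \frac{\overline v_n(x) - \underline v_n(x)}{w_2(x) - \underline v_n(x)} \in [0,1], \qquad \beta_0 = 1 ,
\]
so that $\overline v_n \le (1-\beta_n)\underline v_n + \beta_n w_2$, which lies in $\vV$. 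Applying $T_\sigma$, then monotonicity, then value-convexity, then Assumption~\ref{a:cvx}(b) at $w_2$, yields
\[
\overline v_{n+1} - \underline v_{n+1} \;\le\; \beta_n\big[(w_2 - \underline v_{n+1}) - \epsilon \1\big] ,
\]
and dividing pointwise by $w_2 - \underline v_{n+1}$ (which is $\geq \epsilon\1$) and using $w_2 - \underline v_{n+1} \le \|w_2 - w_1\|\,\1$ gives $\beta_{n+1} \le \rho\,\beta_n$ with $\rho := 1 - \epsilon/\|w_2 - w_1\| \in [0,1)$. Hence $\beta_n \le \rho^n$, and $\overline v_n - \underline v_n \le \beta_n(w_2 - \underline v_n) \le \rho^n (w_2 - w_1)$, so $\|\overline v_n - \underline v_n\| \le \rho^n \|w_2 - w_1\| \to 0$.

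With the decay estimate in hand, $\overline v = \underline v =: v_\sigma$, and in fact $\overline v_n \downarrow v_\sigma$ and $\underline v_n \uparrow v_\sigma$ uniformly; combined with the fixed-point and uniqueness arguments from the first paragraph, this closes the proof (and, as a byproduct, gives uniform geometric convergence of $\sigma$-value iteration, which will be useful later). The delicate points to check carefully are purely the convexity/monotonicity bookkeeping in the displayed inequality: that the convex combination $(1-\beta_n)\underline v_n + \beta_n w_2$ really belongs to $\vV$ so that $H(x,\sigma(x),\cdot)$ may be evaluated there, that the strict upper solution is invoked at $w_2$ (not at a generic element of $\vV$), and that $\beta_n \le 1$ is maintained along the induction so the convex combination is legitimate at every stage.
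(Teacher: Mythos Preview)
Your proof is correct. The paper takes a much shorter route: it verifies that $T_\sigma$ is isotone and convex on $\vV$, that $T_\sigma w_1 \geq w_1$ and $T_\sigma w_2 \ll w_2$, and then invokes Theorem~3.1 of \cite{du1990} on monotone convex operators as a black box to conclude asymptotic stability of $T_\sigma$ on $\vV$ (Lemma~\ref{l:sts}), from which the proposition follows immediately.

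What you have done is essentially supply a direct, self-contained proof of Du's result in this order interval, via the Thompson-type ratio $\beta_n$ and the explicit contraction factor $\rho = 1 - \epsilon/\|w_2 - w_1\|$. The mathematical content is the same --- your ``convexity plus strict upper solution'' estimate is precisely the engine that drives Du's theorem --- but your version is fully self-contained and exhibits the geometric rate explicitly (matching the remark after Theorem~\ref{t:bkvx}), while the paper's version is a few lines long and can be reused verbatim for the Bellman operator $T$ in Lemma~\ref{l:st}.
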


Proposition~\ref{p:regp} assures us that the value $v_\sigma$ of a given
policy $\sigma$ is well defined.  From this foundation we can introduce
optimality concerning a maximization decision problem.  
In particular, in the present setting, a policy $\sigma^* \in \Sigma$ is
called \emph{optimal} if 
\begin{equation*}
    v_{\sigma^*}(x) \geq v_\sigma(x)
    \quad \text{for all } \sigma \in \Sigma
    \text{ and all } x \in \XX.
\end{equation*}
The \emph{maximum value function} associated with this planning problem is the
map $v^*$ defined at $x \in \XX$ by
\begin{equation}
    \label{eq:vstar}
    v^*(x) = \sup_{\sigma \in \Sigma} v_\sigma(x).
\end{equation}
One can show from conditions (c) and (d) of assumption~\ref{a:ath} that $v^*$
is well defined as a real valued function on $\XX$ and satisfies $w_1 \leq v^*
\leq w_2$.

A function $v \in \vV$ is said to satisfy the \emph{Bellman equation} if
\begin{equation}
    \label{eq:belleq}
    v(x) = \max_{a \in \Gamma(x)} H(x, a, v)
    \; \text{ for all } x \in \XX.
\end{equation}
The \emph{Bellman operator} $T$ associated with our abstract dynamic program
is a map sending $v$ in $\cC$ into
\begin{equation}
    \label{eq:bellop} 
    T v(x) = \max_{a \in \Gamma(x)} H(x, a, v).
\end{equation}
Since $v$ is in $\cC$, existence of the maximum is guaranteed by assumption~\ref{a:ath}.  It
follows from Berge's theorem of the maximum that $Tv$ is an element of $\cC$.  
Evidently solutions to the Bellman equation in $\cC$ exactly coincide with
fixed points of $T$.

The convex program conditions lead to the following central result:

\begin{theorem}
    \label{t:bkvx}
    If assumption~\ref{a:cvx} holds, then 
    \begin{enumerate}
        \item The Bellman equation has exactly one solution in $\cC$ and that
            solution is $v^*$.
        \item If $v$ is in $\cC$, then $T^n v \to v^*$ uniformly on $\XX$ as $n \to \infty$.
        \item A policy $\sigma$ in $\Sigma$ is optimal if and only if 
            \begin{equation*}
                \sigma(x) \in \argmax_{a \in \Gamma(x)} H(x, a, v^*)
                \; \text{ for all } x \in \XX.
            \end{equation*}
        \item At least one optimal policy exists.
    \end{enumerate}
\end{theorem}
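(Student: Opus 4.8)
The plan is to establish the four claims in sequence, using the theory of monotone convex operators together with the convex program conditions, exactly as sketched in the introduction. The first order of business is to verify that the Bellman operator $T$ is a monotone convex self-map on $\cC$ that admits a suitable strict upper fixed point bound. Monotonicity of $T$ is immediate from \eqref{eq:mon}: if $v \leq v'$ then $H(x,a,v) \leq H(x,a,v')$ pointwise, and taking maxima over $a \in \Gamma(x)$ preserves the inequality. Convexity of $T$ is where value-convexity (assumption~\ref{a:cvx}(a)) does its work: for $\lambda \in [0,1]$ and $v, w \in \cC$,
\begin{equation*}
    T(\lambda v + (1-\lambda)w)(x)
    = \max_{a \in \Gamma(x)} H(x, a, \lambda v + (1-\lambda)w)
    \leq \max_{a \in \Gamma(x)} \left[ \lambda H(x,a,v) + (1-\lambda) H(x,a,w) \right],
\end{equation*}
and the right-hand side is bounded by $\lambda T v(x) + (1-\lambda) T w(x)$ since the max of a sum is at most the sum of the maxes. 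That $T$ maps $\cC$ into $\cC$ follows from Berge's theorem of the maximum, as already noted in the excerpt, and $w_1 \leq Tv \leq w_2$ follows from \eqref{eq:uls} and monotonicity. Crucially, assumption~\ref{a:cvx}(b) gives $T w_2 \leq w_2 - \epsilon$, i.e. $w_2$ is a strict upper solution for $T$.

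With these structural facts in hand, the heart of the argument is to invoke (or reprove) the fixed-point/stability theorem for monotone convex operators with a strict upper solution on an order interval: such an operator has a unique fixed point $\bar v$ in $\cC$, and $T^n v \to \bar v$ uniformly for every $v \in \cC$. This delivers existence and uniqueness of a solution to the Bellman equation in $\cC$ (call it $\bar v$) and claim~(b), modulo identifying $\bar v$ with $v^*$. The identification $\bar v = v^*$ is the linchpin connecting the Bellman-equation solution to the planning problem, and I would argue it as follows. For any $\sigma \in \Sigma$, the $\sigma$-value operator $T_\sigma v(x) := H(x, \sigma(x), v)$ is likewise monotone, value-convex, and inherits the strict upper solution $w_2$, so by Proposition~\ref{p:regp} it has the unique fixed point $v_\sigma$, and moreover $T_\sigma^n v \to v_\sigma$ for all $v \in \vV$ (this is essentially the content of the same monotone-convex stability result applied to $T_\sigma$). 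Since $T v \geq T_\sigma v$ pointwise for every $v$ and every $\sigma$, iterating from any common $v \in \cC$ and passing to the limit gives $\bar v \geq v_\sigma$ for all $\sigma$, hence $\bar v \geq v^*$. For the reverse inequality, one uses Berge's theorem to extract a measurable selector $\sigma^*$ with $\sigma^*(x) \in \argmax_{a \in \Gamma(x)} H(x, a, \bar v)$; then $T_{\sigma^*} \bar v = T \bar v = \bar v$, so by uniqueness $v_{\sigma^*} = \bar v$, whence $v^* \geq v_{\sigma^*} = \bar v$. Combining the two inequalities yields $\bar v = v^* = v_{\sigma^*}$, which simultaneously proves claim~(a), furnishes an optimal policy (claim~(d)), and shows the "if" direction of claim~(c).

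It then remains to close the "only if" direction of claim~(c): if $\sigma$ is optimal, then $\sigma(x) \in \argmax_{a \in \Gamma(x)} H(x, a, v^*)$ for all $x$. Here optimality gives $v_\sigma = v^* = \bar v$, so $H(x, \sigma(x), v^*) = H(x, \sigma(x), v_\sigma) = v_\sigma(x) = v^*(x) = \max_{a \in \Gamma(x)} H(x, a, v^*)$ for every $x$, which is exactly the $\argmax$ condition. The main obstacle, and the place where the real content lies, is the monotone-convex fixed-point/stability theorem underpinning Proposition~\ref{p:regp} and step~(b): one must show that value-convexity plus the $\epsilon$-strict upper solution forces a contraction-like collapse of the orbit — intuitively, convexity lets one interpolate between an iterate and $w_2$, and the $\epsilon$ slack at $w_2$ converts this into a genuine geometric contraction of the gap $\sup_x (w_2(x) - T^n v(x))^{-}$ or a similar Thompson-type quantity. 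I expect the bulk of the appendix proof to be devoted to making that mechanism precise; everything downstream (the Berge selection, the sandwiching $T_\sigma \leq T$, and the optimality characterization) is routine once that engine is in place.
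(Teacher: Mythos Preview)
Your proposal is correct and follows essentially the same route as the paper: both verify that $T$ (on $\cC$) and each $T_\sigma$ (on $\vV$) are isotone, convex, and admit $w_2$ as a strict upper solution, invoke Du's (1990) fixed-point theorem for monotone convex operators to obtain asymptotic stability, and then identify the unique fixed point $\bar v$ of $T$ with $v^*$ via the sandwich argument and measurable selection you describe. The only cosmetic difference is that for the inequality $v_\sigma \leq \bar v$ the paper iterates $T_\sigma$ starting from $\bar v$ itself (using $T_\sigma \bar v \leq T\bar v = \bar v$ and monotonicity) rather than comparing the two orbits $T^n v$ and $T_\sigma^n v$ from a common seed, and the paper simply cites Du's theorem rather than reproving the contraction mechanism you sketch at the end.
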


The fixed point and convergence results for $T$ in theorem~\ref{t:bkvx} rely
on a fixed point theorem for monotone convex operators due to
\cite{du1990}.  There, convergence is shown to be uniformly geometric: 
there exist constants $\lambda \in (0, 1)$ and $K \in \RR$ such
that
\begin{equation*}
    \| T^n v - v^* \| \leq \lambda^n K
    \text{ for all } n \in \NN \text{ and } v \in \cC.
\end{equation*}

\subsection{Minimization}

\label{ss:min}

Next we treat minimization.  In this setting, the convexity and strict upper
solution in assumption~\ref{a:cvx} are replaced by concavity and a strict
lower solution.  
In order to maintain consistency with other sources, we admit some overloading
of terminology relative to section~\ref{ss:max} on maximization. For example,
the optimal policy will now reference a minimizing policy rather than a maximizing
one, and the Bellman equation will shift from maximization to minimization.
The relevant definition will be clear from context.

The next assumption is analogous to assumption~\ref{a:cvx}, which was used for
maximization.

\begin{assumption}
    [Concave Program]
    \label{a:ccv}
    The following conditions are satisfied:
    \begin{enumerate}
        \item $H$ is value-concave.
        \item There exists an $\epsilon > 0$ such that $H(x, a, w_1) \geq w_1(x) + \epsilon$ for all $(x, a) \in \GG$.
    \end{enumerate}
\end{assumption}

Note that part (b) is a strengthening of one of the conditions in
\eqref{eq:uls}.

\begin{proposition}
    \label{p:regp-ccv}
    If assumption~\ref{a:ccv} holds, then, for each $\sigma$ in $\Sigma$,
    the set $\vV$ contains exactly one $\sigma$-value function $v_\sigma$.
\end{proposition}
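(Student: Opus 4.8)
The plan is to deduce Proposition~\ref{p:regp-ccv} from its already-established convex counterpart, Proposition~\ref{p:regp}, by passing to the problem obtained through the order-reversing reflection $v \mapsto -v$. Concretely, I would set $\tilde w_1 := -w_2$ and $\tilde w_2 := -w_1$ (bounded continuous functions with $\tilde w_1 \leq \tilde w_2$), write $\tilde\vV$ for the Borel functions $v$ on $\XX$ with $\tilde w_1 \leq v \leq \tilde w_2$ (so that $v \in \tilde\vV \iff -v \in \vV$), and introduce the reflected state-action aggregator
\begin{equation*}
    \tilde H(x, a, v) := -\,H(x, a, -v),
    \qquad (x,a) \in \GG,\ v \in \tilde\vV .
\end{equation*}
The feasible correspondence $\Gamma$, the set $\GG$, and hence the policy set $\Sigma$, are unchanged. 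Since $v \mapsto -v$ is an order-reversing linear isometry of $\RR^\XX$ that maps $\vV$ bijectively onto $\tilde\vV$ and preserves continuity, it is enough to check that the reflected data $(\tilde H, \tilde w_1, \tilde w_2)$ satisfy Assumptions~\ref{a:ath} and~\ref{a:cvx}; after that, Proposition~\ref{p:regp} applies to the reflected problem and the conclusion transfers back.

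The verification is a routine sign-tracking exercise. Assumption~\ref{a:ath}(a) is untouched; (b) follows by composing the measurability/continuity properties of $H$ with the continuous map $v \mapsto -v$; and (c) follows since $v \leq v'$ forces $-v' \leq -v$, hence $H(x,a,-v') \leq H(x,a,-v)$ by \eqref{eq:mon}, hence $\tilde H(x,a,v) \leq \tilde H(x,a,v')$. Value-concavity of $H$ (Assumption~\ref{a:ccv}(a)) translates directly into value-convexity of $\tilde H$:
\begin{align*}
    \tilde H(x,a,\lambda v + (1-\lambda) w)
    &= -H\big(x,a,\lambda(-v) + (1-\lambda)(-w)\big) \\
    &\leq -\lambda H(x,a,-v) - (1-\lambda) H(x,a,-w) \\
    &= \lambda \tilde H(x,a,v) + (1-\lambda) \tilde H(x,a,w) .
\end{align*}
For the bounds, $H(x,a,w_2) \leq w_2(x)$ gives $\tilde H(x,a,\tilde w_1) = -H(x,a,w_2) \geq \tilde w_1(x)$, while the strict \emph{lower}-solution condition~\ref{a:ccv}(b), $H(x,a,w_1) \geq w_1(x) + \epsilon$, gives $\tilde H(x,a,\tilde w_2) = -H(x,a,w_1) \leq \tilde w_2(x) - \epsilon$ for all $(x,a) \in \GG$; together these are exactly \eqref{eq:uls} and the strict \emph{upper}-solution condition~\ref{a:cvx}(b) for the reflected data.

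With Assumptions~\ref{a:ath} and~\ref{a:cvx} verified for $(\tilde H, \tilde w_1, \tilde w_2)$, Proposition~\ref{p:regp} would provide, for each $\sigma \in \Sigma$, a unique $\tilde v_\sigma \in \tilde\vV$ solving $\tilde v_\sigma(x) = \tilde H(x,\sigma(x),\tilde v_\sigma)$. Setting $v_\sigma := -\tilde v_\sigma$, one checks $v_\sigma \in \vV$ and $v_\sigma(x) = -\tilde H(x,\sigma(x),-v_\sigma) = H(x,\sigma(x),v_\sigma)$, so $v_\sigma$ satisfies \eqref{eq:vsig} and is a $\sigma$-value function; conversely, if $v \in \vV$ is any $\sigma$-value function, then $-v \in \tilde\vV$ solves the reflected fixed-point equation, so $-v = \tilde v_\sigma$ and hence $v = v_\sigma$, giving uniqueness. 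I do not expect any real obstacle here — the whole content is keeping the inequalities pointing the right way through the reflection and noticing that condition~\ref{a:ccv}(b) is precisely the preimage of condition~\ref{a:cvx}(b). (As an alternative to the reflection one could bypass Proposition~\ref{p:regp} and invoke the fixed-point theorem of \cite{du1990} for monotone \emph{concave} operators with a strict lower solution, applied to the policy operator $v \mapsto H(\cdot,\sigma(\cdot),v)$ on $\vV$, which maps $\vV$ into itself by Lemma~\ref{l:cst}; I would prefer the reflection route since it reuses Proposition~\ref{p:regp} as a black box.)
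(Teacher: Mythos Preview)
Your proposal is correct and takes essentially the same approach as the paper: the paper likewise introduces the conjugate aggregator $\check H(x,a,\check v) := -H(x,a,-\check v)$ with bracketing functions $\check w_1 := -w_2$, $\check w_2 := -w_1$, verifies (in its Lemma~\ref{l:checkH}) that Assumptions~\ref{a:ath} and~\ref{a:cvx} hold for $\check H$, and then reads off all concave-case results, including Proposition~\ref{p:regp-ccv}, from their convex-case counterparts via the bijection $v \leftrightarrow -v$ between $\vV$ and $\check\vV$.
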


Proposition~\ref{p:regp-ccv} mimics proposition~\ref{p:regp}, assuring us
that, in the present context, the cost $v_\sigma$ of a given policy $\sigma$ is well defined.  
A policy $\sigma^* \in \Sigma$ is then called \emph{optimal} if 
\begin{equation*}
    v_{\sigma^*}(x) \leq v_\sigma(x)
    \quad \text{for all } \sigma \in \Sigma
    \text{ and all } x \in \XX.
\end{equation*}
The \emph{minimum cost function} associated with this
planning problem is the function $v^*$
defined at $x \in \XX$ by
\begin{equation}
    \label{eq:vstar-min}
    v^*(x) = \inf_{\sigma \in \Sigma} v_\sigma(x).
\end{equation}

A function $v \in \vV$ is said to satisfy the \emph{Bellman equation} if
\begin{equation}
    \label{eq:belleq-min}
    v(x) = \min_{a \in \Gamma(x)} H(x, a, v)
    \qquad \text{for all } x \in \XX. 
\end{equation}
The \emph{Bellman operator} $S$ associated with our abstract dynamic program
is a map sending $v$ in $\cC$ into
\begin{equation}
    \label{eq:bellop-min} 
    S v(x) = \min_{a \in \Gamma(x)} H(x, a, v) .
\end{equation}
Analogous to theorem~\ref{t:bkvx}, we have

\begin{theorem}
    \label{t:bkcv}
    If assumption~\ref{a:ccv} holds, then 
    \begin{enumerate}
        \item The Bellman equation \eqref{eq:belleq-min} has exactly one
            solution in $\cC$ and that solution is the minimum cost function $v^*$.
        \item If $v$ is in $\cC$, then $S^n v \to v^*$ uniformly on $\XX$ as $n \to \infty$.
        \item A policy $\sigma$ in $\Sigma$ is optimal if and only if 
            \begin{equation*}
                \sigma(x) \in \argmin_{a \in \Gamma(x)} H(x, a, v^*)
                \; \text{ for all } x \in \XX.
            \end{equation*}
        \item At least one optimal policy exists.
    \end{enumerate}
\end{theorem}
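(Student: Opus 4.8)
The plan is to reduce Theorem~\ref{t:bkcv} to Theorem~\ref{t:bkvx} by the substitution $\tilde v := -v$, which converts minimization into maximization and concavity into convexity. Concretely, set $\tilde w_1 := -w_2$, $\tilde w_2 := -w_1$, and define the transformed aggregator $\tilde H(x,a,\tilde v) := -H(x,a,-\tilde v)$. First I would check that this substitution carries Assumption~\ref{a:ath} into itself: measurability/continuity of $(x,a)\mapsto \tilde H(x,a,\tilde v)$ is immediate since composition with the continuous map $v\mapsto -v$ preserves both properties; the monotonicity condition \eqref{eq:mon} is preserved because if $\tilde v\le \tilde v'$ then $-\tilde v'\le -\tilde v$, so $H(x,a,-\tilde v')\le H(x,a,-\tilde v)$ and hence $\tilde H(x,a,\tilde v)=-H(x,a,-\tilde v)\le -H(x,a,-\tilde v')=\tilde H(x,a,\tilde v')$; and the bounding inequalities \eqref{eq:uls} for $\tilde w_1,\tilde w_2$ follow by negating those in \eqref{eq:uls}. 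Likewise, Assumption~\ref{a:ccv} for $H$ translates into Assumption~\ref{a:cvx} for $\tilde H$: value-concavity of $H$ is exactly value-convexity of $\tilde H$ (apply the definitions to $-\tilde v$, $-\tilde w$), and part~(b) of Assumption~\ref{a:ccv}, namely $H(x,a,w_1)\ge w_1(x)+\epsilon$, becomes $\tilde H(x,a,\tilde w_2)=-H(x,a,w_1)\le -w_1(x)-\epsilon=\tilde w_2(x)-\epsilon$, which is part~(b) of Assumption~\ref{a:cvx}.

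Next I would record the dictionary between the two problems. A function $v\in\vV$ solves \eqref{eq:vsig} for $H$ and $\sigma$ if and only if $\tilde v=-v\in\tilde\vV$ solves the corresponding fixed-point equation for $\tilde H$ and $\sigma$; hence Proposition~\ref{p:regp} applied to $\tilde H$ yields Proposition~\ref{p:regp-ccv}, and $\tilde v_\sigma=-v_\sigma$. Since negation reverses inequalities, $\sigma^*$ is optimal in the minimization sense (i.e. $v_{\sigma^*}\le v_\sigma$ for all $\sigma$) exactly when it is optimal in the maximization sense for $\tilde H$, and $\inf_\sigma v_\sigma(x)=-\sup_\sigma \tilde v_\sigma(x)$, so the minimum cost function $v^*$ of \eqref{eq:vstar-min} equals $-\tilde v^*$, where $\tilde v^*$ is the maximum value function of \eqref{eq:vstar} for $\tilde H$. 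The Bellman operators are conjugate: $Sv(x)=\min_{a\in\Gamma(x)}H(x,a,v)=-\max_{a\in\Gamma(x)}\bigl(-H(x,a,v)\bigr)=-\tilde T(-v)(x)$, where $\tilde T$ is the Bellman operator of \eqref{eq:bellop} built from $\tilde H$; consequently $S^n v=-\tilde T^n(-v)$ for every $n$ and every $v\in\cC$ (noting $v\mapsto -v$ maps $\cC$ onto $\tilde\cC$).

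Finally I would invoke Theorem~\ref{t:bkvx} for $\tilde H$ and read the conclusions back through this dictionary. Part~(a): $\tilde v^*$ is the unique solution in $\tilde\cC$ of the Bellman equation for $\tilde H$, and $v$ solves \eqref{eq:belleq-min} iff $-v$ solves \eqref{eq:belleq} for $\tilde H$; hence $v^*=-\tilde v^*$ is the unique solution in $\cC$ of \eqref{eq:belleq-min} and equals the minimum cost function. Part~(b): $\tilde T^n(-v)\to\tilde v^*$ uniformly, so $S^n v=-\tilde T^n(-v)\to -\tilde v^*=v^*$ uniformly. Part~(c): $\sigma$ is optimal iff $\sigma(x)\in\argmax_{a\in\Gamma(x)}\tilde H(x,a,\tilde v^*)$ for all $x$; since $\argmax_a\bigl(-H(x,a,v^*)\bigr)=\argmin_a H(x,a,v^*)$, this is exactly $\sigma(x)\in\argmin_{a\in\Gamma(x)}H(x,a,v^*)$. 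Part~(d) is then immediate from existence of an optimal policy for $\tilde H$. The only real content beyond routine sign-chasing is verifying that the substitution genuinely preserves every hypothesis — in particular the compact-valuedness/continuity of $\Gamma$ (which is untouched) and the interplay of the strict-solution condition with the swap of $w_1$ and $w_2$ — so I would expect the main obstacle, if any, to be making sure no asymmetry in the setup (e.g. in how $\cC$ or the bounds are defined) blocks the reduction; I anticipate none does.
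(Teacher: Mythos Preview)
Your proposal is correct and follows essentially the same approach as the paper: the paper defines $\check w_1:=-w_2$, $\check w_2:=-w_1$ and the conjugate aggregator $\check H(x,a,\check v):=-H(x,a,-\check v)$, verifies in a lemma that Assumptions~\ref{a:ath} and~\ref{a:ccv} for $H$ translate into Assumptions~\ref{a:ath} and~\ref{a:cvx} for $\check H$, and then applies Theorem~\ref{t:bkvx} to $\check H$ and reads the conclusions back. Your dictionary between $\sigma$-value functions, value/cost functions, Bellman operators, and greedy policies matches the paper's, and your verification of the hypotheses is, if anything, slightly more detailed than the paper's.
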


\section{Applications}

\label{s:a}

In this section we study a collection of applications, showing how the general
results in section~\ref{s:grs} can be used to solve the dynamic programming
problems discussed in the introduction.

\subsection{An Additively Separable Decision Process}

\label{ss:addsep-app}

It is worth
noting that the results stated above can be applied in the standard additive separable
case, alongside the traditional Bellman--Blackwell contraction mapping
approach to dynamic programming.  To see this, consider the generic 
dynamic programming model of \cite{stokey1989}, with Bellman equation 
\begin{equation}
    \label{eq:belleq-app-adse}
    v (s,z) 
    = \max_{y \in \Gamma(s, z)} \left\{ F(s,y,z) 
    + \beta \int v (y,z') P(z, \diff z') \right\}
\end{equation}
over $(s,z) \in \SS \times \ZZ$.
Here $\SS$ and $\ZZ$ are compact metric spaces containing possible values for the 
endogenous and exogenous state variables, respectively. 
Let the transition function $P$ on $\ZZ$ have the Feller property, 
let the feasible correspondence $\Gamma \colon \SS \times \ZZ \to \SS$ be 
compact valued and continuous, let $F \colon \GG \to \RR$ be continuous,
and let $\beta$ lie in $(0,1)$.

We translate this model to our environment by taking $x := (s, z)$ to be
the state, $\XX := \SS \times \ZZ$ to be the state space, $a = y \in \SS$ to be the
action, and setting 
\begin{equation*}
    H((s, z), y, v) = F(s,y,z) + \beta \int v (y, z') P(z, \diff z'). 
\end{equation*}
Since $F$ is continuous on a compact set, there exists a finite constant $M$
with $|F| \leq M$.\footnote{The domain $\GG$ of $F$ is compact in the product
topology by Tychonoff's theorem.}  For the bracketing functions $w_1$ and
$w_2$ we fix $\epsilon > 0$ and adopt the constant functions 
\begin{equation*}
    w_1 \equiv - \frac{M}{1-\beta}
    \quad \text{and} \quad
    w_2 \equiv \frac{M + \epsilon}{1-\beta}.   
\end{equation*}
The conditions of assumption~\ref{a:ath} are all satisfied.  Conditions (a)
and (b) are true by assumption and condition (c) is trivial to verify. 
To see that condition (d) of assumption~\ref{a:ath} holds, 
we note that $w_1$ and $w_2$ lie in $bc\XX$. 
In addition, for any given $((s, z), y) \in \GG$, we have
\begin{equation*}
    H((s,z), y,w_1) 
    = F(s,y,z) - \beta \frac{M}{1-\beta} 
    \geq - M - \beta \frac{M}{1-\beta} 
    = w_1(s, z).
\end{equation*}
Similarly,
\begin{equation*}
    H((s, z), y, w_2)
    = F(s,y,z) + \beta \frac{M+\epsilon}{1-\beta}
    \leq M + \beta \frac{M+\epsilon}{1-\beta} 
    = w_2(s, z) -\epsilon.
\end{equation*}
The last inequality gives not only $H((s,z), y, w_2) \leq w_2(s, z)$, as required for
part (d) of assumption, but also the stronger condition in 
part (b) of assumption~\ref{a:cvx}.  Thus, to verify the requirements of
theorem~\ref{t:bkvx}, we need only check the convexity condition in part (a)
of assumption~\ref{a:cvx}.  But this is immediate from the linearity of
expectations.  Hence theorem~\ref{t:bkvx} applies.

\subsection{Epstein-Zin Preferences}

\label{ss:ezp-app}

\cite{epstein1989} propose a
specification of lifetime value that separates and independently parameterizes
intertemporal elasticity of substitution and risk aversion.  
Value is defined recursively by the CES aggregator
\begin{equation*}
    U_t
    = \left[ (1-\beta)C^{1-\rho}_t + \beta \left\{ \rR_t (U_{t+1}) \right\}^{1-\rho}
      \right]^{\frac{1}{1-\rho}}
      \quad (0 < \rho \neq 1),
\end{equation*}
where $\{ C_t \}$ is a consumption path, $U_t$ is the utility value of the path onward from
time $t$, and $\rR_t$ is the Kreps-Porteus certainty equivalent operator
\begin{equation*}
    \rR_t(U_{t+1})
    = \left( \EE_t U^{1-\gamma}_{t+1} \right)^{\frac{1}{1-\gamma}}
    \quad (0 < \gamma \neq 1).
\end{equation*}
Here, $\EE_t$ stands for the conditional expectation with respect to the period
$t$ information.
The value $1/ \rho$ represents elasticity of intertemporal substitution (EIS)
between the composite good and the certainty equivalent, while $\gamma$
governs the level of relative risk aversion (RRA) with respect to atemporal
gambles.  The most empirically relevant case is 
$\rho < \gamma$, implying that the agent prefers early
resolution of uncertainty (see, e.g., \cite{bansal2004yaron} or
\cite{schorfheide2018identifying}).  We focus on this case in what follows.

Under Epstein--Zin preferences, the generic additively separable
Bellman equation in \eqref{eq:belleq-app-adse} becomes
\begin{equation}
    \label{eq:belleq-app-ez}
    v(s, z) 
    = \max_{y \in \Gamma(s,z)} \left\{
        r(s, y, z)
      + \beta \left[ \int v(y, z')^{1-\gamma} P(z, \diff z') 
      \right]^{\frac{1-\rho}{1-\gamma}} \right\}^{\frac{1}{1-\rho}}
\end{equation}
for each $(s,z) \in \SS \times \ZZ$, where, here and below,
\begin{equation*}
    r(s, y, z) := (1-\beta) F(s,y,z)^{1-\rho}.
\end{equation*}
We impose the same conditions on the primitives discussed in
section~\ref{ss:addsep-app}.  In particular,  $F$ is continuous, $P$ is
Feller, $\Gamma$ is continuous and compact valued and both $\SS$ and $\ZZ$ are
compact.  To ensure $F(s, y, z)^{1-\rho}$ is always
well defined, we also assume that $F$ is everywhere positive.

\subsubsection{The Case $\rho < \gamma < 1$}

\label{sss:theta01}

As in \cite{hansen2012recursive}, we begin with the continuous strictly
increasing transformation $\hat v = v^{1-\gamma}$, which allows us to
rewrite~\eqref{eq:belleq-app-ez} as
\begin{equation}
    \label{eq:v-hat-max}
    \hat v(s,z)  
    = \max_{y \in \Gamma(s,z)} \left\{
        r(s, y, z)
      + \beta  \left[ \int \hat v(y,z') P(z, \diff z')
      \right]^{1/\theta} \right\}^\theta
\end{equation}
where 
\begin{equation*}
    \theta:= \frac{1-\gamma}{1-\rho}.     
\end{equation*}
Since this transformation is bijective, there is a one-to-one
correspondence between $v$ and $\hat v$, in the sense that $v$ solves
\eqref{eq:belleq-app-ez} if and only if $\hat v$ solves \eqref{eq:v-hat-max}.
Note that in the current setting we have $\theta \in (0,1)$.

The state-action aggregator $H$ corresponding to \eqref{eq:v-hat-max} is
\begin{equation}
    \label{eq:q-ez}
    H((s,z), y, v) 
    = \left\{ r(s, y, z)
      + \beta  \left[ \int v(y,z') P(z, \diff z')
      \right]^{1/\theta} \right\}^\theta.
\end{equation}
For the bracketing functions $w_1$ and $w_2$, we fix $\delta > 0$ and take the constant functions
\begin{equation*}
    w_1  
    := \left( \frac{m}{1-\beta} \right)^\theta
    \quad \text{and} \quad
    w_2 := \left( \frac{M + \delta}{1-\beta} \right)^\theta ,
\end{equation*}
where 
\begin{equation}
    \label{eq:mm}
    m :=  \min_{((s,z),y) \in \GG}  r(s, y, z) 
    \; \text{ and } \;
    M :=  \max_{((s,z),y) \in \GG} r(s, y, z) .
\end{equation}
These values are finite and positive, since $F$ is continuous
and positive on a compact domain.\footnote{
    In this case, positivity of $F$ can be weakened to nonnegativity.
    }  
Being constant, $w_1$ and $w_2$ are continuous.

We now show that the conditions of assumptions~\ref{a:ath} and \ref{a:cvx} are all satisfied.
Regarding assumption~\ref{a:ath}, condition (a) is true by assumption, while
condition (b) follows immediately from the continuity imposed on $F$
and the Feller property of $P$.
Condition (c) is easy to verify, since, for any $b > 0$, the scalar map
\begin{equation}
    \label{eq:psi}
    \psi(t) := (b + \beta t^{1/\theta})^\theta
    \qquad (t \geq 0)
\end{equation}
is monotone increasing.  
To check condition (d), observe that, for fixed $((s,z),y) \in \GG$, we have
\begin{equation*}
    H((s,z),y,w_1) 
    = \left\{ 
        r(s, y, z) + \beta  \frac{m}{1-\beta} 
      \right\}^\theta 
    \geq \left\{ m + \beta \frac{m}{1-\beta}  \right\}^\theta = w_1(s,z).
\end{equation*}
Similarly,
\begin{equation*}
    H((s,z),y,w_2) 
    = \left\{ 
        r(s, y, z)
      + \beta  \frac{M + \delta}{1-\beta}  \right\}^\theta 
    \leq \left\{ M + \beta \frac{M+\delta}{1-\beta} \right\}^\theta,
\end{equation*}
or, with some rearranging,
\begin{equation}
    \label{eq:suw2-ez}
    H((s,z),y,w_2) \leq \left\{ \frac{M+\delta}{1-\beta} - \delta \right\}^\theta
    < w_2(s,z).
\end{equation}
Hence condition (d) of assumption~\ref{a:ath} holds.
In fact, \eqref{eq:suw2-ez} implies that our choice of $w_2$ also
satisfies the uniformly strict inequality in (b) of assumption~\ref{a:cvx}.\footnote{To be precise, condition (b) holds when $\epsilon := [(M+\delta)/(1-\beta)]^\theta - [(M+\delta)/(1-\beta) -
  \delta]^\theta$.}

It only remains to check value-convexity of $H$.  But this
is implied by the convexity of $\psi$ defined in
\eqref{eq:psi}, which holds whenever $0 < \theta \leq 1$, along with linearity of the integral.
The conclusions of theorem~\ref{t:bkvx} now follow.

\subsubsection{The Case $\rho < 1 < \gamma$}

\label{sss:theta<0}

To treat this case we again apply the continuous transformation $\hat v \equiv v^{1-\gamma}$ to
the Bellman equation~\eqref{eq:belleq-app-ez}.  But now $1 - \gamma$ is
negative, leading to the \emph{minimization} problem
\begin{equation}
    \label{eq:v-hat-min}
    \hat v(s,z)  
    = \min_{y \in \Gamma(s,z)} \left\{
        r(s, y, z)
      + \beta  \left[ \int \hat v(y,z') P(z, \diff z')
      \right]^{1/\theta} \right\}^\theta
\end{equation}
for each $(s, z) \in \XX$.  
The state-action aggregator $H$ corresponding to~\eqref{eq:v-hat-min}
is still as defined in~\eqref{eq:q-ez}.
Note that in the current setting we have $\theta < 0$.

As \eqref{eq:v-hat-min} is a minimization problem, we aim to apply theorem~\ref{t:bkcv}.
For the bracketing functions $w_1$ and $w_2$, 
we take the constant functions
\begin{equation*}
    w_1  
    := \left( \frac{M + \delta}{1-\beta} \right)^\theta 
    \quad \text{and} \quad
    w_2 
    := \left( \frac{m}{1-\beta} \right)^\theta ,
\end{equation*}
where $\delta$ is a positive constant and $m$ and $M$ are as defined in 
\eqref{eq:mm}.

The conditions of assumptions~\ref{a:ath} and \ref{a:ccv} are all satisfied.
Regarding assumption~\ref{a:ath}, 
the arguments verifying conditions (a) to (c)
are identical to those in section~\ref{sss:theta01}.
To check condition (d), observe that, for fixed $((s,z),y) \in \GG$, we have 
\begin{equation*}
    H((s,z),y,w_1) 
    = \left\{ r(s, y, z)
      + \beta  \frac{M+\delta}{1-\beta} \right\}^\theta 
    \geq \left\{ M + \beta \frac{M + \delta}{1-\beta} \right\}^\theta,
\end{equation*}
or, with some rearranging,
\begin{equation}
    \label{eq:slw1-ez1}
    H((s,z),y,w_1)
    \geq \left\{ \frac{M+\delta}{1-\beta} - \delta \right\}^\theta
    > w_1(s,z).
\end{equation}
Similarly, for fixed $((s,z),y) \in \GG$, we have
\begin{equation*}
    H((s,z),y,w_2) 
    = \left\{ r(s, y, z)
      + \beta  \frac{m}{1-\beta} \right\}^\theta 
    \leq \left\{ m + \beta \frac{m}{1-\beta} \right\}^\theta,
\end{equation*}
and the last term is equal to $w_2(s,z)$.
Hence, condition (d) of assumption~\ref{a:ath} is verified.
Furthermore, it is immediate from~\eqref{eq:slw1-ez1} that
our choice of $w_1$ also satisfies the uniformly strict inequality
in (b) of assumption~\ref{a:ccv}.


It only remains to check the value-concavity of $H$.
But this follows directly from the concavity of the function $\psi$
defined in~\eqref{eq:psi}, as implied by  $\theta < 0$, along with linearity of the integral.
We have now checked all conditions of theorem~\ref{t:bkcv}.

\subsubsection{The Case $1 < \rho < \gamma$}

\label{sss:theta>1}

We now turn to the model in the case where the coefficient of relative risk aversion
is still strictly great than $1$ but intertemporal elasticity of substitution is
less than $1$, as is commonly found in the literature.\footnote{See, for
    example, \cite{farhi2008werning} or
\cite{basu2017bundick}.} As before, we apply the continuous transformation $\hat v \equiv v^{1-\gamma}$
to the Bellman equation~\eqref{eq:belleq-app-ez} and, since $1-\gamma < 0$,
the transformed counterpart leads us to the minimization problem 
as defined in~\eqref{eq:v-hat-min}.  Note that $\theta > 1$ in the current setting.

As \eqref{eq:v-hat-min} is a minimization problem, 
we aim to apply theorem~\ref{t:bkcv}. 
For the bracketing functions $w_1$ and $w_2$, 
we take the constant functions
\begin{equation*}
    w_1  
    := \left( \frac{m - \delta}{1-\beta} \right)^\theta 
    \quad \text{and} \quad
    w_2 
    := \left( \frac{M}{1-\beta} \right)^\theta,
\end{equation*}
for some positive $\delta < m$, where $m$ and $M$ are as defined in \eqref{eq:mm}.

Assumptions~\ref{a:ath} and \ref{a:ccv} are again satisfied.  Regarding assumption~\ref{a:ath}, 
the arguments of verifying conditions (a) to (c) 
are identical to those in section~\ref{sss:theta01}.
To check condition (d), observe that,
for fixed $((s,z),y) \in \GG$, we have
\begin{equation*}
    H((s,z),y,w_1) 
    = \left\{ r(s, y, z)
      + \beta  \frac{m - \delta}{1-\beta} \right\}^\theta 
    \geq \left\{ m + \beta \frac{m- \delta}{1-\beta}  \right\}^\theta,
\end{equation*}
or, with some rearranging
\begin{equation}
    \label{eq:slw1-ez2}
    H((s,z),y,w_1)
    \geq \left\{ \frac{m - \delta}{1-\beta} + \delta \right\}^\theta
    > w_1 (s,z).
\end{equation}
Similarly, for fixed $((s,z),y) \in \GG$, we have
\begin{equation*}
    H((s,z),y,w_2) 
    = \left\{ r(s, y, z)
      + \beta \frac{M}{1-\beta} \right\}^\theta 
    \leq \left\{ M + \beta \frac{M}{1-\beta} \right\}^\theta
     = w_2(s,z).
\end{equation*}
Hence condition (d) of assumption~\ref{a:ath} holds.
In fact \eqref{eq:slw1-ez2} implies that
our choice of $w_1$ also satisfies the uniformly strict inequality 
in (b) of assumption~\ref{a:ccv}.

Value-concavity of $H$ is a direct consequence of the concavity of 
$\psi$, which holds again when $\theta > 1$, along
with linearity of the integral.  The conclusions of theorem~\ref{t:bkcv}
now follow.

\subsection{Risk Sensitive Preferences}

\label{ss:riskssp-app}

Consider an agent with risk sensitive preferences (see, e.g.,
\cite{bauerle2018stochastic}), leading to Bellman equation 
\begin{equation}
    \label{eq:belleq-rsp}
    v(s,z) 
    = \max_{y \in \Gamma(s,z)} \left\{
      r(s,y,z) 
      - \frac{\beta}{\theta} \ln \left[ \int \exp \left(-\theta \, v(y,z') \right)
      P(z, \diff z') \right] \right\}
\end{equation}
for each $(s,z) \in \SS \times \ZZ$.
Here, $r \colon \GG \to \RR$ is a continuous one-period reward function.
The parameter $\theta > 0$ captures the risk sensitivity,
while other primitives are as discussed in section~\ref{ss:addsep-app}.
In particular, $P$ is Feller, 
$\Gamma$ is continuous and compact valued and both $\SS$ and $\ZZ$ are compact.


Applying the continuous bijective transformation
$\hat v \equiv \exp(-\theta v)$ to $v$ in the Bellman equation~\eqref{eq:belleq-rsp}
leads to the minimization problem
\begin{equation}
    \label{eq:v-hat-min-rsp}
    \hat v(s,z) 
    = \min_{y \in \Gamma(s,z)} \exp \left\{ - \theta \left\{
      r(s,y,z) 
      - \frac{\beta}{\theta} \ln \left[ \int \hat v(y,z')
      P(z, \diff z') \right] \right\} \right\}.
\end{equation}
We translate \eqref{eq:v-hat-min-rsp} to our environment
by taking $\XX := \SS \times \ZZ$ to be the state space,
$a=y \in \SS$ to be the action, and setting 
\begin{equation}
    \label{eq:q-rsp}
    H((s,z), y, v) 
    = \exp 
    \left\{ 
        - \theta \left\{
      r(s,y,z) 
      - \frac{\beta}{\theta} \ln \left[ \int v(y,z')
      P(z, \diff z') \right] \right\} 
    \right\}.
\end{equation}
Since $r$ is continuous, there exists a finite constant $M$ with $| r| \leq M$.
For the bracketing functions, we fix $\delta > 0$ and take the constant functions 
\begin{equation*}
    w_1 
    := \exp \left[ - \theta \left( \frac{M}{1-\beta} +\delta \right) \right] 
    \quad \text{and} \quad
    w_2 
    := \exp \left[ - \theta \left( \frac{-M}{1-\beta} \right) \right].
\end{equation*}

Assumptions~\ref{a:ath} and \ref{a:ccv} are all satisfied.
Regarding assumption~\ref{a:ath}, the steps
verifying conditions (a) and (b) are identical to those
in section~\ref{sss:theta01}.
Condition (c) clearly holds, since, for any $b \in \RR$, the scalar map
\begin{equation}
    \label{eq:psi-rsp}
    \phi(t)
    := \exp \left[ - \theta \left( b - \frac{\beta}{\theta} \ln t \right) \right]
    \qquad ( t > 0 )
\end{equation}
is monotone increasing.
To check condition (d), observe that,
for fixed $((s,z),y) \in \GG$, we have
\begin{align*}
    H((s,z),y,w_1) 
    &= \exp \left( - \theta \left\{
      r(s,y,z) + \beta \left( \frac{M}{1-\beta} +\delta \right)
       \right\} \right) \\
    &\geq \exp \left( - \theta \left\{
      M + \beta \left( \frac{M}{1-\beta} +\delta \right)
       \right\} \right)
\end{align*}
or, with some rearranging,
\begin{equation}
    \label{eq:slw1-rsp}
    H((s,z),y,w_1)
    \geq \exp \left( -\theta \left\{ 
      \frac{M}{1-\beta} + \beta \delta \right\} \right) 
    > w_1(s,z).
\end{equation}
Similarly, for fixed $((s,z),y) \in \GG$, we have
\begin{align*}
    H((s,z),y,w_2) 
    & = \exp \left( - \theta \left\{
      r(s,y,z) 
      + \beta \left( \frac{-M}{1-\beta} \right) \right\} \right) 
      \\
    & \leq \exp \left( - \theta \left\{
      - M - \beta \frac{M}{1-\beta} \right\} \right),
\end{align*}
and the last term is equal to $w_2(s,z)$.
Hence condition (d) of assumption~\ref{a:ath} holds.
In addition, it is obvious from~\eqref{eq:slw1-rsp} that
our choice of $w_1$ also satisfies the uniformly strict inequality 
in part (b) of assumption~\ref{a:ccv}.\footnote{
    To be precise, condition (b) of assumption~\ref{a:ccv} holds 
    when we set $\epsilon := \exp \{ -\theta [ M/(1-\beta) + \beta \delta ] \}
    - \exp \{ -\theta [ M/(1-\beta) +  \delta ] \}$.
    }

Finally, condition (a) of assumption~\ref{a:ccv}, which is 
value-concavity of $H$, follows from directly the concavity of the function
$\phi$ defined in~\eqref{eq:psi-rsp}, along with linearity of the
integral.
The conclusions of theorem~\ref{t:bkcv} now follow.

\subsection{Ambiguity}
    
\label{ss:sap-app}

\cite{ju2012ambiguity} propose and study a recursive smooth ambiguity 
model where lifetime value satisfies
\begin{equation}
    \label{eq:EIS-def-sap}
    U_t
    = \left[ (1-\beta)C_t^{1-\rho} + \beta \{ \rR_t (U_{t+1}) \}^{1-\rho} \right]^{1/(1-\rho)}
\end{equation}
with
\begin{equation}
    \label{eq:AR-def-sap}
    \rR_t (U_{t+1} )
    = \left\{  \EE_{\mu_t} \left( \EE_{\pi_{\theta, t}} \left[U^{1-\gamma}_{t+1}
    \right] \right)^{(1-\eta)/(1-\gamma)}    
    \right\}^{1/(1-\eta)}.
\end{equation}
As before, $\rho$ is the reciprocal of the EIS and
$\gamma$ governs risk aversion, 
while $\eta$ satisfies $0< \eta \neq 1$ and captures ambiguity aversion.
If $\eta = \gamma$, the decision maker is ambiguity neutral and
~\eqref{eq:EIS-def-sap}--\eqref{eq:AR-def-sap} reduces to the classical
recursive utility model of \cite{epstein1989}.  The decision maker displays
ambiguity aversion if and only if $\gamma < \eta$.
We focus primarily on the case $0< \rho \leq 1 < \gamma < \eta$, which is the
most empirically relevant.\footnote{The calibration used in \cite{ju2012ambiguity}
is $(\rho, \gamma, \eta) = (0.66, 2.0, 8.86)$.  See p.~574.}

As a generic formulation of the preferences of
\cite{ju2012ambiguity}, we consider the Bellman equation
\begin{equation}
    \label{eq:bellm-ori-sap}
    v(s, z) 
    = \max_{y \in \Gamma(s, z)}
    \left\{ r(s, y, z) 
        + \beta \left\{ \int 
        \left[  \int v(y, z')^{1-\gamma} \pi_\theta (z, \diff z')
        \right]^{\frac{1-\eta}{1-\gamma}} \mu(z, \diff \theta)
        \right\}^{\frac{1-\rho}{1-\eta}}
    \right\}^{\frac{1}{1-\rho}}
\end{equation}
where $(s, z) \in \SS \times \ZZ$.  
We assume both $\SS$ and $\ZZ$ to be compact,
$\Gamma$ to be continuous and compact valued,
$F$ to be continuous and everywhere positive.  The set
$\Theta$ is a finite parameter space, each element of which 
is a vector of parameters in the specification of the exogenous state process.
Given any $\theta \in \Theta$, the transition function $\pi_\theta$ on $\ZZ$
is assumed to have the Feller property.  Given any $z \in \ZZ$,
the distribution $\mu(z, \cdot)$ maps subsets of $\Theta$
to $[0,1]$ and evolves as a function of the exogenous state process.  
We suppose that $\mu$ is continuous in $z$ for each $\theta \in \Theta$.

\subsubsection{The Case $\rho \neq 1$.}

\label{sss:rho<1-sap}

Applying the continuous bijective transformation $\hat v \equiv v^{1-\eta}$ to $v$
in the Bellman equation~\eqref{eq:bellm-ori-sap} leads to the minimization problem
\begin{equation}
    \label{eq:bellm-modif-sap}
    \hat v(s, z) 
    = \min_{y \in \Gamma(s, z)}
    \left\{ r(s, y, z) 
    + \beta \left\{ \int 
    \left[  \int \hat v(y, z')^{\xi_1} \pi_\theta (z, \diff z')
    \right]^{\frac{1}{\xi_1}} \mu(z, \diff \theta)
    \right\}^{\frac{1}{\xi_2}}
    \right\}^{\xi_2}
\end{equation}
for all $(s, z) \in \SS \times \ZZ$,
where, here and below,
\begin{equation*}
    \xi_1 : = \frac{1-\gamma}{1-\eta}
    \quad \text{and} \quad 
    \xi_2 := \frac{1-\eta}{1-\rho}.
\end{equation*}
Since this transformation is bijective,
there is a one-to-one correspondence between $v$ and $\hat v$,
in the sense that $v$ solves \eqref{eq:bellm-ori-sap}
if and only if $\hat v$ solves \eqref{eq:bellm-modif-sap}.
Note that in the current setting we have $\xi_1 \in (0,1)$
and $\xi_2 < 0$.

We translate this model to our environment
by taking $\XX := \SS \times \ZZ$ to be the state space, 
$a=y$ to be the action taking values in $\SS$,
and setting the state-action aggregator $H$ to
\begin{equation}
    \label{eq:q-def-sap}
    H((s,z),y, \hat v)
    = \left\{ r(s, y, z) 
    + \beta \left\{ \int 
    \left[  \int \hat v(y, z')^{\xi_1} \pi_\theta (z, \diff z')
    \right]^{\frac{1}{\xi_1}} \mu(z, \diff \theta)
    \right\}^{\frac{1}{\xi_2}}
    \right\}^{\xi_2}.
\end{equation}
As \eqref{eq:bellm-modif-sap} is a minimization problem,
we aim to apply theorem~\ref{t:bkcv}. 
For the bracketing functions $w_1$ and $w_2$,
we fix $\delta >0$ and take the constant functions
\begin{equation*}
    w_1
    := \left( \frac{M +\delta}{1-\beta} \right)^{\xi_2} 
    \quad \text{and} \quad
    w_2
    := \left( \frac{m}{1-\beta} \right)^{\xi_2} ,
\end{equation*}
where the real numbers $m$ and $M$ are as defined in section~\ref{sss:theta01}.

Assumptions~\ref{a:ath} and \ref{a:ccv} are satisfied.  Regarding
assumption~\ref{a:ath}, condition (a) is true by assumption.  Conditions (b)
and (c) are proved in lemma~\ref{l:q-sap} in the appendix.  To verify
condition (d), for fixed $((s,z),y) \in \GG$, we have
\begin{align*}
    H((s,z),y,w_1) 
    &= \left\{ r(s,y,z)
      + \beta \left\{ \int 
       \left( \frac{M +\delta}{1-\beta} \right)^{\xi_2} 
      \mu (z, \diff \theta)
      \right\}^{1/\xi_2}
      \right\}^{\xi_2} \\
    &= \left\{ r(s,y,z) + \beta \frac{M+\delta}{1-\beta} \right\}^{\xi_2} 
    \geq \left\{ M + \beta \frac{M+\delta}{1-\beta} \right\}^{\xi_2},
\end{align*}
where the first equality follows from directly the definition of $H$
and the fact that $[\int d(z')^{\xi_1} \pi_\theta(z, \diff z')]^{1/\xi_1} =d$ for 
any non-negative constant function $d$.
Furthermore, with some rearranging, we obtain
\begin{equation}
    \label{eq:slw1-sap}
    H((s,z),y,w_1) 
    \geq \left\{ \frac{M+\delta}{1-\beta} - \delta \right\}^{\xi_2} 
    > w_1(s,z).
\end{equation}
Similarly, for fixed $((s,z),y) \in \GG$, we have
\begin{align*}
    H((s,z),y,w_2) 
    &= \left\{ r(s,y,z)
      + \beta \left\{ \int 
       \left( \frac{m}{1-\beta} \right)^{\xi_2} 
      \mu (z, \diff \theta)
      \right\}^{1/\xi_2}
      \right\}^{\xi_2} \\
    &= \left\{ r(s,y,z) + \beta \frac{m}{1-\beta}  \right\}^{\xi_2}
    \leq \left\{ m + \beta \frac{m}{1-\beta}  \right\}^{\xi_2} = w_2(s,z).
\end{align*}
Hence condition (d) of assumption~\ref{a:ath} indeed holds true.
Moreover, it is clear from~\eqref{eq:slw1-sap} that our choice
of $w_1$ also satisfies the uniformly strict inequality in part (b)
of assumption~\ref{a:ccv}.

Condition (a) of assumption~\ref{a:ccv} (i.e., value-concavity of $H$) is also
satisfied, as shown in lemma~~\ref{l:q-sap} of the appendix.  
The conclusions of theorem~\ref{t:bkcv} now follow.

\subsubsection{The Case $\rho =1$.}

\label{sss:rho=1-sap}

In the limiting case with $\rho = 1$, the generic ambiguity recursion
\eqref{eq:AR-def-sap} becomes
\begin{equation}
    \label{eq:rsu-sap-rho=1}
    \begin{split}
    U_t(C)
    = & (1-\beta) \ln C_t \\
      &+  \frac{\beta}{1-\eta} \ln \left\{ 
        \EE_{\mu_t} \exp \left( \frac{1-\eta}{1-\gamma} \ln \left( \EE_{\pi_{\theta, t}}
        \exp\left( (1-\gamma) U_{t+1} \right) \right) \right) \right\},
    \end{split}
\end{equation}
where $U_t = \ln V_t$.\footnote{This specification connects with
risk-sensitive control and robustness, as studied by \cite{hansen2008robustness}.
In particular, there are two risk-sensitivity adjustments 
in~\eqref{eq:rsu-sap-rho=1}.} The generic Bellman equation in~\eqref{eq:bellm-ori-sap}
becomes
\begin{equation}
    \label{eq:bell-sap=1}
    \begin{split}
    &v(s,z)
     = \max_{y \in \Gamma(s,z)}  \bigg\{ r(s,y,z) 
        +  \frac{\beta}{1-\eta} \times \\
             &\times    \ln \left[  \int \exp \left(
                 \frac{1}{\xi_1}  
                 \ln \left( 
                 \int \exp \left( (1-\gamma) v(y,z') \right) \pi_\theta (z, \diff z')
                 \right) \right) 
                 \mu(z, \diff \theta)  \right] \bigg\},
    \end{split}
\end{equation}
for each $(s,z) \in \SS \times \ZZ$. 
The one-period return function $r$ is still assumed to be continuous but no longer restricted to be positive,
while other primitives are as discussed in
section~\ref{sss:rho<1-sap}.

Applying the transformation $\hat v \equiv \exp[ (1-\eta) v]$ to $v$
in the Bellman equation~\eqref{eq:bell-sap=1} leads us to the 
minimization problem
\begin{equation} 
    \label{eq:v-hat-min-sap=1}
    \begin{split}
    &\hat v(s,z)
    = \min_{y \in \Gamma(s,z)}
      \exp \Biggl( (1-\eta)   \bigg\{ r(s,y,z)  
      + \frac{\beta}{1-\eta} \times \\
            &\times     \ln \left[ \int \exp \left(
                 \frac{1}{\xi_1}  
                 \ln \left( 
                 \int \exp \left( \xi_1 \ln \hat v(y,z') 
                 \right) \pi_\theta (z, \diff z')
                 \right) \right) 
                 \mu(z, \diff \theta)  \right] \bigg\}  \Biggr).
    \end{split}
\end{equation}
With some rearranging, \eqref{eq:v-hat-min-sap=1} can be written as
\begin{equation} 
    \label{eq:v-hat-min-sap=1-altversion}
    \begin{split}
    \hat v(s,z)
    = \min_{y \in \Gamma(s,z)}
      \exp & \Bigg( (1-\eta)  \Bigg\{  r(s,y,z)
    + \frac{\beta}{1-\eta} \times \\
     &\times \ln \left[ \int 
      \left( \int \hat v (y,z')^{\xi_1} \pi_\theta (z, \diff z') \right)^{1/\xi_1}
      \mu(z, \diff \theta)
      \right] \Bigg\} \Bigg).
    \end{split}
\end{equation}
Note that we still have $\xi_1 \in (0,1)$ and $\eta > 1$
in the current setting with ambiguity aversion.
The state-action aggregator $H$ corresponding to~\eqref{eq:v-hat-min-sap=1-altversion} is
\begin{equation} 
    \label{eq:q-sap-rho=1}
    \begin{split}
    H((s,z), y, \hat v)
    = 
    \exp & \Bigg( (1-\eta)  \Bigg\{  r(s,y,z) 
    + \frac{\beta}{1-\eta} \times \\
     &\times \ln \left[ \int 
      \left( \int \hat v (y,z')^{\xi_1} \pi_\theta (z, \diff z') \right)^{1/\xi_1}
      \mu(z, \diff \theta)
      \right] \Bigg\} \Bigg).
    \end{split}
\end{equation}

Since the return function $r$ is continuous on a compact set, there exists
a finite constant $M$ such that $|r| \leq M$.
Hence for the bracketing function $w_1$ and $w_2$, 
we fix $\delta > 0$ and take the constant functions
\begin{equation*}
    w_1 
    := \exp \left( (1-\eta) \left(  \frac{M}{1-\beta} + \delta \right) \right)
    \quad \text{and} \quad
    w_2 
    := \exp \left( (1-\eta) \left(  \frac{-M}{1-\beta} \right) \right).
\end{equation*}

As \eqref{eq:v-hat-min-sap=1-altversion} is the minimization problem,
we aim to apply theorem~\ref{t:bkcv}.
Again, assumptions~\ref{a:ath} and \ref{a:ccv} are all satisfied.

Regarding assumption~\ref{a:ath}, condition (a) is trivial.
Conditions (b) and (c) follow from lemma~\ref{l:q-sap-rho=1} in the
appendix.  To check condition (d), observe that,
for fixed $((s,z), y) \in \GG$, we have
\begin{align*}
    H((s,z), y, w_1) 
    &= \exp \left( (1-\eta) \left\{ r(s,y,z)
      + \beta \left(  \frac{M}{1-\beta} + \delta \right)
     \right\} \right) \\
    & \geq \exp \left( (1-\eta) \left\{ M
      + \beta \left(  \frac{M}{1-\beta} + \delta \right)
     \right\} \right),
\end{align*}
or, with some rearranging,
\begin{equation}
    \label{eq:slw1-sap=1}
    H((s,z), y, w_1) 
    \geq \exp \left( (1-\eta) \left\{  \frac{M}{1-\beta} + \beta \delta \right\} \right)
    > w_1 (s,z).
\end{equation}
Similarly, for fixed $((s,z), y) \in \GG$, we have
\begin{align*}
    H((s,z),y,w_2) 
    &= \exp \left( (1-\eta) \left\{
      r(s,y,z) 
      + \beta \left( \frac{-M}{1-\beta} \right) \right\} \right) \\
    &\leq \exp \left( (1-\eta) \left\{
      - M - \beta \frac{M}{1-\beta} \right\} \right) = w_2(s,z).
\end{align*}
Hence condition (d) of assumption~\ref{a:ath} holds.

In fact, it is immediate from~\eqref{eq:slw1-sap=1} that
our choice of $w_1$ also satisfies the uniformly strict inequality 
in part (b) of assumption~\ref{a:ccv}.
Regarding part (a) of assumption~\ref{a:ccv}, 
value-concavity of $H$ is immediate from 
lemma~\ref{l:q-sap-rho=1}.
We have now checked all conditions of theorem~\ref{t:bkcv}, and hence
the conclusions of that theorem now follow.

\subsection{Narrow framing}

\label{ss:narfram}

In this section we study recursive preferences that incorporate both
first-order risk aversion and narrow framing,
as in, say, \cite{barberis2006AER} or \cite{barberis2009huang}.
They can be expressed as
\begin{equation*}
    U_t
    = \left[ (1-\beta)C^{1-\rho}_t + \beta \left\{ 
     \left( \EE_t \, U^{1-\gamma}_{t+1} \right)^{\frac{1}{1-\gamma}} 
     + b_0 \EE_t \left( \sum_{i=1} \bar u( \tilde G_{i,t+1}) \right) \right\}^{1-\rho}
      \right]^{\frac{1}{1-\rho}} ,
\end{equation*}
where $b_0 \geq 0$ is a parameter controlling the degree of narrow framing, 
while $\tilde G_{i,t+1}$ represents the specific gamble the agent is taking by investing in asset $i$
whose uncertainty will be resolved between period $t$ and $t+1$.
First-order risk aversion is introduced 
through the piecewise linearity of $\bar u$.\footnote{
    The specification of $\bar u$ 
    in~\cite{barberis2006AER} is 
    $\bar u (x) =  x \1\{x \geq 0\} + \lambda x \1\{x < 0\}$ with $\lambda >
    1$.} 
Relative to the recursive specification in section~\ref{ss:ezp-app},
the new term prefixed by $b_0$ shows that the agent obtains utility directly
from the outcomes of gambles $\{ \tilde G_{i,t+1}\}_i$ over and above what those outcomes mean
for total wealth risk, rather than just indirectly via its contribution to next period's wealth.
Other primitives are as discussed in section~\ref{ss:ezp-app}.
For the parameters $\rho$ and $\gamma$, we assume that either $1 < \rho < \gamma$
or $\rho < 1 < \gamma$.

Under the preceding preference specification, the generic Bellman equation becomes
\begin{equation}
    \label{eq:belleq-app-ez-nf}
    v(s, z) 
    = \max_{y \in \Gamma(s,z)} \left\{
        r(s, y, z)
      + \beta \left[ 
      \left( \int v(y, z')^{1-\gamma} P(z, \diff z') \right)^{\frac{1}{1-\gamma}}
      + B(s,y,z)
      \right]^{1-\rho} \right\}^{\frac{1}{1-\rho}}.
\end{equation}
As before, we suppose that 
the one-period return function $r(s,y,z)$ is positive and continuous on $\GG$,
while the aggregate gambling utility function $B(s,y,z)$
is assumed to be nonnegative and continuous on $\GG$.

We again apply the continuous transformation $\hat v \equiv v^{1-\gamma}$ to
the Bellman equation~\eqref{eq:belleq-app-ez-nf}.  As $1 - \gamma$ is
negative, the transformed counterpart leads us to the minimization problem
\begin{equation}
    \label{eq:vhat-app-ez-nf}
    \hat v(s, z) 
    = \min_{y \in \Gamma(s,z)} \left\{
        r(s, y, z)
      + \beta \left[ 
      \left( \int \hat v(y, z') P(z, \diff z') \right)^{\frac{1}{1-\gamma}}
      + B(s,y,z)
      \right]^{1-\rho} \right\}^\theta, 
\end{equation}
where $\theta := (1-\gamma)/(1-\rho)$.
The state-action aggregator is
\begin{equation}
    \label{eq:q-app-ez-nf}
    H((s,z), y, \hat v)
    = \left\{
        r(s, y, z)
      + \beta \left[ 
      \left( \int \hat v(y, z') P(z, \diff z') \right)^{\frac{1}{1-\gamma}}
      + B(s,y,z)
      \right]^{1-\rho} \right\}^\theta. 
\end{equation}

\begin{lemma}
    \label{l:slu-nf}
    Let $H$ be as defined in~\eqref{eq:q-app-ez-nf}.
    If either $\rho < 1 < \gamma$ or $1 < \rho < \gamma$,
    then there exist continuous strictly positive functions 
    $w_1, w_2$ on $\SS \times \ZZ$, $w_1 < w_2$, such that 
    \begin{enumerate}
        \item[(SL)] there exists an $\epsilon >0$ such that
            $H((s,z),y,w_1) \geq w_1(s,z) + \epsilon$ for all $((s,z),y) \in \GG$; and
        \item[(U)] $H((s,z),y,w_2) \leq w_2(s,z)$ for all $((s,z),y) \in \GG$.
    \end{enumerate}
\end{lemma}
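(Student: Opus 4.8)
The plan is to exhibit \emph{constant} functions $w_1$ and $w_2$ and verify (SL) and (U) by hand, recycling as much as possible the bracketing constants already used for the pure Epstein--Zin model in Sections~\ref{sss:theta<0} and~\ref{sss:theta>1}. First I would record the elementary facts: since $r$ is continuous and strictly positive and $B$ is continuous and nonnegative on the compact set $\GG$, the numbers $m := \min_{\GG} r$, $M := \max_{\GG} r$ and $\bar B := \max_{\GG} B$ are finite with $0 < m \leq M$ and $\bar B \geq 0$. In both parameter regimes $\gamma > 1$, so $1-\gamma < 0$, the exponent $1/(1-\gamma)$ in \eqref{eq:q-app-ez-nf} is negative, and $c \mapsto c^{1-\gamma}$ is a decreasing bijection of $(0,\infty)$. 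For a constant function $d > 0$ we have $\int d\, P(z, \diff z') = d$, so $H((s,z),y,d) = \{\, r(s,y,z) + \beta (d^{1/(1-\gamma)} + B(s,y,z))^{1-\rho} \,\}^\theta$, which is the only form of $H$ I will ever evaluate.

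For the upper solution I would take $w_2$ to be exactly the Epstein--Zin constant: $w_2 := (m/(1-\beta))^\theta$ when $\rho < 1 < \gamma$ (so $\theta < 0$) and $w_2 := (M/(1-\beta))^\theta$ when $1 < \rho < \gamma$ (so $\theta > 1$). The key observation is that inserting the nonnegative term $B$ can only decrease $H$ relative to the $B \equiv 0$ Epstein--Zin aggregator: when $1-\rho > 0$ it raises $(d^{1/(1-\gamma)} + B)^{1-\rho}$, hence raises $\{\cdots\}$, hence lowers $\{\cdots\}^\theta$ because $\theta < 0$; when $1-\rho < 0$ it lowers $(d^{1/(1-\gamma)} + B)^{1-\rho}$, hence lowers $\{\cdots\}$, hence lowers $\{\cdots\}^\theta$ because $\theta > 0$. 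Since $(w_2^{1/(1-\gamma)})^{1-\rho} = w_2^{1/\theta}$, this gives $H((s,z),y,w_2) \leq \{\, r(s,y,z) + \beta\, w_2^{1/\theta} \,\}^\theta$, and the right-hand side is $\leq w_2$ by the identical computation already carried out in Section~\ref{sss:theta<0} (resp.\ Section~\ref{sss:theta>1}). That establishes (U).

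The substantive step is (SL). Here the Epstein--Zin strict lower solution need not stay strict once $B$ is present (the same monotonicity remark shows $B$ pushes $H$ down), so I would take a ``deeper'' constant $w_1 := \kappa^{1-\gamma}$ with $\kappa > 0$ large. Using $B \leq \bar B$ together with $r \leq M$ (case $\rho < 1 < \gamma$) or $r \geq m$ (case $1 < \rho < \gamma$), and tracking the signs of $1-\rho$ and $\theta$ as above, one gets the uniform bound $H((s,z),y,w_1) \geq \{\, M + \beta(\kappa + \bar B)^{1-\rho} \,\}^\theta$ in the first case and $H((s,z),y,w_1) \geq \{\, m + \beta(\kappa + \bar B)^{1-\rho} \,\}^\theta$ in the second. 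I would then show this lower bound strictly exceeds $w_1 = \kappa^{1-\gamma}$ for all large $\kappa$; undoing the power $\theta$ (with the usual reversal of inequalities for a negative exponent and using $(1-\gamma)/\theta = 1-\rho$), this amounts to $M + \beta(\kappa+\bar B)^{1-\rho} < \kappa^{1-\rho}$ in the first case and $m + \beta(\kappa+\bar B)^{1-\rho} > \kappa^{1-\rho}$ in the second. Writing $(\kappa+\bar B)^{1-\rho} = \kappa^{1-\rho}(1+\bar B/\kappa)^{1-\rho}$ and applying Bernoulli's inequality to the bracket shows $(\kappa+\bar B)^{1-\rho}$ differs from $\kappa^{1-\rho}$ by a term of order $\kappa^{-\rho}$, of a sign favorable to the argument in each case; so the first inequality reduces to $M + O(\kappa^{-\rho}) < (1-\beta)\kappa^{1-\rho}$, true for large $\kappa$ because $1-\rho > 0$ makes the right side diverge, and the second reduces to $m - O(\kappa^{-\rho}) > (1-\beta)\kappa^{1-\rho}$, true for large $\kappa$ because $1-\rho < 0$ makes the right side vanish while the left side tends to $m > 0$. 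Enlarging $\kappa$ further if needed guarantees $w_1 < w_2$ (equivalently $\kappa > (m/(1-\beta))^{1/(1-\rho)}$, resp.\ $\kappa > (M/(1-\beta))^{1/(1-\rho)}$, since $c \mapsto c^{1-\gamma}$ is decreasing). Finally, $(s,z,y) \mapsto H((s,z),y,w_1)$ is continuous on the compact set $\GG$ (continuity of $r$ and $B$; the integral is trivial as $w_1$ is constant), so the strict pointwise inequality yields $\epsilon := \min_{((s,z),y) \in \GG}\{\, H((s,z),y,w_1) - w_1(s,z) \,\} > 0$, which is (SL).

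I expect the last paragraph to be the main obstacle: unlike the pure Epstein--Zin case there is no off-the-shelf constant that is a strict lower solution, so $w_1$ must be chosen close enough to $0$ that the strict surplus inherited from the Epstein--Zin structure is not erased by the downward perturbation caused by $\bar B$. The only genuine content is the Bernoulli estimate showing that the surplus dominates in the limit, but keeping the exponent signs straight simultaneously in the two regimes ($\theta < 0$ with $1-\rho > 0$ versus $\theta > 1$ with $1-\rho < 0$) is where care is required.
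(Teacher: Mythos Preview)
Your proposal is correct and follows essentially the same strategy as the paper: constant bracketing functions, with $w_1=\kappa^{1-\gamma}$ for $\kappa$ large to obtain (SL), and an explicit Epstein--Zin-type constant for (U). The differences are cosmetic. For (SL) the paper writes $d:=w_1^{1/(1-\gamma)}$ and shows the auxiliary function $\varphi(d)=\bigl((d^{1-\rho}-M)/\beta\bigr)^{1/(1-\rho)}-d-L$ (resp.\ its analogue with $m$) is positive for large $d$ by differentiating, while you reach the same conclusion via the asymptotic expansion $(\kappa+\bar B)^{1-\rho}=\kappa^{1-\rho}(1+O(\kappa^{-1}))$; both amount to ``take $\kappa$ large.'' For (U) your argument is in fact a bit cleaner than the paper's: you observe that the nonnegative perturbation $B$ can only \emph{decrease} $H$ in both parameter regimes, so the pure Epstein--Zin upper solutions $w_2=(m/(1-\beta))^\theta$ and $w_2=(M/(1-\beta))^\theta$ from Sections~\ref{sss:theta<0}--\ref{sss:theta>1} carry over verbatim; the paper instead chooses the constants $[m/(1-\beta^{1/\rho})]^\theta$ and $[M/(1-\beta^{1/\rho})]^\theta$ and verifies (U) by direct computation. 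Your final step---extracting a uniform $\epsilon$ from the strict pointwise inequality via continuity of $H(\cdot,\cdot,w_1)$ on the compact set $\GG$---is the standard one and matches the paper's use of the ``uniformly strict inequality.''
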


The proof is deferred to the appendix.

The conditions of assumptions~\ref{a:ath} and \ref{a:ccv} are all satisfied.
Regarding assumption~\ref{a:ath}, condition (a) is true by assumption, while condition (b)
follows immediately from the continuity imposed on $r$ and $B$ and the Feller property of $P$.
Condition (c) is easy to verify, since, for any fixed constants $c>0$ and $b \geq 0$,
the scalar map
\begin{equation}
    \label{eq:psi-nf}
    \psi(t) 
    := \left\{ c + \beta \left[ t^{\frac{1}{1-\gamma}} + b \right]^{1-\rho}
        \right\}^\theta
        \qquad (t>0)
\end{equation}
is monotone increasing.  Condition (d) and part (b) of assumption~\ref{a:ccv}
have been verified by lemma~\ref{l:slu-nf}.

It only remains to check value-concavity of $H$.  But this follows directly from the concavity
of $\psi$ defined in~\eqref{eq:psi-nf}, as implied by $\rho < 1 < \gamma$,
along with linearity of the integral.
Hence all conditions of theorem~\ref{t:bkcv} are verified and the conclusions of
that theorem follow.

\section{Unbounded Rewards}

\label{s:ubdd}

This section gives an example of how the dynamic programming methodology
proposed in this paper can be extended to the
setting of unbounded rewards.  The application we consider is the Epstein-Zin
problem of section~\ref{sss:theta>1}, where $1 < \rho < \gamma$,
with Bellman equation 
\begin{equation*}
    v(s, z) 
    = \max_{y \in \Gamma(s,z)} \left\{
      r(s,y,z)
      + \beta \left[ \int v(y, z')^{1-\gamma} P(z, \diff z') 
      \right]^{\frac{1-\rho}{1-\gamma}} \right\}^{\frac{1}{1-\rho}}
\end{equation*}
for each $(s,z) \in \SS \times \ZZ$.
Dropping the compactness assumption, we allow 
$\SS$ and $\ZZ$ to be arbitrary separable metric spaces containing possible
values for the endogenous and exogenous state variables respectively.  As
before, $P$ is Feller, $\Gamma$ is compact valued and continuous, while $r \colon \GG \to
\RR$ is continuous and $\beta$ lies in $(0,1)$.  Let $\theta$ be defined by
$\theta = (1-\gamma)/(1-\rho)$, so that $\theta > 1$ in the current setting.

In addition, we make the following assumptions.
\begin{assumption}
    \label{a:app-ext}
    There exist a continuous function $\kappa \colon \SS \times \ZZ \to [1, \infty)$,
    positive constants $M$, $L$ with $L \leq M$,
    and $c \in (0, 1/\beta^\theta)$ and $d \in [0, 1/\beta^\theta)$
    satisfying the conditions
    \begin{equation}
        \label{eq:a-ext-sf}
        \sup_{y \in \Gamma(s,z)} r(s,y,z) \leq M \kappa (s,z)
        \quad \text{for all } (s,z) \in \SS \times \ZZ,
    \end{equation}
    \begin{equation}
        \label{eq:a-ext-if}
        \inf_{y \in \Gamma(s,z)} r(s,y,z) \geq L \kappa (s,z)
        \quad \text{for all } (s,z) \in \SS \times \ZZ,
    \end{equation}
    \begin{equation}
        \label{eq:a-ext-sk}
        \sup_{y \in \Gamma(s,z)} \int \kappa(y, z')^\theta P(z, \diff z') 
        \leq c \kappa (s,z)^\theta
        \quad \text{for all } (s,z) \in \SS \times \ZZ,
    \end{equation}
    \begin{equation}
        \label{eq:a-ext-ik}
        \inf_{y \in \Gamma(s,z)} \int \kappa(y, z') P(z, \diff z') 
        \geq d \kappa (s,z)
        \quad \text{for all } (s,z) \in \SS \times \ZZ.
    \end{equation}
    Moreover, the map
    $(y, z) \mapsto \int \kappa(y,z')^\theta P(z, \diff z')$ is continuous on $\SS \times \ZZ$.
\end{assumption}

In what follows, given a real-valued continuous function $\ell$ defined on $\SS \times \ZZ$
with $\ell (s,z) > 0$ for all $(s,z) \in \SS \times \ZZ$, a function $f \colon
\SS \times \ZZ \to \RR$ is called \emph{$\ell$-bounded} if $f(s,z) / \ell
(s,z)$ is bounded as $(s,z)$ ranges over $\SS \times \ZZ$.


As in section~\ref{sss:theta>1}, 
we apply the continuous transformation $\hat v \equiv v^{1-\gamma}$
to the Bellman equation.  Since $1-\gamma < 0$,
the transformed counterpart leads us to a minimization problem 
with state-action aggregator 
\begin{equation*}
    H((s,z), y, \hat v) 
    = \left\{ r(s,y,z)
      + \beta  \left[ \int \hat v(y,z') P(z, \diff z')
      \right]^{1/\theta} \right\}^\theta.
\end{equation*}
For the bracketing functions $w_1$ and $w_2$,
we fix $\delta$ such that $0<\delta< L$, and then adopt the functions
\begin{equation*}
    w_1 := (L - \delta )^\theta \cdot \kappa
        \quad \text{and} \quad
    w_2 := \left( \frac{M}{1-\beta c^{1/\theta}} 
        \right)^\theta \cdot \kappa^\theta.
\end{equation*}
Note that $\kappa \leq \kappa^\theta$, since $\theta >1$ and $\kappa \geq 1$.
Hence $w_1$ and $w_2$ are both $\kappa^\theta$-bounded. 
In addition, the positivity and the continuity of $\kappa$ directly imply
the positivity and continuity of $w_1$ and $w_2$. 
Hence, such $w_1$ and $w_2$ are 
positive $\kappa^\theta$-bounded continuous functions in $\RR^\XX$ 
with $w_1 \leq w_2$.

Let $\vV$ and $\cC$ be all Borel measurable functions $v$ in $\RR^\XX$ satisfying
$w_1 \leq v \leq w_2$, and be the continuous functions in $\vV$ respectively.
For fixed $\sigma \in \Sigma$, a function $\hat v_\sigma \in \vV$ is called
$\sigma$-value function if
\begin{equation*}
    \hat v_\sigma(s,z)
    = \left\{ r(s,\sigma(s,z),z)
      + \beta  \left[ \int \hat v_\sigma(\sigma(s,z),z') P(z, \diff z')
      \right]^{1/\theta} \right\}^\theta
\end{equation*} 
for all $(s,z) \in \SS \times \ZZ$.  The following proposition states a result for
its existence and uniqueness.

\begin{proposition} 
    \label{p:regp-ccv-ext}
    If assumption~\ref{a:app-ext} holds, then, for each $\sigma \in \Sigma$,
    the set $\vV$ contains exactly one $\sigma$-value function $\hat v_\sigma$.
\end{proposition}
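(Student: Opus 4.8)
The plan is to fix $\sigma \in \Sigma$ and recast the $\sigma$-value equation as a fixed point problem for a single operator $T_\sigma$ on $\vV$, then apply the same monotone-concave fixed point machinery (due to \cite{du1990}) used in the bounded case, after rescaling by the weighting function $\kappa^\theta$. First I would define $T_\sigma \colon \vV \to \RR^\XX$ by
\begin{equation*}
    T_\sigma v (s,z)
    = \left\{ r(s,\sigma(s,z),z)
      + \beta \left[ \int v(\sigma(s,z),z') P(z, \diff z') \right]^{1/\theta} \right\}^\theta,
\end{equation*}
and check that $T_\sigma$ maps $\vV$ into itself. This requires verifying $w_1 \leq T_\sigma v \leq w_2$ whenever $w_1 \leq v \leq w_2$. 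The lower bound uses \eqref{eq:a-ext-if} and \eqref{eq:a-ext-ik}: since $v \geq w_1 = (L-\delta)^\theta \kappa$ and $1/\theta > 0$, monotonicity of $t \mapsto (a + \beta t^{1/\theta})^\theta$ (valid since $\theta > 1$, so the base is increasing) together with $\int \kappa(\sigma(s,z),z') P(z,\diff z') \geq d\,\kappa(s,z)$ gives a bound of the form $T_\sigma v(s,z) \geq \{L\kappa(s,z) + \beta((L-\delta)^\theta d\,\kappa(s,z))^{1/\theta}\}^\theta$, which dominates $(L-\delta)^\theta \kappa(s,z)$ after noting $L > L - \delta$ and $d \geq 0$. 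The upper bound uses \eqref{eq:a-ext-sf} and \eqref{eq:a-ext-sk}: from $v \leq w_2 = (M/(1-\beta c^{1/\theta}))^\theta \kappa^\theta$ we get $\int v(\sigma(s,z),z')P(z,\diff z') \leq (M/(1-\beta c^{1/\theta}))^\theta \int \kappa(\sigma(s,z),z')^\theta P(z,\diff z') \leq (M/(1-\beta c^{1/\theta}))^\theta c\,\kappa(s,z)^\theta$, and substituting into the aggregator, together with $r \leq M\kappa$ and a short algebraic simplification exploiting the geometric-series identity $M + \beta c^{1/\theta} \cdot M/(1-\beta c^{1/\theta}) = M/(1-\beta c^{1/\theta})$, yields $T_\sigma v(s,z) \leq w_2(s,z)$. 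Borel measurability of $T_\sigma v$ is routine given measurability of $\sigma$, $r$, and $\kappa$.

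Next I would transfer the problem to an unweighted setting by conjugating with division by $\kappa^\theta$. Define $\Phi(v) := v/\kappa^\theta$, which is an order isomorphism from $\vV$ onto the order interval $[\Phi w_1, \Phi w_2]$ inside the bounded functions on $\XX$ equipped with the supremum norm; note $\Phi w_1$ and $\Phi w_2$ are bounded because $w_1, w_2$ are $\kappa^\theta$-bounded, and they are bounded away from zero and infinity respectively (indeed $\Phi w_1 = (L-\delta)^\theta \kappa^{1-\theta} \geq $ something positive only if $\kappa$ is bounded, so instead I would work directly with the conjugated operator $\tilde T_\sigma := \Phi \circ T_\sigma \circ \Phi^{-1}$ on the order interval $[\Phi w_1, \Phi w_2]$ of $bc\,\XX$). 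The key properties to establish for $\tilde T_\sigma$ are: (i) monotonicity, inherited from monotonicity of $T_\sigma$, which in turn follows from \eqref{eq:mon}-type reasoning and the increasingness of the scalar map $t \mapsto (a+\beta t^{1/\theta})^\theta$; and (ii) concavity, i.e., $\tilde T_\sigma(\lambda u + (1-\lambda)w) \geq \lambda \tilde T_\sigma u + (1-\lambda)\tilde T_\sigma w$. Property (ii) follows from value-concavity of the aggregator $H$ established in section~\ref{sss:theta>1} (concavity of $\psi$ from \eqref{eq:psi} for $\theta > 1$, plus linearity of the integral), combined with the fact that $\Phi$ and $\Phi^{-1}$ are both linear, hence affine, hence preserve concavity.

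Finally, I would invoke the fixed point theorem for monotone concave operators (the concave analogue of the result behind theorem~\ref{t:bkvx}, i.e.\ \cite{du1990}) to conclude that $\tilde T_\sigma$ has a unique fixed point in the order interval $[\Phi w_1, \Phi w_2]$. The hypothesis needed is a strict sub/super-solution gap, which comes from the strict inequalities baked into the choice of $w_1$ (via $\delta > 0$ and $L - \delta < L$) and $w_2$; specifically one checks $\tilde T_\sigma (\Phi w_1) \geq \Phi w_1 + \epsilon'$ for some $\epsilon' > 0$ and $\tilde T_\sigma(\Phi w_2) \leq \Phi w_2$, which is just the conjugated form of the sub/supersolution inequalities I verified in the first step. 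Pulling the unique fixed point back through $\Phi^{-1}$ gives the unique $\sigma$-value function $\hat v_\sigma \in \vV$. The main obstacle I anticipate is not any single step but the bookkeeping around the weighting function: making sure the conjugation by $\kappa^\theta$ is done on the correct object so that boundedness in the relevant norm actually holds, and that the strict-gap condition survives the rescaling; the concavity and monotonicity transfer cleanly because both $\Phi$ and the scalar aggregator behave well, but one must be careful that $c^{1/\theta} < 1/\beta$ (guaranteed by $c < 1/\beta^\theta$) is exactly what makes the denominator $1 - \beta c^{1/\theta}$ positive and the upper-bound algebra close.
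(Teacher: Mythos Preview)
Your approach is correct and essentially the same as the paper's. The paper applies Du's theorem directly to $T_\sigma$ on $\vV$, viewing $\vV$ as an order interval in the weighted Banach space $b_{\kappa^\theta} m\XX$ and formulating the strict-gap condition as $T_\sigma w_1 \geq w_1 + \epsilon\,\kappa^\theta$ (lemma~\ref{l:uls-ext}); you instead conjugate by the linear order isomorphism $\Phi(v)=v/\kappa^\theta$ to the unweighted sup-norm space and apply Du's theorem there. Since $\Phi$ is an isometric order isomorphism between these two settings, your condition $\tilde T_\sigma(\Phi w_1)\geq \Phi w_1+\epsilon'$ is precisely $\Phi$ applied to the paper's condition, and concavity and monotonicity transfer because $\Phi$ is linear and order preserving. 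The paper's route saves the conjugation bookkeeping; yours makes explicit why the weighting works. Two minor remarks: your passing concern that $\Phi w_1=(L-\delta)^\theta\kappa^{1-\theta}$ need not be bounded away from zero is immaterial, since Du's theorem requires only the strict-gap condition on the operator, not interior-of-cone membership of the endpoints; and the conjugated order interval lives in $bm\XX$, not $bc\XX$, since $\vV$ consists of Borel measurable functions.
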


From this foundation, the minimum cost function $\hat v^*$ associated with this planning
problem is defined at $(s,z) \in \SS \times \ZZ$ by 
    $\hat v^*(s,z) 
    = \inf_{\sigma \in \Sigma} \hat v_\sigma (s,z)$.
A function $\hat v \in \vV$ is said to satisfy the Bellman equation if
\begin{equation*}
    \hat v(s,z) 
    = \min_{y \in \Gamma(s,z)}
      \left\{ r(s,y,z)
      + \beta  \left[ \int \hat v(y,z') P(z, \diff z')
      \right]^{1/\theta} \right\}^\theta.
\end{equation*}
In this connection, the corresponding Bellman operator $S$ on $\cC$ is defined through 
\begin{equation*}
    S \hat v(s,z) 
    = \inf_{y \in \Gamma(s,z)}
      \left\{ r(s,y,z)
      + \beta  \left[ \int \hat v(y,z') P(z, \diff z')
      \right]^{1/\theta} \right\}^\theta.
\end{equation*}

Analogous to the result in bounded case (see, e.g., section~\ref{sss:theta>1}), we have

\begin{theorem}
    \label{t:bkcv-ext}
    If assumptions~\ref{a:app-ext} holds, then 
    \begin{enumerate}
        \item The minimum cost function $\hat v^*$ lies in $\cC$ 
            and is the unique solution of the Bellman equation~\eqref{eq:v-hat-min} in that set.
        \item If $\hat v$ is in $\cC$, then $S^n \hat v \to \hat v^*$ as $n \to \infty$.
        \item A policy $\sigma$ in $\Sigma$ is optimal if and only if 
            \begin{equation*}
                \sigma(s,z) \in \argmin_{y \in \Gamma(s,z)} H((s,z), y, \hat v^*)
                \: \text{ for all } (s,z) \in \SS \times \ZZ.
            \end{equation*}
        \item At least one optimal policy exists.
    \end{enumerate}
\end{theorem}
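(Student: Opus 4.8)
The plan is to reduce Theorem~\ref{t:bkcv-ext} to the bounded-reward result, Theorem~\ref{t:bkcv}, by rescaling candidate value functions with the weight $\kappa^\theta$. Note first that the abstract framework behind Theorem~\ref{t:bkcv} only requires $\XX$ and $\AA$ to be separable metric spaces, so the failure of compactness of $\SS \times \ZZ$ is not itself an obstacle; the role played by compactness in the bounded applications (extracting a uniform bound $|r| \leq M$) is here taken over by the growth conditions~\eqref{eq:a-ext-sf}--\eqref{eq:a-ext-ik}. I would define $\Phi v := v/\kappa^\theta$. Since $\kappa$ is continuous with $\kappa \geq 1$ and $\theta > 1$, $\Phi$ is an order isomorphism of $\vV$ onto the order interval $\tilde\vV := \{ g \in \RR^\XX : \tilde w_1 \leq g \leq \tilde w_2 \}$, where $\tilde w_1 := \Phi w_1 = (L-\delta)^\theta \kappa^{1-\theta}$ is continuous with values in $(0, (L-\delta)^\theta]$ and $\tilde w_2 := \Phi w_2 = (M/(1-\beta c^{1/\theta}))^\theta$ is a positive constant, so that $\tilde w_1, \tilde w_2$ are admissible (bounded, continuous, $\tilde w_1 \leq \tilde w_2$) bracketing functions and $\Phi$ restricts to a bijection of $\cC$ onto the continuous functions $\tilde\cC$ in $\tilde\vV$. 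Correspondingly I would put $\tilde H((s,z), y, g) := \kappa(s,z)^{-\theta}\, H((s,z), y, \kappa^\theta g)$. A direct check shows $g$ is a $\sigma$-value function for $\tilde H$ in $\tilde\vV$ if and only if $\Phi^{-1}g$ is one for $H$ in $\vV$, and, because scaling the right-hand side by the positive factor $\kappa(s,z)^\theta$ at each fixed $(s,z)$ changes neither orderings nor argmins, optimality of policies, solutions of the Bellman equation and the argmin characterization all transfer between the two problems, and $\Phi S = \tilde S \Phi$ for the respective Bellman operators. Hence it suffices to verify that $\tilde H$ with brackets $\tilde w_1, \tilde w_2$ satisfies Assumptions~\ref{a:ath} and~\ref{a:ccv}, after which Proposition~\ref{p:regp-ccv} and Theorem~\ref{t:bkcv}, applied to the transformed problem and pushed back through $\Phi^{-1}$, deliver Proposition~\ref{p:regp-ccv-ext} and all four parts of Theorem~\ref{t:bkcv-ext} (the uniform convergence in Theorem~\ref{t:bkcv}(b) giving the asserted $S^n\hat v \to \hat v^*$, here in the $\kappa^\theta$-weighted sense, in particular pointwise).

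For Assumption~\ref{a:ath}: part (a) is unchanged; part (c) for $\tilde H$ is immediate from monotonicity of $H$ and $\kappa^\theta > 0$; the measurability half of (b) is clear. Part (d), together with part (b) of Assumption~\ref{a:ccv}, is the weighted analogue of the computation in section~\ref{sss:theta>1}. Using~\eqref{eq:a-ext-if},~\eqref{eq:a-ext-ik} and monotonicity of $t \mapsto t^{1/\theta}$ and $t \mapsto t^\theta$ on $[0,\infty)$,
\[
    H((s,z),y,w_1)
    \geq \bigl\{ L\kappa(s,z) + \beta (L-\delta) d^{1/\theta}\kappa(s,z)^{1/\theta} \bigr\}^\theta
    \geq L^\theta \kappa(s,z)^\theta \geq w_1(s,z),
\]
while, using~\eqref{eq:a-ext-sf},~\eqref{eq:a-ext-sk} and the identity $M + \beta c^{1/\theta} M/(1-\beta c^{1/\theta}) = M/(1-\beta c^{1/\theta})$,
\[
    H((s,z),y,w_2)
    \leq \bigl\{ M\kappa(s,z) + \beta c^{1/\theta}\tfrac{M}{1-\beta c^{1/\theta}}\kappa(s,z) \bigr\}^\theta
    = \Bigl( \tfrac{M}{1-\beta c^{1/\theta}} \Bigr)^{\!\theta} \kappa(s,z)^\theta = w_2(s,z).
\]
Dividing by $\kappa(s,z)^\theta$ gives the bracketing inequalities for $\tilde H$; moreover, since $\tilde w_1 \leq (L-\delta)^\theta$, the first display yields $\tilde H((s,z),y,\tilde w_1) = \kappa(s,z)^{-\theta}H((s,z),y,w_1) \geq L^\theta = (L-\delta)^\theta + \epsilon \geq \tilde w_1(s,z) + \epsilon$ with $\epsilon := L^\theta - (L-\delta)^\theta > 0$, which is part (b) of Assumption~\ref{a:ccv}. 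Finally, value-concavity of $\tilde H$ follows from that of $H$, since $\tilde H(\cdot,\lambda g_1 + (1-\lambda)g_2) = \kappa^{-\theta}H(\cdot, \lambda\kappa^\theta g_1 + (1-\lambda)\kappa^\theta g_2) \geq \kappa^{-\theta}[\lambda H(\cdot,\kappa^\theta g_1) + (1-\lambda)H(\cdot,\kappa^\theta g_2)]$, and value-concavity of $H$ is, exactly as in section~\ref{sss:theta>1}, a consequence of concavity of the scalar map $\psi$ from~\eqref{eq:psi} (valid since $\theta > 1$) together with linearity of the integral.

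I expect the one genuinely new point, and the main obstacle, to be the continuity half of Assumption~\ref{a:ath}(b) for $\tilde H$ on $\tilde\cC$ — equivalently, continuity on $\SS \times \ZZ$ of $(y,z) \mapsto \int v(y,z')\, P(z, \diff z')$ for $v \in \cC$. Since such $v$ are only $\kappa^\theta$-bounded, the plain Feller property of $P$ does not suffice, and this is where the last hypothesis of Assumption~\ref{a:app-ext}, continuity of $(y,z) \mapsto \int \kappa(y,z')^\theta P(z, \diff z')$, is used: combining it with the domination $|v| \leq C\kappa^\theta$ for a suitable constant $C$, a standard uniform-integrability and dominated-convergence argument upgrades Feller continuity to the required weighted statement. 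I would isolate this as a lemma in the appendix, parallel to the generalized-Feller lemmas used for the earlier applications. Everything else — the order-isomorphism bookkeeping of the first paragraph and the weighted bracketing estimates above, which are just the calculations of section~\ref{sss:theta>1} carried through with $\kappa$ and $\kappa^\theta$ in place of constants — is routine.
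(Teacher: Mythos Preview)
Your proposal is correct and takes a genuinely different organizational route from the paper. The paper works directly in the weighted Banach space $b_{\kappa^\theta}m(\SS\times\ZZ)$: it re-verifies Du's fixed-point hypotheses for $T_\sigma$ and for $S$ in that space (lemmas~\ref{l:ic-ext}--\ref{l:s-ext}), with the strict lower solution condition phrased as $H(\cdot,w_1) \geq w_1 + \epsilon\,\kappa^\theta$ and the continuity step isolated as lemma~\ref{l:conti-ext} (which appeals to a weighted-Feller result, exactly as you anticipate), and then replays the dual of theorem~\ref{t:ccv-app}. You instead conjugate by $\Phi = \kappa^{-\theta}$ to land in a bounded problem and invoke theorem~\ref{t:bkcv} off the shelf. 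At the computational level the two are the same: your bracketing estimates and your $\epsilon = L^\theta - (L-\delta)^\theta$ are precisely what one gets by dividing the paper's condition~(SL) in lemma~\ref{l:uls-ext} through by $\kappa^\theta$, and your continuity lemma is the paper's lemma~\ref{l:conti-ext}. What your packaging buys is economy --- no need to reprove asymptotic stability of $T_\sigma$ and $S$ in a new norm or to rerun the argument of theorem~\ref{t:ccv-app} --- while the paper's packaging makes more transparent that Du's theorem applies unchanged in any Banach lattice and would adapt more readily to weight structures that do not factor cleanly through the aggregator.
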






\section{Appendix}

Let $m\XX$ represent all Borel measurable functions in $\RR^\XX$ and let
$c\XX$ denote all continuous functions in $m\XX$.  Let $bm\XX$ be the bounded
functions in $m\XX$ and let $bc\XX$ be the continuous functions in $bm\XX$.
Let $m\XX_+$ and $m\XX_{++}$ be the non-negative and positive functions in
$m\XX$, respectively.
Recall that a self map $A$ on a convex subset $M$ of $bm\XX$ is called
\begin{itemize}
    \item \emph{asymptotically stable} on $M$ if $A$ has a unique fixed point
        $v^*$ in $M$ and $A^n v \to v^*$ as $n \to \infty$ whenever $v \in M$,
    \item \emph{isotone} if $A v \leq A v'$ whenever $v, v' \in M$ with $v \leq v'$, 
    \item \emph{convex} if $A (\lambda v + (1-\lambda) v') \leq \lambda A v +
        (1-\lambda) A v'$ whenever $v, v' \in M$ and $0 \leq \lambda \leq 1$, and
    \item \emph{concave} if $A (\lambda v + (1-\lambda) v') \geq \lambda A v +
        (1-\lambda) A v'$ whenever $v, v' \in M$ and $0 \leq \lambda \leq 1$.
\end{itemize}

For $f, g \in bm\XX$, the statement $f \ll g$ means that there exists an
$\epsilon > 0$ such that $f(x) \leq g(x) - \epsilon$ for all $x \in \XX$.

For each $\sigma \in \Sigma$, we define the \emph{$\sigma$-value operator} $T_\sigma$ on $\vV$ by
\begin{equation}
    \label{eq:Tsig}
    T_\sigma v (x) := H(x, \sigma(x), v)
    \quad \text{for all } x \in \XX , \, v \in \vV.
\end{equation}
Stating that $v_\sigma \in \vV$ solves~\eqref{eq:vsig} is equivalent to
stating that $v_\sigma$ is a fixed point of $T_\sigma$.  By lemma~\ref{l:cst},
the operator $T_\sigma$ is a well defined self-map on $\vV$.

\subsection{Proofs for the Convex Case}

\begin{lemma}
    \label{l:sts}
    If assumption~\ref{a:cvx} holds, then, for each $\sigma \in \Sigma$, the
    operator $T_\sigma$ is asymptotically stable on $\vV$.
\end{lemma}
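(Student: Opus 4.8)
The plan is to verify the hypotheses of Du's fixed point theorem for monotone convex operators applied to the self-map $T_\sigma$ on the order interval $\vV = [w_1, w_2]$. The three ingredients I would check are: (i) $T_\sigma$ is isotone, (ii) $T_\sigma$ is convex, and (iii) $T_\sigma$ admits $w_1$ as a lower solution and a \emph{strict} upper solution in the sense that $T_\sigma w_2 \ll w_2$. Given these, Du's theorem yields a unique fixed point of $T_\sigma$ in $\vV$ together with geometric convergence $T_\sigma^n v \to v_\sigma$ uniformly for every $v \in \vV$, which is exactly asymptotic stability on $\vV$.

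First I would record that $T_\sigma$ is a well-defined self-map on $\vV$, which is Lemma~\ref{l:cst} (already noted after \eqref{eq:Tsig}). Isotonicity of $T_\sigma$ is immediate from condition (c) of assumption~\ref{a:ath}: if $v \leq v'$ then $H(x,\sigma(x),v) \leq H(x,\sigma(x),v')$ pointwise, i.e.\ $T_\sigma v \leq T_\sigma v'$. Convexity of $T_\sigma$ is immediate from value-convexity of $H$ (part (a) of assumption~\ref{a:cvx}): for $v, v' \in \vV$ and $\lambda \in [0,1]$,
\begin{equation*}
    T_\sigma(\lambda v + (1-\lambda) v')(x)
    = H(x, \sigma(x), \lambda v + (1-\lambda) v')
    \leq \lambda H(x, \sigma(x), v) + (1-\lambda) H(x, \sigma(x), v')
    = \lambda T_\sigma v(x) + (1-\lambda) T_\sigma v'(x).
\end{equation*}
For the boundary conditions, \eqref{eq:uls} gives $w_1(x) \leq H(x,\sigma(x),w_1) = T_\sigma w_1(x)$, so $w_1 \leq T_\sigma w_1$; and part (b) of assumption~\ref{a:cvx} gives $T_\sigma w_2(x) = H(x,\sigma(x),w_2) \leq w_2(x) - \epsilon$ for all $x$, i.e.\ $T_\sigma w_2 \ll w_2$. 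Thus $w_1$ and $w_2$ bracket $T_\sigma$ with strict upper inequality.

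I would then invoke the fixed point theorem for isotone convex self-maps of an order interval that are bounded below by the lower endpoint and strictly bounded above by the upper endpoint — this is the Du-type result cited in the discussion after theorem~\ref{t:bkvx} — to conclude that $T_\sigma$ has a unique fixed point $v_\sigma$ in $\vV$ and that $\|T_\sigma^n v - v_\sigma\| \leq \lambda^n K$ for constants $\lambda \in (0,1)$, $K \in \RR$ independent of $v \in \vV$; in particular $T_\sigma^n v \to v_\sigma$ for every $v \in \vV$, so $T_\sigma$ is asymptotically stable on $\vV$. The main obstacle — really the only nontrivial point — is making sure the abstract fixed point theorem is applied on the correct ambient space: $\vV$ consists of bounded Borel measurable functions (not necessarily continuous), the order interval $[w_1,w_2]$ in $bm\XX$ is convex and closed under the supremum norm, and the strict inequality $T_\sigma w_2 \ll w_2$ is precisely the condition that substitutes for a contraction hypothesis and drives the geometric rate. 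Once the theorem's hypotheses are matched to (i)–(iii) above, the conclusion is immediate; no further estimation is needed. This lemma will presumably serve as the per-policy building block for proposition~\ref{p:regp} and, in turn, theorem~\ref{t:bkvx}.
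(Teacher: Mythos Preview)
Your proposal is correct and follows essentially the same route as the paper's proof: verify isotonicity and convexity of $T_\sigma$ from \eqref{eq:mon} and value-convexity, check $w_1 \leq T_\sigma w_1$ and $T_\sigma w_2 \ll w_2$ from \eqref{eq:uls} and assumption~\ref{a:cvx}(b), then invoke Du's theorem 3.1 on the order interval $\vV$. Your remarks about the ambient space and the geometric rate are also in line with how the paper deploys the result downstream.
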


\begin{proof}[Proof of lemma~\ref{l:sts}]
    Fix $\sigma \in \Sigma$.  We aim to apply theorem 3.1 of \cite{du1990}.  To this end, 
    it is sufficient to show that 
    \begin{enumerate}
        \item[(i)] $T_\sigma$ is isotone and convex on $\vV$.
        \item[(ii)] $T_\sigma w_1 \geq w_1$ and $T_\sigma w_2 \ll w_2$.
    \end{enumerate}
    Regarding condition (i), pick any $v, v' \in \vV$ with $v \leq v'$. For
    fixed $x \in \XX$,
    we have 
    \begin{equation*}
        T_\sigma v(x) = H(x, \sigma(x), v) 
         \leq H(x, \sigma(x), v') = T_\sigma v'(x),
    \end{equation*}
    by \eqref{eq:mon}.  Hence, isotonicity of $T_\sigma$ holds.

    To see that $T_\sigma$ is convex, fix $v, v' \in \vV$ and $\lambda \in [0,1]$.
    For any given $x \in \XX$, we have
    \begin{align*}
        T_\sigma (\lambda v + (1-\lambda) v') (x)
         &= H(x, \sigma(x), \lambda v + (1-\lambda) v') \\
         &\leq \lambda H(x, \sigma(x), v) + (1-\lambda) H(x, \sigma(x), v') \\
         &= \lambda T_\sigma v(x) + (1-\lambda) T_\sigma v'(x),
    \end{align*}
    where the inequality directly follows from part (a) of assumption~\ref{a:cvx}.
    Since $x \in \XX$ was arbitrary, the convexity of $T_\sigma$ follows.

    The first part of condition (ii)  follows directly from~\eqref{eq:uls}, since, for each $x \in \XX$,
    \begin{equation*}
        T_\sigma w_1 (x) = H(x, \sigma(x), w_1) \geq w_1(x).
    \end{equation*}
    To see that the second part of condition (ii) is satisfied, it follows from part (b) of assumption~\ref{a:cvx} that
    \begin{equation*}
        T_\sigma w_2 (x) = H(x, \sigma(x), w_2)
         \leq w_2(x) - \epsilon
    \end{equation*}
    for each $x \in \XX$ and for some $\epsilon >0$.  Hence $w_2 \gg T_\sigma w_2$, as was to be shown.
\end{proof}

\begin{proof}[Proof of proposition~\ref{p:regp}]
    This follows directly from lemma~\ref{l:sts}.
\end{proof}

Given $v \in \vV$, a policy $\sigma$ in $\Sigma$ will be called
\emph{$v$-maximal-greedy} if
\begin{equation}
    \label{eq:dvg}
    \sigma(x) \in \argmax_{a \in \Gamma(x)} H(x, a, v)
    \; \text{ for all } x \in \XX.
\end{equation}

\begin{lemma}
    \label{l:egp}
    If $v \in \cC$, then there exists at least one $v$-maximal-greedy policy.
\end{lemma}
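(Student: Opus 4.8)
The plan is to apply Berge's theorem of the maximum, exactly as the paper already invoked it to show that $Tv \in \cC$. The key observation is that the existence of a $v$-maximal-greedy policy is precisely the assertion that the argmax correspondence $x \mapsto \argmax_{a \in \Gamma(x)} H(x, a, v)$ admits a measurable (indeed continuous) selection.

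First I would fix $v \in \cC$ and set $g(x, a) := H(x, a, v)$ on $\GG$. By part (b) of assumption~\ref{a:ath}, since $v \in \cC$, the map $g$ is continuous on $\GG$. By part (a) of assumption~\ref{a:ath}, $\Gamma$ is nonempty, compact valued and continuous. Berge's theorem of the maximum then tells us that the correspondence
\begin{equation*}
    \Gamma^*(x) := \argmax_{a \in \Gamma(x)} g(x, a)
    = \setntn{a \in \Gamma(x)}{g(x, a) = Tv(x)}
\end{equation*}
is nonempty, compact valued and upper hemicontinuous; in particular $\Gamma^*(x) \neq \varnothing$ for every $x \in \XX$. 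Second, I would invoke a measurable selection theorem: since $\XX$ and $\AA$ are separable metric spaces and $\Gamma^*$ is a nonempty, compact-valued, upper hemicontinuous (hence closed-graph, hence Borel measurable) correspondence, the Kuratowski--Ryll-Nardzewski selection theorem yields a Borel measurable map $\sigma \colon \XX \to \AA$ with $\sigma(x) \in \Gamma^*(x)$ for all $x$. Finally, since $\sigma(x) \in \Gamma^*(x) \subseteq \Gamma(x)$ and $\sigma$ is Borel measurable, $\sigma \in \Sigma$, and by construction $\sigma$ satisfies~\eqref{eq:dvg}, so $\sigma$ is $v$-maximal-greedy.

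I do not expect a serious obstacle here; the result is essentially a packaging of standard maximum-theorem plus measurable-selection machinery, and the hypotheses in assumption~\ref{a:ath} were clearly designed to make exactly this work. The one point requiring a little care is the measurable selection step: one must confirm that the topological hypotheses on $\XX$ and $\AA$ (separable metric) suffice, and that upper hemicontinuity with compact values gives a closed graph and hence Borel measurability of $\Gamma^*$, so that a standard selection theorem applies. If one prefers to avoid citing measurable selection, an alternative is to note that Berge's theorem actually delivers a \emph{continuous} selection whenever $\Gamma^*$ happens to be single valued, and in the general case one can appeal directly to the version of the maximum theorem with measurable selections (as in, e.g., the dynamic programming literature); either route closes the argument.
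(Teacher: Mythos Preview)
Your proposal is correct and follows essentially the same route as the paper: the paper's own proof fixes $v \in \cC$, invokes the compactness and continuity conditions of assumption~\ref{a:ath} to guarantee a maximizer at each $x$, and then cites theorem~18.19 of Aliprantis--Border (a measurable maximum theorem packaging Berge plus measurable selection) to obtain a Borel measurable selector. Your decomposition into Berge's theorem followed by Kuratowski--Ryll-Nardzewski is just the unpacked version of that citation.
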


\begin{proof}
    Fixing $v \in \cC$ and using the compactness and continuity conditions in
    assumption~\ref{a:ath}, we can choose for each $x \in \XX$ an action
    $\sigma(x) \in \Gamma(x)$ such that \eqref{eq:dvg} holds.
    The map $\sigma$ constructed in this manner
    can be chosen to be Borel measurable by 
    theorem 18.19 of \cite{aliprantis2006border}.
\end{proof}

\begin{lemma}
    \label{l:st}
    If assumption~\ref{a:cvx} holds, then $T$ is asymptotically stable on
    $\cC$.
\end{lemma}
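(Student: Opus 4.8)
The plan is to mirror the proof of lemma~\ref{l:sts}, applying theorem 3.1 of \cite{du1990}, but now to the Bellman operator $T$ on the set $\cC$ rather than to a $\sigma$-value operator on $\vV$. Note first that $\cC$ is nonempty (it contains $w_1$ and $w_2$) and that, as recorded in the main text, Berge's theorem of the maximum ensures $T$ maps $\cC$ into itself. As in lemma~\ref{l:sts}, it then suffices to verify that (i) $T$ is isotone and convex on $\cC$, and (ii) $T w_1 \geq w_1$ and $T w_2 \ll w_2$.

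Conditions (ii) and isotonicity are immediate. Isotonicity follows from \eqref{eq:mon}: if $v \leq v'$ then $H(x,a,v) \leq H(x,a,v')$ for all $(x,a) \in \GG$, and taking the maximum over $a \in \Gamma(x)$ preserves the inequality. For the boundary inequalities, the first relation in \eqref{eq:uls} gives $T w_1(x) = \max_{a \in \Gamma(x)} H(x,a,w_1) \geq w_1(x)$, while part (b) of assumption~\ref{a:cvx} gives $H(x,a,w_2) \leq w_2(x) - \epsilon$ uniformly over $(x,a) \in \GG$, so $T w_2(x) \leq w_2(x) - \epsilon$ for all $x$, i.e., $T w_2 \ll w_2$.

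The substantive step is convexity of $T$, and this is where the convex-program hypothesis is used. The clean way to see it is that $T$ is a pointwise supremum of the $\sigma$-value operators, $Tv(x) = \sup_{\sigma \in \Sigma} T_\sigma v(x)$ for each $x$ (ranging $\sigma(x)$ over $\Gamma(x)$ recovers the defining maximum), combined with the facts that each $T_\sigma$ is convex on $\vV$ (established in the proof of lemma~\ref{l:sts}) and that convexity is preserved under pointwise suprema. Alternatively, and without appealing to measurable selections, one argues directly: for $v_1, v_2 \in \cC$ and $\lambda \in [0,1]$, value-convexity of $H$ gives $H(x,a,\lambda v_1 + (1-\lambda) v_2) \leq \lambda H(x,a,v_1) + (1-\lambda) H(x,a,v_2) \leq \lambda T v_1(x) + (1-\lambda) T v_2(x)$ for every $a \in \Gamma(x)$, and taking the maximum over $a$ yields $T(\lambda v_1 + (1-\lambda) v_2) \leq \lambda T v_1 + (1-\lambda) T v_2$.

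With (i) and (ii) verified, theorem 3.1 of \cite{du1990} yields a unique fixed point $v^*$ of $T$ in $\cC$ together with $T^n v \to v^*$ for all $v \in \cC$, which is precisely asymptotic stability of $T$ on $\cC$. I do not anticipate a real obstacle: the argument runs parallel to lemma~\ref{l:sts}, and the only genuinely new ingredient is the convexity of $T$, for which either the supremum-stability observation or the short direct estimate above does the job; one should also double-check that $T$ is a bona fide self-map on $\cC$ (already supplied by Berge) and that the hypotheses of \cite{du1990} apply on an interval such as $\cC$, not only on $\vV$.
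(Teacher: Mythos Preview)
Your proposal is correct and follows essentially the same route as the paper: verify isotonicity and convexity of $T$ on $\cC$ together with $Tw_1 \geq w_1$ and $Tw_2 \ll w_2$, then invoke theorem~3.1 of \cite{du1990}. The paper uses precisely your ``direct'' argument for convexity (bounding $H(x,a,\lambda v + (1-\lambda)v')$ above by $\lambda Tv(x) + (1-\lambda)Tv'(x)$ and then maximizing over $a$); your supplementary observation that $T = \sup_\sigma T_\sigma$ inherits convexity from the $T_\sigma$ is a nice alternative but not the path the paper takes.
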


\begin{proof}[Proof of lemma~\ref{l:st}]
    In order to apply theorem 3.1 of \cite{du1990},
    it suffices to show that
    \begin{enumerate}
        \item[(i)] $T$ is isotone and convex on $\cC$.
        \item[(ii)] $Tw_1 \geq w_1$ and $Tw_2 \ll w_2$.
    \end{enumerate}
    The isotonicity of $T$ on $\cC$ is obvious, since,
    by the monotonicity restriction~\eqref{eq:mon},
    \begin{align*}
        v \leq v'
        &\implies \max_{a \in \Gamma(x)} H(x, a, v)
        \leq \max_{a \in \Gamma(x)} H(x, a, v)
        \quad \text{for all } x \in \XX.
    \end{align*}
    In other words, by definition of $T$,
    $v \leq v'$ implies $Tv \leq Tv'$.

    To show the convexity of $T$, 
    fix $v, v' \in \cC$ and $\lambda \in [0,1]$.
    For any given $(x, a) \in \GG$, 
    we have, by part (a) of assumption~\ref{a:cvx},
    \begin{align*}
        H(x,a, \lambda v + (1-\lambda) v')
        &\leq \lambda H(x,a,v) + (1-\lambda) H(x,a,v') \\
        &\leq \lambda \max_{a \in \Gamma(x)} H(x,a,v) 
          + (1-\lambda) \max_{a \in \Gamma(x)}H(x,a,v') \\
        &= \lambda Tv(x) + (1-\lambda) Tv'(x).
    \end{align*}
    Since $(x,a) \in \GG$ was arbitrary, 
    the above inequality implies
    \begin{equation*}
        \max_{a \in \Gamma(x)} H(x,a, \lambda v + (1-\lambda) v')
        \leq \lambda Tv(x) + (1-\lambda) Tv'(x)
    \end{equation*}
    for each $x \in \XX$, which in turn means that
    $T [ \lambda v + (1-\lambda) v' ] \leq \lambda Tv + (1-\lambda) Tv'$.

    The first part of condition (ii)
    follows directly from~\eqref{eq:uls}, since, for each $x \in \XX$,
    \begin{equation*}
        Tw_1(x) 
        = \max_{a \in \Gamma(x)} H(x, a, w_1)
        \geq H(x, a, w_1)
        \geq w_1 (x).
    \end{equation*}

    To see that the second part of condition (ii) is satisfied,
    it follows from part (b) of assumption~\ref{a:cvx} that
    \begin{equation*}
        T w_2(x) 
        = \max_{a \in \Gamma(x)} H(x,a,w_2) \leq w_2 (x) - \epsilon
    \end{equation*}
    for each $x \in \XX$ and for some $\epsilon >0$.
    Hence, $T w_2 \ll w_2$, as was to be shown.
\end{proof}

\begin{theorem}
    \label{t:ccv-app}
    If $T_\sigma$ is asymptotically stable on $\vV$ for all $\sigma \in \Sigma$ and
    $T$ is asymptotically stable on $\cC$, then the conclusions of
    theorem~\ref{t:bkvx} hold.
\end{theorem}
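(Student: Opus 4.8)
The plan is to write $\bar v$ for the unique fixed point of $T$ in $\cC$ guaranteed by its asymptotic stability (so that $\|T^n v - \bar v\| \to 0$, i.e.\ $T^n v \to \bar v$ uniformly, for every $v \in \cC$), and then to prove the single identity $\bar v = v^*$. Once this is established, conclusion (a) is immediate (solutions of the Bellman equation in $\cC$ are exactly fixed points of $T$, of which $\bar v = v^*$ is the only one) and conclusion (b) is just the asymptotic stability of $T$ restated, the norm in question being the supremum norm.

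First I would prove $\bar v \geq v^*$. Fix $\sigma \in \Sigma$. Since $Tv(x) = \max_{a \in \Gamma(x)} H(x,a,v) \geq H(x,\sigma(x),v) = T_\sigma v(x)$ for every $v \in \cC$ and $x \in \XX$, and $\bar v = T\bar v$, we get $\bar v \geq T_\sigma \bar v$. Applying the isotone map $T_\sigma$ repeatedly yields the decreasing chain $\bar v \geq T_\sigma \bar v \geq T_\sigma^2 \bar v \geq \cdots$; since $\bar v \in \cC \subseteq \vV$, asymptotic stability of $T_\sigma$ on $\vV$ gives $T_\sigma^n \bar v \to v_\sigma$, and passing to the limit in the chain gives $\bar v \geq v_\sigma$. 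As $\sigma$ was arbitrary, $\bar v \geq \sup_{\sigma} v_\sigma = v^*$.

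Next I would prove $\bar v \leq v^*$ by exhibiting a policy of value $\bar v$. Since $\bar v \in \cC$, lemma~\ref{l:egp} supplies a $\bar v$-maximal-greedy policy $\sigma$, for which $T_\sigma \bar v = T \bar v = \bar v$. Thus $\bar v$ is a fixed point of $T_\sigma$ in $\vV$, and uniqueness of that fixed point (from asymptotic stability of $T_\sigma$) forces $v_\sigma = \bar v$, whence $\bar v = v_\sigma \leq v^*$. Combining the two inequalities gives $\bar v = v^*$, proving (a) and (b). For (c), the ``if'' direction reuses this argument: if $\sigma$ is $v^*$-maximal-greedy then $T_\sigma v^* = T v^* = v^*$, so $v_\sigma = v^*$ by uniqueness, and since $v^* \geq v_{\sigma'}$ for every $\sigma'$ the policy $\sigma$ is optimal. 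For ``only if'', if $\sigma$ is optimal then $v_\sigma \geq v_{\sigma'}$ for all $\sigma'$ together with $v_\sigma \leq v^*$ forces $v_\sigma = v^*$; then for each $x$, $v^*(x) = v_\sigma(x) = H(x,\sigma(x),v^*) \leq \max_{a \in \Gamma(x)} H(x,a,v^*) = Tv^*(x) = v^*(x)$, so equality holds throughout and $\sigma(x) \in \argmax_{a \in \Gamma(x)} H(x,a,v^*)$. Finally (d) follows from lemma~\ref{l:egp}, which gives a $v^*$-maximal-greedy policy, optimal by the ``if'' part of (c).

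The main obstacle I anticipate is the identification $\bar v = v^*$, and within it the inequality $\bar v \geq v_\sigma$: this requires taking the monotone limit of $T_\sigma^n \bar v$ correctly and relies essentially on the hypothesis that $T_\sigma$ is asymptotically stable on all of $\vV$, not merely on $\cC$ (the iterates $T_\sigma^n \bar v$ need not stay continuous). Everything else — the greedy-policy construction and the two implications in (c) — is routine bookkeeping once the identification is in hand.
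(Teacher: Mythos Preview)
Your proposal is correct and follows essentially the same route as the paper: identify the unique fixed point $\bar v$ of $T$ with $v^*$ by sandwiching (one direction via a greedy policy for $\bar v$, the other by iterating $T_\sigma^k \bar v \leq \bar v$ and using asymptotic stability of $T_\sigma$ on $\vV$), then read off (a)--(d). Your observation that asymptotic stability of $T_\sigma$ is needed on $\vV$ rather than $\cC$ because the iterates $T_\sigma^n \bar v$ need not remain continuous is exactly the point, and your direct appeal to the hypothesized asymptotic stability (rather than to proposition~\ref{p:regp}, which formally requires assumption~\ref{a:cvx}) is if anything cleaner than the paper's phrasing.
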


\begin{proof}
    Let $v^*$ be the maximum value function and let $\bar v$ be the unique fixed point
    of $T$ in $\cC$.  To see that $\bar v = v^*$, first observe that $\bar v \in \cC$ and hence $\bar v$ has at
    least one maximal-greedy policy $\sigma$.  For this policy we have, by definition,
    $T_\sigma \bar v(x) = T \bar v(x)$ at each $x$, from which it follows that
    $\bar v = T \bar v = T_\sigma \bar v$.  Since $T_\sigma$ is asymptotically
    stable on $\vV$, we know that its unique fixed point is $v_\sigma$, so
    $\bar v = v_\sigma$.  But then $\bar v \leq v^*$, by the definition of
    $v^*$.

    To see that the reverse inequality holds, pick any $\sigma \in \Sigma$.  We have 
    $T_\sigma \bar v \leq T \bar v = \bar v$.  Iterating on this inequality
    and using the isotonicity of $T_\sigma$ gives $T^k_\sigma \bar v \leq \bar
    v$ for all $k$.  Taking the limit with respect to $k$ and using the
    asymptotic stability of $T_\sigma$ then gives $v_\sigma \leq \bar v$.  
    Hence $v^* \leq \bar v$, and we can now conclude that $\bar v = v^*$.

    Since $\bar v \in \cC$, we have $v^* \in \cC$.    It follows that $v^*$ is the unique solution
    to the Bellman maximization equation in $\cC$, 
    and that $T^n v \to v^*$ whenever $v \in \cC$.  
    Parts (a) and (b) of theorem~\ref{t:bkvx} are now established. 

    Regarding part (c) and (d), by the definition of maximal-greedy policies, we know
    that $\sigma$ is $v^*$-maximal-greedy iff $H(x, \sigma(x), v^*) = \max_{a \in
    \Gamma(x)} H(x, a, v^*)$ for all $x \in \XX$.  Since $v^*$ satisfies the
    Bellman maximization equation, we then have
    \begin{equation*}
        \sigma \text{ is $v^*$-maximal-greedy}
        \quad \iff \quad 
        H(x, \sigma(x), v^*) = v^*(x),
            \;\; \forall \, x \in \XX.
    \end{equation*}
    But, by proposition~\ref{p:regp}, the right hand side is equivalent to the
    statement that $v^* = v_\sigma$.  Hence, by this chain of logic and the
    definition of optimality,
    \begin{equation}
        \sigma \text{ is $v^*$-maximal-greedy}
        \iff v^* = v_\sigma
        \iff \text{ $\sigma$ is optimal}
    \end{equation}
    Moreover, the fact that $v^*$ is in $\cC$ combined with lemma~\ref{l:egp} assures us that at least one $v^*$-maximal-greedy
    policy exists.  Each such policy is optimal, 
    so the set of optimal policies is nonempty.
\end{proof}

\subsection{Proofs for the Concave Case}

We prove the minimization results from the maximization results.
To begin, recall the definitions from the beginning of section~\ref{s:grs}
and let $\check w_1 := - w_2$ and $\check w_2 := - w_1$.
Clearly, $\check w_1$ and $\check w_2$ are bounded continuous functions in $\RR^\XX$ 
satisfying $\check w_1 \leq \check w_2$.
We denote by $\check \vV := - \vV$ 
the set of all Borel measurable functions $\check v$ in $\RR^\XX$ 
satisfying $\check w_1 \leq \check v \leq \check w_2$,
and let $\check \cC := -\cC$ be the continuous functions in $\check \vV$.
The \emph{conjugate} state-action aggregator 
$\check H \colon \GG \times \check \vV \to \RR$
is defined by
\begin{equation}
    \label{eq:checkH-def}
    \check H (x,a, \check v) := - H (x,a, -\check v).
\end{equation}

\begin{lemma}
    \label{l:checkH}

    If the state-action aggregator $H$ satisfies
    assumptions~\ref{a:ath} and \ref{a:ccv},
    then the conjugate aggregator $\check H$ defined in~\eqref{eq:checkH-def}
    satisfies assumptions~\ref{a:ath} and \ref{a:cvx}.
\end{lemma}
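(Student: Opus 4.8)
The plan is to exploit the fact that $\check v \mapsto -\check v$ is an order-reversing affine bijection, so that every clause required of $\check H$ is simply the image under negation of a clause already assumed for $H$. First I would dispose of the elementary bookkeeping: since $w_1 \leq w_2$ we have $\check w_1 = -w_2 \leq -w_1 = \check w_2$, and $\check w_1, \check w_2$ inherit boundedness and continuity from $w_2, w_1$; moreover $\check v \in \check\vV$ iff $-\check v \in \vV$, and $\check v \in \check\cC$ iff $-\check v \in \cC$, so $\check\vV = -\vV$ and $\check\cC = -\cC$ are again legitimate classes of candidate value functions relative to $\check w_1, \check w_2$.

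Next I would verify Assumption~\ref{a:ath} for $\check H$. Part~(a) concerns only $\Gamma$, which is unchanged. For part~(b), fix $\check v \in \check\vV$; then $-\check v \in \vV$, the map $(x,a) \mapsto H(x,a,-\check v)$ is Borel measurable on $\GG$ by Assumption~\ref{a:ath}(b) for $H$, hence so is its negative $(x,a) \mapsto \check H(x,a,\check v)$, and the continuity claim for $\check v \in \check\cC$ is identical. For part~(c), if $\check v \leq \check v'$ in $\check\vV$ then $-\check v \geq -\check v'$, so $H(x,a,-\check v) \geq H(x,a,-\check v')$ by \eqref{eq:mon}, and negating gives $\check H(x,a,\check v) \leq \check H(x,a,\check v')$. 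For part~(d), note $\check H(x,a,\check w_1) = -H(x,a,w_2)$ and $\check w_1(x) = -w_2(x)$, so the \emph{upper}-solution inequality $H(x,a,w_2) \leq w_2(x)$ in \eqref{eq:uls} is exactly $\check w_1(x) \leq \check H(x,a,\check w_1)$; symmetrically $\check H(x,a,\check w_2) = -H(x,a,w_1)$ and $\check w_2(x) = -w_1(x)$, so the \emph{lower}-solution inequality $w_1(x) \leq H(x,a,w_1)$ gives $\check H(x,a,\check w_2) \leq \check w_2(x)$.

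Then I would verify Assumption~\ref{a:cvx} for $\check H$. For part~(a), fix $(x,a) \in \GG$, $\lambda \in [0,1]$, $\check v, \check w \in \check\vV$; writing $\lambda \check v + (1-\lambda)\check w = -[\lambda(-\check v) + (1-\lambda)(-\check w)]$ and applying value-concavity of $H$ to $-\check v, -\check w \in \vV$ yields $H(x,a,\lambda(-\check v)+(1-\lambda)(-\check w)) \geq \lambda H(x,a,-\check v) + (1-\lambda)H(x,a,-\check w)$, and negating both sides gives the value-convexity inequality for $\check H$. For part~(b), by Assumption~\ref{a:ccv}(b) there is $\epsilon > 0$ with $H(x,a,w_1) \geq w_1(x) + \epsilon$ for all $(x,a) \in \GG$, and since $\check H(x,a,\check w_2) = -H(x,a,w_1)$ and $\check w_2(x) = -w_1(x)$ this is precisely $\check H(x,a,\check w_2) \leq \check w_2(x) - \epsilon$.

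I do not expect a genuine obstacle here; the argument is pure sign-tracking. The one point that requires care is pairing $\check w_1$ with $-w_2$ and $\check w_2$ with $-w_1$ (rather than the reverse), so that negation correctly swaps the roles of the lower and upper envelopes — once this is set up, the strict \emph{lower}-solution condition for $H$ in Assumption~\ref{a:ccv}(b) becomes exactly the strict \emph{upper}-solution condition for $\check H$ in Assumption~\ref{a:cvx}(b), and similarly the two halves of \eqref{eq:uls} trade places.
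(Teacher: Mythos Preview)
Your proposal is correct and follows essentially the same approach as the paper's own proof: both arguments exploit the order-reversing affine bijection $\check v \mapsto -\check v$ to translate each clause of Assumptions~\ref{a:ath} and~\ref{a:ccv} for $H$ into the corresponding clause of Assumptions~\ref{a:ath} and~\ref{a:cvx} for $\check H$. If anything, your version is slightly more explicit about the bookkeeping (the identification $\check H(x,a,\check w_1) = -H(x,a,w_2)$ and the swap of upper and lower envelopes), whereas the paper dismisses parts~(a) and~(b) of Assumption~\ref{a:ath} as ``clear''.
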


\begin{proof}[Proof of lemma~\ref{l:checkH}]
 It is clear that
 conditions (a) and (b) in assumption~\ref{a:ath} hold true for $\check H$.
 Regarding condition (c), pick any $\check v, \check v'$ in $\check \vV$
 with $\check v \leq \check v'$.
 Observe that $- \check v $ and $- \check v'$ are in $\vV$ 
 satisfying $- \check v' \leq - \check v$.
 By condition (c) of assumption~\ref{a:ath} for the aggregator $H$, we have
 \begin{align*}
    - \check v' \leq - \check v
    &\implies
    H(x,a, -\check v') \leq H(x,a, -\check v) \\
    &\iff
    -H(x,a, -\check v') \geq -H(x,a, -\check v) \\
    &\iff
    \check H(x,a, \check v') \geq \check H(x,a, \check v)
    \quad \text{for all} \quad
    (x,a) \in \GG,
 \end{align*}
 which implies that $\check H$ indeed satisfies monotonicity condition~\eqref{eq:mon}.

 Regarding the upper-lower bounds condition (d) for $\check H$, 
 invoking condition~\eqref{eq:uls} for $H$ and after some rearrangement,
 we then have $\check H(x,a, -w_1) \leq -w_1(x)$
 and $\check H(x,a, -w_2) \geq -w_2(x)$ for all $(x,a) \in \GG$.
 By the definitions of $\check w_1$ and $\check w_2$,
 it follows that $\check w_1(x) \leq \check H(x,a, \check w_1)$
 and $\check H(x,a, \check w_2) \leq \check w_2(x)$
 for all $(x,a) \in \GG$, as desired.

 It remains to show that the aggregator $\check H$ satisfies value-convexity
 and possesses a strict upper solution.
 To show the value-convexity of $\check H$, 
 fix $\check v, \check v' \in \check \vV$ and $\lambda \in [0,1]$.
 Clearly, $-\check v$ and $-\check v'$ are in $\vV$.
 Then, for any given $(x,a) \in \GG$, by the value-concavity of $H$,
 we have 
 \begin{align*}
    H(x,a, \lambda(-\check v) + (1-\lambda)(-\check v') )
    \geq \lambda H(x,a, -\check v) + (1-\lambda) H(x,a, -\check v'),
 \end{align*}
 or 
 \begin{align*}
    - H(x,a, \lambda(-\check v) + (1-\lambda)(-\check v') )
    \leq - \lambda H(x,a, -\check v) - (1-\lambda) H(x,a, -\check v'),
 \end{align*}
 It follows immediately from the definition of $\check H$ that
 $\check H(x,a, \lambda \check v + (1-\lambda) \check v') 
 \leq \lambda \check H(x,a, \check v) + (1-\lambda) \check H(x,a, \check v')$,
 as was to be shown.
 
 To verify that $\check w_2$ is a strict upper solution for $\check H$,
 we invoke part (b) of assumption~\ref{a:ccv} for $H$.
 That is, there exists an $\epsilon > 0$ such that 
 $H(x,a, w_1) \geq w_1(x) + \epsilon$ for all $(x,a) \in \GG$.
 By the definition of $\check H$, it is equivalent to stating that  
 there exists an $\epsilon > 0$ such that
 \begin{align*}
    - \check H(x,a, -w_1) \geq w_1(x) + \epsilon 
    \iff 
    \check H(x,a, -w_1) \leq - w_1(x) - \epsilon
    \quad \text{for all} \quad (x,a) \in \GG.
 \end{align*}
 Invoking the definition of $\check w_2$, 
 it is equivalent to saying that
 there exists an $\epsilon > 0$ such that
 $ \check H(x,a, \check w_2) \leq \check w_2(x) - \epsilon$ for all $(x,a) \in \GG$,
 which completes the proof of lemma~\ref{l:checkH}.
\end{proof}

In addition, we observe that the $\sigma$-value function equation~\eqref{eq:vsig}
can be expressed equivalently as
$\check v_\sigma(x) = \check H(x, \sigma(x), \check v_\sigma)$
for all $x \in \XX$
with $\check v_\sigma := - v_\sigma \in \check \vV$.
Furthermore, the minimum cost function equation~\eqref{eq:vstar-min} can
be expressed equivalently as
    $\check v^* (x) = \sup_{\sigma \in \Sigma} \check v_\sigma (x)$
with $\check v^* := - v^*$.
The Bellman equation~\eqref{eq:belleq-min} can be 
expressed equivalently as
    $\check v(x) 
    = \max_{a \in \Gamma(x)} \check H(x,a, \check v)$
with $\check v := - v \in \check \vV$.
Hence, by virtue of lemma~\ref{l:checkH},
we can apply theorem~\ref{t:bkvx} to the conjugate aggregator $\check H$
and recover the results stated in theorem~\ref{t:bkcv} for the aggregator $H$.

\subsection{Proofs for Section~\ref{ss:sap-app}}

\label{ss:prf-app-apdix}

In this section, we prove some properties of the state-action aggregator $H$
defined in section~\ref{ss:sap-app}.

For the sake of exposition, 
fix $\theta \in \Theta$,
we first define an operator $R_\theta$ on $bm(\SS \times \ZZ)_+$ by
\begin{equation}
    \label{eq:Rthe-def}
    (R_\theta w) (y, z)
    := \left[  \int w(y, z')^{\xi_1} \pi_\theta (z, \diff z')
    \right]^{1/\xi_1}
    \quad \text{for all } (y, z) \in \SS \times \ZZ.
\end{equation}
From this foundation, we then define an operator $R$ that is a map
sending $(\theta, w)$ in $\Theta \times bm(\SS \times \ZZ)_+$ into
\begin{equation}
    \label{eq:R-def}
    R w (y,z, \theta) := (R_\theta w) (y, z)
    \quad \text{for all } (y,z, \theta) \in \SS \times \ZZ \times \Theta.
\end{equation}

The following lemma shows some useful properties of the operator $R_\theta$.
\begin{lemma}
    \label{l:r-theta}
    For fixed $\theta \in \Theta$, if $\xi_1$ lies in $(0,1)$,
    then the operator $R_\theta$ defined in \eqref{eq:Rthe-def}
    is isotone and concave on $bm(\SS \times \ZZ)_+$. 

    Moreover, the function $R_\theta w$ is nonnegative, bounded, and
    Borel measurable on $\SS \times \ZZ$ whenever $w \in bm(\SS \times \ZZ)_+$
    and continuous on $\SS \times \ZZ$
    whenever $w \in bc(\SS \times \ZZ)_+$.
\end{lemma}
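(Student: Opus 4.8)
The plan is to establish the four assertions separately, fixing $\theta \in \Theta$ throughout and abbreviating $\mu_z := \pi_\theta(z, \cdot)$, which is a probability measure on $\ZZ$.  Writing, for a probability measure $\mu$ on $\ZZ$ and a bounded measurable $g \colon \ZZ \to [0, \infty)$, $N_p(g; \mu) := \big( \int g^p \diff\mu \big)^{1/p}$ with $p := \xi_1 \in (0, 1)$, we have $(R_\theta w)(y, z) = N_p\big( w(y, \cdot); \mu_z \big)$.  Isotonicity, nonnegativity, boundedness and measurability are all routine; the one genuine inequality is concavity, which I expect to be the main obstacle.

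For the routine parts: since $w \in bm(\SS \times \ZZ)_+$, say $0 \leq w \leq K$, and $\mu_z$ is a probability measure, $\int w(y, z')^{p} \mu_z(\diff z') \leq K^{p}$, so $0 \leq R_\theta w \leq K$, which gives nonnegativity and boundedness.  Isotonicity holds because $t \mapsto t^{p}$ and $s \mapsto s^{1/p}$ are increasing on $[0, \infty)$ (as $p > 0$): $w \leq w'$ forces $w(y, z')^{p} \leq w'(y, z')^{p}$ for all $z'$, and this is preserved under integration against $\mu_z$ and under the outer power, so $R_\theta w \leq R_\theta w'$.  For Borel measurability, I would first observe that $(y, z) \mapsto \int f(y, z') \pi_\theta(z, \diff z')$ is Borel measurable whenever $f$ is bounded Borel measurable on $\SS \times \ZZ$: this is immediate for $f = \1_A(y) \1_B(z')$, since then the integral equals $\1_A(y) \pi_\theta(z, B)$ and $z \mapsto \pi_\theta(z, B)$ is measurable by definition of a transition kernel, and the general case follows by a standard monotone class argument.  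Applying this with $f = w^p$ and composing with the continuous map $s \mapsto s^{1/p}$ shows $R_\theta w \in bm(\SS \times \ZZ)_+$.

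The substantive step is concavity.  Here I would invoke the reverse Minkowski inequality: for $p \in (0, 1)$, a probability measure $\mu$, and bounded measurable $g, h \colon \ZZ \to [0, \infty)$, one has $N_p(g + h; \mu) \geq N_p(g; \mu) + N_p(h; \mu)$.  A short derivation writes, on $\{g + h > 0\}$ (outside which all the relevant integrals vanish), $(g + h)^{p} = (g + h)^{p-1} g + (g + h)^{p-1} h$, applies the reverse Hölder inequality with exponent $p$ and conjugate exponent $p/(p-1)$ to each of the two resulting integrals — using $\big[ (g + h)^{p-1} \big]^{p/(p-1)} = (g + h)^{p}$ — and then rearranges, the integral $\int (g + h)^p \diff\mu$ being finite and, in the nontrivial case, positive.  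Combined with the positive homogeneity $N_p(\lambda g; \mu) = \lambda N_p(g; \mu)$ for $\lambda \geq 0$, this yields $N_p(\lambda g + (1-\lambda) h; \mu) \geq \lambda N_p(g; \mu) + (1-\lambda) N_p(h; \mu)$ for $\lambda \in [0, 1]$.  Applying the last inequality pointwise in $(y, z)$, with $g = w(y, \cdot)$, $h = w'(y, \cdot)$ and $\mu = \mu_z$, and using that the section map $w \mapsto w(y, \cdot)$ is linear, gives $R_\theta(\lambda w + (1 - \lambda) w') \geq \lambda R_\theta w + (1 - \lambda) R_\theta w'$ on $bm(\SS \times \ZZ)_+$, which is the asserted concavity.

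Finally, continuity.  Suppose $w \in bc(\SS \times \ZZ)_+$.  Since $\SS \times \ZZ$ is compact under the standing assumptions of this section, $w$ is bounded and uniformly continuous, hence so is $w^p$ (the map $t \mapsto t^p$ being uniformly continuous on a bounded interval containing the range of $w$).  Given $(y_n, z_n) \to (y, z)$, I would bound $\int w(y_n, z')^{p} \pi_\theta(z_n, \diff z') - \int w(y, z')^{p} \pi_\theta(z, \diff z')$ in absolute value by $\sup_{z' \in \ZZ} | w(y_n, z')^{p} - w(y, z')^{p} |$ — which tends to $0$ by uniform continuity of $w^p$ and $y_n \to y$ — plus $| \int w(y, z')^{p} \pi_\theta(z_n, \diff z') - \int w(y, z')^{p} \pi_\theta(z, \diff z') |$, which tends to $0$ by the Feller property of $\pi_\theta$ applied to the bounded continuous function $z' \mapsto w(y, z')^{p}$.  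Hence $(y, z) \mapsto \int w(y, z')^{p} \pi_\theta(z, \diff z')$ is continuous on $\SS \times \ZZ$, and composing with the continuous map $s \mapsto s^{1/p}$ shows $R_\theta w \in bc(\SS \times \ZZ)_+$.  This completes the plan.
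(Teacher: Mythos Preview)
Your proof is correct and follows essentially the same route as the paper: isotonicity from monotonicity of $t \mapsto t^{\xi_1}$ and its inverse, concavity via the reverse Minkowski inequality combined with positive homogeneity, and continuity from the Feller property of $\pi_\theta$. Your treatment is in fact more detailed than the paper's in several places---you supply a monotone-class argument for measurability, sketch the reverse-H\"older derivation of reverse Minkowski rather than citing Hardy--Littlewood--P\'olya, and give an explicit two-term splitting (uniform continuity plus Feller) for the joint continuity in $(y,z)$ where the paper simply asserts it.
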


\begin{proof}[Proof of lemma~\ref{l:r-theta}]
    Fix $\theta \in \Theta$.
    The isotonicity of $R_\theta$ is obvious, 
    since the scalar function $\RR_+ \ni t \mapsto t^{\xi_1} \in \RR_+$ 
    and its inverse are both strictly increasing on $\RR_+$.

    Since $\xi_1 \in (0,1)$, by Theorem 198 of \cite{hardy1934inequalities},
    we know that $R_\theta$ is super-additive in the sense that
    for any $w, w' \in m(\SS \times \ZZ)_+$,
    $R_\theta (w + w') \geq R_\theta(w) + R_\theta(w')$.\footnote{
        This result can also be reviewed as 
        the reverse Minkowski inequality,
        see, for example, Proposition 3.2 in page 225 of \cite{dibenedetto2002realanalysis}.
        }
    As a result, the super-additivity and the positive homogeneity of $R_\theta$
    together yield the concavity of $R_\theta$.\footnote{
        An operator $A$ defined on the positive cone $bm\XX_+$ of $bm\XX$
        is called \emph{positively homogeneous} (of the first degree) if
        for any $v$ in $bm\XX_+$ and any real number $t \geq 0$,
        we have $A(tv) = t Av$.
        }
    Indeed, pick any $\lambda \in [0,1]$ and $w, w' \in m(\SS \times \ZZ)_+$,
    by the convexity of $m(\SS \times \ZZ)_+$, we have
    \begin{align*}
        R_\theta [ \lambda w + (1-\lambda) w']
        & \geq R_\theta(\lambda w) + R_\theta ((1-\lambda) w')
        \quad (\text{by super-additivity}) \\
        & = \lambda R_\theta (w) + (1-\lambda) R_\theta (w')
        \quad (\text{by positive homogeneity}) ,
    \end{align*}
    as was to be shown.

    Regarding the second claim of lemma~\ref{l:r-theta}, 
    non-negativity and boundedness of $R_\theta w$ is immediate and
    it is easy to see that $R_\theta w$ is Borel measurable on $\SS \times \ZZ$
    whenever $w \in bm(\SS \times \ZZ)_+$.
    Now fix $w \in bc(\SS \times \ZZ)_+$.
    We note that the function $ w^{\xi_1}$ also lies in $bc(\SS \times \ZZ)_+$.
    Then, by virtue of Feller property of $\pi_\theta$,
    the mapping 
    $\SS \times \ZZ \ni (y,z) \mapsto \int w (y, z')^{\xi_1} 
    \pi_\theta(z, \diff z') \in \RR_+$ is bounded and continuous on $\SS \times \ZZ$.
    Furthermore, it follows that the mapping  
    $\SS \times \ZZ \ni (y,z) \mapsto 
    [\int w (y, z')^{\xi_1} \pi_\theta(z, \diff z')]^{1/\xi_1} \in \RR_+$ 
    is continuous on $\SS \times \ZZ$, 
    since the inverse of the map $t \mapsto t^{\xi_1}$
    is also continuous on $\RR_+$.
    Therefore, our claim follows.
\end{proof}

As an application of lemma~\ref{l:r-theta}, we now present the next result.

\begin{lemma}
    \label{l:R-mer}
    The operator $R$ defined in~\eqref{eq:R-def} is a well-defined map
    from $\Theta \times bm(\SS \times \ZZ)_+$ into $bm(\SS \times \ZZ \times \Theta)_+$.
\end{lemma}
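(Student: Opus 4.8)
The plan is to reduce the statement to Lemma~\ref{l:r-theta} together with the fact that $\Theta$ is finite. Concretely, I would fix $w \in bm(\SS \times \ZZ)_+$ and verify that the associated function $Rw \colon (y,z,\theta) \mapsto (R_\theta w)(y,z)$ is nonnegative, bounded, and Borel measurable on $\SS \times \ZZ \times \Theta$, which is exactly the assertion that $R$ takes values in $bm(\SS \times \ZZ \times \Theta)_+$.

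For nonnegativity, I would simply note that in the defining formula~\eqref{eq:Rthe-def} the integrand $w(y,z')^{\xi_1}$ is nonnegative and $\xi_1 > 0$, so $(R_\theta w)(y,z) \geq 0$ for every $(y,z,\theta)$. For boundedness, set $\bar m := \sup w < \infty$; since $\xi_1 > 0$ and $\pi_\theta(z, \cdot)$ is a probability measure, $\int w(y,z')^{\xi_1} \pi_\theta(z, \diff z') \leq \bar m^{\xi_1}$, hence $(R_\theta w)(y,z) \leq \bar m$, and this bound is independent of $\theta$, so $Rw$ is bounded on $\SS \times \ZZ \times \Theta$.

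For measurability I would invoke Lemma~\ref{l:r-theta}, which gives that for each fixed $\theta \in \Theta$ the slice $(y,z) \mapsto (R_\theta w)(y,z)$ is Borel measurable on $\SS \times \ZZ$. Since $\Theta$ is finite and endowed with the discrete topology, $\SS \times \ZZ \times \Theta$ decomposes as the finite disjoint union of the Borel spaces $\SS \times \ZZ \times \{\theta\}$, and a function on such a union is Borel measurable precisely when each of its restrictions is; applying this with the restrictions being the slices $R_\theta w$ yields joint Borel measurability of $Rw$. Combining the three facts shows $Rw \in bm(\SS \times \ZZ \times \Theta)_+$, proving the lemma.

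I do not expect a genuine obstacle here: the only two points needing a word of justification are that the boundedness constant $\bar m$ can be chosen uniformly in $\theta$ (immediate, and in any case automatic once $\Theta$ is finite) and that slicewise Borel measurability over a finite index set upgrades to joint Borel measurability on the product; both are routine given that $\Theta$ is finite. All the substantive work — measurability, boundedness and continuity behaviour of the single-index operator $R_\theta$ — has already been absorbed into Lemma~\ref{l:r-theta}.
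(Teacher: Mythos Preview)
Your proposal is correct and follows the same overall architecture as the paper: fix $w$, note that nonnegativity and boundedness are immediate, then use Lemma~\ref{l:r-theta} to obtain slicewise Borel measurability in $(y,z)$ for each fixed $\theta$, and finally exploit the finiteness of $\Theta$ to upgrade to joint measurability on $\SS \times \ZZ \times \Theta$.

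The only methodological difference lies in that last step. You argue directly from the disjoint-union structure: since $\Theta$ is finite and discrete, the Borel $\sigma$-algebra on $\SS \times \ZZ \times \Theta$ is generated by the pieces $\SS \times \ZZ \times \{\theta\}$, so joint measurability is equivalent to measurability of each slice. The paper instead packages the same observation as a Carath{\'e}odory argument: it notes that $Rw(y,z,\cdot)$ is automatically continuous on the finite set $\Theta$, declares $Rw$ a Carath{\'e}odory function, and then invokes Lemma~4.51 of Aliprantis--Border to conclude joint measurability. Your route is more elementary and self-contained for the finite-$\Theta$ setting at hand; the paper's route cites a general-purpose result that would continue to apply if $\Theta$ were replaced by a nontrivial separable metric space (provided one could still verify continuity in $\theta$). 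For the present lemma, both arguments are equally valid and essentially equivalent in content.
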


\begin{proof}[Proof of lemma~\ref{l:R-mer}]
    Fix $(\theta, w)$ in $\Theta \times bm(\SS \times \ZZ)_+$.
    Since boundedness and non-negativity of the function $Rw$ are obvious,
    it remains to show that $Rw$ is measurable on $\SS \times \ZZ \times \Theta$.

    On one hand, for each $\theta \in \Theta$, it follows from lemma~\ref{l:r-theta} that
    the function $R w ( \cdot, \cdot, \theta) = R_\theta w \colon \SS \times \ZZ \to \RR_+$
    is Borel measurable.
    On the other hand, for each $(y,z) \in \SS \times \ZZ$,
    the function $R w (y, z, \cdot) \colon \Theta \to \RR_+$ is continuous, 
    since $\Theta$ is a finite set (endowed with the discrete topology).

    In this connection, we conclude that 
    the function $Rw \colon \SS \times \ZZ \times \Theta \to \RR$ is a Carath{\'e}odory function,
    in the sense that 
    \begin{enumerate}
        \item[(1)] for each $\theta \in \Theta$, 
            the function $Rw(\cdot, \cdot, \theta) \colon \SS \times \ZZ \to \RR$
            is Borel measurable; and
        \item[(2)] for each $(y,z) \in \SS \times \ZZ$, 
            the function $Rw(y,z, \cdot) \colon \Theta \to \RR$ is continuous.
    \end{enumerate}
    By lemma 4.51 in \cite{aliprantis2006border}, 
    it follows that the Carath{\'e}odory function $Rw$ is jointly measurable on
    $\SS \times \ZZ \times \Theta$, as desired.
\end{proof}

In this connection, the state-action aggregator $H$ defined in~\eqref{eq:q-def-sap} 
can be simply expressed as a composition of two operators $R$ and $\tilde H$
as follows
\begin{equation}
    \label{eq:H-def}
    H((s,z), y , \hat v) 
    := \tilde H((s,z), y, R \hat v),
\end{equation}
with 
\begin{equation}
    \label{eq:tH}
    \tilde H((s,z),y,h)
    := \left\{ r(s, y, z) 
    + \beta \left\{ \int h (y, z, \theta) \mu(z, \diff \theta)
     \right\}^{1/\xi_2}
     \right\}^{\xi_2}
\end{equation}
for all $((s,z),y) \in \GG$ and $h \in bm(\SS \times \ZZ \times \Theta)_{++}$.

It is worth noting that the formula of $\tilde H$ defined in~\eqref{eq:tH}
is almost identical to that of $H$ defined in~\eqref{eq:q-ez}.
Hence, 
recalling the results associated with $H$ in section~\ref{sss:theta<0}, 
we have
\begin{lemma}
    \label{l:tq}
    If $\xi_2 < 0$,
    then $\tilde H$ defined in \eqref{eq:tH} is isotone and concave in its third argument
    on $bm(\SS \times \ZZ \times \Theta)_{++}$.

    In addition, the map $((s,z), y) \mapsto \tilde H ((s,z),y,h)$ is Borel measurable on $\GG$
    whenever $h \in bm(\SS \times \ZZ \times \Theta)_{++}$,
    and continuous on $\GG$
    whenever the map $h(\cdot, \cdot, \theta) \colon \SS \times \ZZ \to \RR_{++}$ is continuous,
    for each $\theta \in \Theta$.
\end{lemma}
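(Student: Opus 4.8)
The plan is to exploit the observation, already made in the text preceding the statement, that $\tilde H$ in~\eqref{eq:tH} has essentially the same shape as the Epstein--Zin aggregator $H$ of~\eqref{eq:q-ez}: both are a fixed scalar transform applied to an integral against a probability kernel. Concretely, for each fixed $((s,z),y)\in\GG$, setting $t:=\int h(y,z,\theta)\,\mu(z,\diff\theta)$ we have
\[
\tilde H((s,z),y,h)
= \bigl( r(s,y,z) + \beta\, t^{1/\xi_2} \bigr)^{\xi_2},
\]
which is precisely the scalar map $\psi$ of~\eqref{eq:psi} with $b:=r(s,y,z)$ and with $\xi_2$ playing the role of $\theta$. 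Since $r$ is strictly positive and $h\in bm(\SS\times\ZZ\times\Theta)_{++}$ while $\mu(z,\cdot)$ is a probability measure on the \emph{finite} set $\Theta$ (so that $t=\sum_{\theta\in\Theta}h(y,z,\theta)\,\mu(z,\{\theta\})$ is a sum of nonnegative terms with total weight one), the number $t$ is strictly positive and $b>0$; hence $\psi$ is evaluated on $(0,\infty)$ and every expression below is well defined. I would then transfer the two required facts from section~\ref{sss:theta<0}.

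For isotonicity and value-concavity in the third argument, note first that $h\mapsto\int h(y,z,\theta)\,\mu(z,\diff\theta)$ is linear and order preserving on the convex cone $bm(\SS\times\ZZ\times\Theta)_{++}$. When $\xi_2<0$, the scalar map $\psi$ of~\eqref{eq:psi} is monotone increasing and concave on $(0,\infty)$, exactly as used in section~\ref{sss:theta<0} (here both $t\mapsto t^{1/\xi_2}$ and $u\mapsto u^{\xi_2}$ are decreasing on $(0,\infty)$, so $\psi$ is increasing; its concavity is the property invoked there for value-concavity of $H$). A monotone increasing concave scalar map composed with a positive linear functional is again monotone increasing and concave, which gives the first assertion of the lemma.

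For measurability and continuity I would again use finiteness of $\Theta$ to reduce the integral to the finite sum $\sum_{\theta\in\Theta}h(y,z,\theta)\,\mu(z,\{\theta\})$. Each section $(y,z)\mapsto h(y,z,\theta)$ is Borel measurable (with $\Theta$ discrete this is automatic from joint measurability of $h$), and each $z\mapsto\mu(z,\{\theta\})$ is continuous by the standing assumption that $\mu$ is continuous in $z$; hence $((s,z),y)\mapsto t$ is Borel measurable on $\GG$, and composing with the continuous map $(a,t)\mapsto(a+\beta t^{1/\xi_2})^{\xi_2}$ on $(0,\infty)\times(0,\infty)$ (with $a=r(s,y,z)$ continuous) shows that $((s,z),y)\mapsto\tilde H((s,z),y,h)$ is Borel measurable. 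If moreover each $h(\cdot,\cdot,\theta)\colon\SS\times\ZZ\to\RR_{++}$ is continuous, then each summand is continuous, hence so is the finite sum, and after composition with the same continuous scalar transform $\tilde H(\cdot,\cdot,h)$ is continuous on $\GG$.

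The argument is essentially bookkeeping, so I do not anticipate a genuine obstacle. The two points that need attention are (i) that $t$ stays strictly positive, so that the fractional power $t^{1/\xi_2}$ and the negative power $(\cdot)^{\xi_2}$ are legitimate --- this relies on positivity of both $h$ and $r$ together with $\mu(z,\Theta)=1$ --- and (ii) that $h\mapsto\int h(y,z,\theta)\,\mu(z,\diff\theta)$ must be handled via the finite-sum representation rather than a generic convergence argument, because the measure $\mu(z,\cdot)$ itself varies with the state $z$. All remaining content is inherited from the scalar properties of $\psi$ recorded in section~\ref{sss:theta<0}.
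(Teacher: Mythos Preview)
Your proposal is correct and follows essentially the same route as the paper's own proof: both reduce the claim to properties of the scalar map $\psi(t)=(b+\beta t^{1/\xi_2})^{\xi_2}$ (monotone increasing and concave on $\RR_{++}$ when $\xi_2<0$) composed with the linear integral, and both invoke finiteness of $\Theta$ together with continuity of $\mu(z,\cdot)$ and of $r$ to obtain the measurability and continuity assertions. Your write-up is slightly more explicit about well-definedness (positivity of $t$) and about the finite-sum representation, but the substance is identical.
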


\begin{proof}[Proof of lemma~\ref{l:tq}]
    Analogous to the proofs in section~\ref{sss:theta<0},
    for any fixed $b > 0$, we consider the scalar map
        $\psi (t) := (b + \beta t^{1/\xi_2})^{\xi_2}$ where $t > 0$.
    Since $\xi_2 <0$,
    it is clear that the scalar function $\psi$ 
    is continuous, strictly increasing and strictly concave on $\RR_{++}$ (cf.
    section~\ref{sss:theta<0}).
    The first part of claim is immediate from the monotonicity and concavity of $\psi$,
    along with monotonicity and linearity of the integral.

    For the remaining part, fix $h$ in $bm(\SS \times \ZZ \times \Theta)_{++}$.
    Borel measurability of $((s,z), y) \mapsto \tilde H ((s,z),y,h)$ is obvious.
    Now fix a function $h$ satisfying that 
    the map $h(\cdot, \cdot, \theta) \colon \SS \times \ZZ \to \RR_{++}$ is continuous, 
    for every $\theta \in \Theta$.  
    By virtue of the continuity imposed on the distribution $\mu(\cdot,
    \cdot)$ and the finiteness of $\Theta$,
    the map 
    $\SS \times \ZZ \ni (y,z) \mapsto \int h(y,z,\theta) \mu (z, \diff \theta) \in \RR_{++}$ 
    is continuous on $\SS \times \ZZ$.
    It then follows from the continuity imposed on $r$ 
    and the continuity of $\psi$
    that $((s,z),y) \mapsto \tilde H ((s,z),y,h)$ is continuous on $\GG$.
\end{proof}

\begin{lemma}
    \label{l:q-sap}
    If $\xi_1 \in (0,1)$ and $\xi_2 < 0$,
    then the state-action aggregator $H$ defined in~\eqref{eq:q-def-sap} is isotone 
    and concave in its third argument on $bm(\SS \times \ZZ)_{++}$. 

    In addition, the map $((s,z),y) \mapsto H((s,z),y, v)$ is Borel measurable on $\GG$
    whenever $v \in bm(\SS \times \ZZ)_{++}$
    and continuous on $\GG$
    whenever $v \in bc(\SS \times \ZZ)_{++}$.
\end{lemma}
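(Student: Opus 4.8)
The plan is to read off every claim from the decomposition $H((s,z),y,\hat v) = \tilde H((s,z), y, R\hat v)$ recorded in~\eqref{eq:H-def}, combining the properties of $R$ (and the fibre maps $R_\theta$) from lemmas~\ref{l:r-theta} and~\ref{l:R-mer} with the properties of $\tilde H$ from lemma~\ref{l:tq}. The first step is a domain check: for any $v \in bm(\SS \times \ZZ)_{++}$ the function $Rv$ is bounded and jointly measurable by lemma~\ref{l:R-mer}, and it is strictly positive because $\pi_\theta$ is a probability measure, so $\int v(y,z')^{\xi_1}\pi_\theta(z,\diff z')>0$ and the exponent $1/\xi_1>0$ preserves positivity. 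Hence $Rv \in bm(\SS \times \ZZ \times \Theta)_{++}$, which is exactly the domain on which lemma~\ref{l:tq} operates, so the composition $H((s,z),y,v) = \tilde H((s,z),y,Rv)$ is legitimate.

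For the order-theoretic claims, suppose $v \leq v'$ in $bm(\SS \times \ZZ)_{++}$. Isotonicity of each $R_\theta$ (lemma~\ref{l:r-theta}) gives $Rv \leq Rv'$ pointwise on $\SS \times \ZZ \times \Theta$, and then isotonicity of $\tilde H$ in its third argument (lemma~\ref{l:tq}) yields $H((s,z),y,v) \leq H((s,z),y,v')$ on $\GG$. For value-concavity, fix $v,v' \in bm(\SS \times \ZZ)_{++}$ and $\lambda \in [0,1]$; concavity of each $R_\theta$ gives $R(\lambda v + (1-\lambda)v') \geq \lambda Rv + (1-\lambda)Rv'$ pointwise, so applying first isotonicity and then concavity of $\tilde H$ in its third argument gives, for each $((s,z),y)\in\GG$,
\begin{align*}
    H((s,z),y,\lambda v + (1-\lambda)v')
    &= \tilde H\big((s,z),y, R(\lambda v + (1-\lambda)v')\big) \\
    &\geq \tilde H\big((s,z),y, \lambda Rv + (1-\lambda)Rv'\big) \\
    &\geq \lambda \tilde H((s,z),y,Rv) + (1-\lambda)\tilde H((s,z),y,Rv') \\
    &= \lambda H((s,z),y,v) + (1-\lambda) H((s,z),y,v').
\end{align*}
This is the standard fact that an isotone concave map composed with a concave map is concave; the isotonicity of $\tilde H$ is precisely what licenses the middle inequality.

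The regularity claims are handled the same way. If $v \in bm(\SS \times \ZZ)_{++}$, then $Rv \in bm(\SS \times \ZZ \times \Theta)_{++}$ by the domain check, so $((s,z),y)\mapsto \tilde H((s,z),y,Rv) = H((s,z),y,v)$ is Borel measurable on $\GG$ by lemma~\ref{l:tq}. If moreover $v \in bc(\SS \times \ZZ)_{++}$, then for each $\theta \in \Theta$ the fibre $Rv(\cdot,\cdot,\theta) = R_\theta v$ is continuous on $\SS \times \ZZ$ by lemma~\ref{l:r-theta}, which is exactly the hypothesis needed to invoke the continuity half of lemma~\ref{l:tq}; hence $((s,z),y)\mapsto H((s,z),y,v)$ is continuous on $\GG$.

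I expect the only delicate point to be the bookkeeping on domains --- confirming that $R$ maps strictly positive bounded functions to strictly positive bounded functions, so that lemma~\ref{l:tq} applies --- together with getting the order of the isotonicity and concavity steps for $\tilde H$ right in the concavity chain. Everything else is a routine composition argument, so no substantial new estimate is required.
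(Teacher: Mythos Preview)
Your proof is correct and follows essentially the same route as the paper's: both decompose $H$ as $\tilde H((s,z),y,R\hat v)$ via~\eqref{eq:H-def}, invoke lemmas~\ref{l:r-theta}--\ref{l:tq} for the pieces, and obtain concavity by the standard chain that an isotone concave outer map composed with a concave inner map is concave. Your explicit domain check that $Rv \in bm(\SS\times\ZZ\times\Theta)_{++}$ whenever $v \in bm(\SS\times\ZZ)_{++}$ is a welcome clarification that the paper leaves implicit.
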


\begin{proof}[Proof of lemma~\ref{l:q-sap}]
    Since the aggregator $H$ is a composition of $\tilde H$ and $R$, 
    by lemmas~\ref{l:r-theta} to \ref{l:tq},
    the isotonicity, Borel measurability and continuity of $H$ immediately follow from 
    those of $\tilde H$ and $R$.

    It only remains to show the concavity of $H$.
    To see this,
    fix $((s,z),y) \in \GG$,
    $\lambda \in [0,1]$ and $w, w'$ in $bm(\SS \times \ZZ)_{++}$.
    For any given $\theta \in \Theta$, 
    by concavity of $R_\theta$ and convexity of $bm(\SS \times \ZZ)_{++}$, 
    we have 
    \begin{align*}
        R_\theta [ \lambda w + (1-\lambda) w' ] (y,z)
        \geq \lambda R_\theta w(y,z) + (1-\lambda) R_\theta w'(y,z);
    \end{align*}
    that is, for each $(y,z,\theta) \in \SS \times \ZZ \times \Theta$,
    \begin{align*}
        R[\lambda w + (1-\lambda) w'](y,z,\theta) 
        \geq \lambda R w(y,z, \theta) + (1-\lambda) R w'(y,z, \theta).
    \end{align*}
    In operator notation, 
    this translates to $R[ \lambda w + (1-\lambda) w'] \geq \lambda R w + (1-\lambda) R w'$.

    Observe that due to isotonicity and concavity of $\tilde H$, we now obtain
    \begin{align*}
        H((s,z),y,\lambda w + (1-\lambda) w')
        & = \tilde H ((s,z),y, R[\lambda w + (1-\lambda) w']) \\
        & \geq \tilde H((s,z),y, \lambda R w + (1-\lambda) R w') \\
        & \geq \lambda \tilde H((s,z),y, Rw) + (1-\lambda) \tilde H((s,z),y, Rw') \\
        & = \lambda H((s,z),y, w) + (1-\lambda) H((s,z),y, w'),
    \end{align*}
    where the first and last equalities follow immediately from 
    the definition of $H$ in~\eqref{eq:H-def},
    while the first and second inequalities follow from isotonicity and concavity of $\tilde H$, respectively.  This completes the proof.
\end{proof}

Analogously, the state-action aggregator $H$ defined in~\eqref{eq:q-sap-rho=1}
can be expressed as
\begin{equation*}
    H((s,z), y ,\hat v)
    = \tilde H((s,z), y , R \hat v),
\end{equation*}
with the operator $R$ defined as above, but 
\begin{equation}
    \label{eq:tH-sap=1}
    \tilde H((s,z), y , h) 
    := \exp \left( (1-\eta) \left\{ r(s,y,z)
      + \frac{\beta}{1-\eta} \ln \left[ \int h(y,z, \theta) \mu(z, \diff \theta) 
      \right] \right\} \right)
\end{equation}
for all $((s,z), y) \in \GG$ and $h \in bm(\SS \times \ZZ \times \Theta)_{++}$.

Observe that the formula of $\tilde H$ defined above
is almost identical to that of $H$ defined in~\eqref{eq:q-rsp}.
In this connection, 
recalling the results associated with $H$ in section~\ref{ss:riskssp-app},
it is easy to see that 
\begin{lemma}
    \label{l:tq-explog}
    If $\eta > 1$,
    then $\tilde H$ defined in \eqref{eq:tH-sap=1} is isotone and concave in its third argument
    on $bm(\SS \times \ZZ \times \Theta)_{++}$.

    In addition, the map $((s,z), y) \mapsto \tilde H ((s,z),y,h)$ is Borel measurable on $\GG$
    whenever $h \in bm(\SS \times \ZZ \times \Theta)_{++}$,
    and continuous on $\GG$
    whenever the map $h(\cdot, \cdot, \theta) \colon \SS \times \ZZ \to \RR_{++}$ is continuous,
    for each $\theta \in \Theta$.
\end{lemma}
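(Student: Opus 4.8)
The plan is to follow the same template used for the risk-sensitive aggregator in section~\ref{ss:riskssp-app} and for $\tilde H$ in lemma~\ref{l:tq}: reduce all four assertions to elementary properties of a single scalar map, together with monotonicity and linearity of the integral. Fix $b \in \RR$ and set
\begin{equation*}
    \phi(t) := \exp\left( (1-\eta)\left( b + \frac{\beta}{1-\eta} \ln t \right) \right)
    = \me^{(1-\eta) b}\, t^{\beta}
    \qquad (t > 0).
\end{equation*}
Because $\eta > 1$ and $\beta \in (0,1)$, the map $\phi$ is continuous, strictly increasing and strictly concave on $\RR_{++}$ (it is exactly the map ``$\phi$'' of section~\ref{ss:riskssp-app}, with the risk-sensitivity parameter there playing the role of $\eta-1$). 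For any fixed $((s,z),y) \in \GG$ and any $h \in bm(\SS \times \ZZ \times \Theta)_{++}$ one has $\tilde H((s,z),y,h) = \phi\!\left( \int h(y,z,\theta)\, \mu(z,\diff\theta) \right)$ with $b = r(s,y,z)$; since $h$ is strictly positive and $\mu(z,\cdot)$ is a probability measure on the finite set $\Theta$, the integral is strictly positive, so $\phi$ is evaluated on its domain $\RR_{++}$ and $\tilde H$ takes values in $\RR_{++}$.

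Isotonicity in the third argument is then immediate: if $h \leq h'$ in $bm(\SS \times \ZZ \times \Theta)_{++}$, monotonicity of the integral gives $\int h(y,z,\theta)\,\mu(z,\diff\theta) \leq \int h'(y,z,\theta)\,\mu(z,\diff\theta)$, and applying the increasing $\phi$ yields $\tilde H((s,z),y,h) \leq \tilde H((s,z),y,h')$. For concavity, fix $h, h'$ in $bm(\SS \times \ZZ \times \Theta)_{++}$ and $\lambda \in [0,1]$; linearity of the integral turns $\int (\lambda h + (1-\lambda) h')(y,z,\theta)\,\mu(z,\diff\theta)$ into $\lambda A + (1-\lambda) A'$, where $A$ and $A'$ are the integrals of $h$ and $h'$, and then concavity of $\phi$ gives $\phi(\lambda A + (1-\lambda) A') \geq \lambda \phi(A) + (1-\lambda) \phi(A')$, i.e.\ $\tilde H((s,z),y, \lambda h + (1-\lambda) h') \geq \lambda \tilde H((s,z),y,h) + (1-\lambda)\tilde H((s,z),y,h')$.

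For the measurability and continuity claims I would exploit the finiteness of $\Theta$ to write $\int h(y,z,\theta)\,\mu(z,\diff\theta) = \sum_{\theta \in \Theta} h(y,z,\theta)\, \mu(z,\{\theta\})$. Since $\mu$ is continuous in $z$ for each $\theta$ (in particular Borel measurable in $z$) and the sum is finite, the map $(y,z) \mapsto \int h(y,z,\theta)\, \mu(z,\diff\theta)$ is Borel measurable on $\SS \times \ZZ$ whenever $h \in bm(\SS \times \ZZ \times \Theta)_{++}$, and continuous on $\SS \times \ZZ$ whenever $h(\cdot,\cdot,\theta)$ is continuous for each $\theta \in \Theta$. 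Composing with the continuous return function $r$ and with the continuous real operations $\ln$, scaling by $\beta/(1-\eta)$ and by $1-\eta$, addition, and $\exp$ then delivers Borel measurability of $((s,z),y) \mapsto \tilde H((s,z),y,h)$ on $\GG$ in general, and its continuity on $\GG$ under the stated continuity hypothesis on $h$.

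I do not anticipate a genuine obstacle here: the argument is structurally identical to that for lemma~\ref{l:tq}, the only substitution being that of the scalar map $\psi$ there by $\phi$ above, and the one mildly delicate point — the joint behaviour in $(y,z)$ of the integral against $\mu$ — collapses to the observation that a finite convex combination whose weights depend continuously on $z$ is continuous (hence measurable) in its arguments.
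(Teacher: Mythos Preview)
Your proposal is correct and follows essentially the same approach as the paper: both reduce everything to properties of the scalar map $\psi(t)=\exp\bigl[(1-\eta)\bigl(b+\frac{\beta}{1-\eta}\ln t\bigr)\bigr]=e^{(1-\eta)b}t^{\beta}$ and then invoke monotonicity and linearity of the integral together with the continuity of $\mu(\cdot,\{\theta\})$ and finiteness of $\Theta$, exactly as in the proof of lemma~\ref{l:tq}. The paper's proof is simply a pointer to that lemma with $\psi$ substituted for the scalar map used there, and your write-up fills in precisely those details.
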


\begin{proof}[Proof of lemma~\ref{l:tq-explog}]
    Analogous to the proof of lemma~\ref{l:tq},
    for fixed $b \in \RR$,
    we consider the scalar map
    \begin{equation*}
        \label{eq:psi-sap=1}
        \psi (t) 
        := \exp \left[ (1-\eta) \left( b + \frac{\beta}{1-\eta} \ln t \right) \right]
        \qquad (t > 0).
    \end{equation*}
    It is clear that this scalar function $\psi$ is continuous, strictly increasing and 
    strictly concave on $\RR_{++}$.\footnote{
        For more details of the relevant results of such $\psi$, please refer to
        section~\ref{ss:riskssp-app}.
        }
    As a consequence, the remaining proof of lemma~\ref{l:tq-explog} is identical to
    that of lemma~\ref{l:tq}, and thus omitted here.
\end{proof}

\begin{lemma}
    \label{l:q-sap-rho=1}
    If $\xi_1 \in (0,1)$ and $\eta > 1$,
    then the state-action aggregator $H$ defined in~\eqref{eq:q-sap-rho=1} is isotone 
    and concave in its third argument on $bm(\SS \times \ZZ)_{++}$. 

    In addition, the map $((s,z),y) \mapsto H((s,z),y, v)$ is Borel measurable on $\GG$
    whenever $v \in bm(\SS \times \ZZ)_{++}$
    and continuous on $\GG$
    whenever $v \in bc(\SS \times \ZZ)_{++}$.
\end{lemma}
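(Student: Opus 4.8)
The plan is to follow the same composition strategy used in the proof of lemma~\ref{l:q-sap}, with the inner aggregator of section~\ref{sss:rho<1-sap} replaced by its $\rho = 1$ counterpart. First I would record that, exactly as in the display preceding lemma~\ref{l:tq-explog}, the aggregator in~\eqref{eq:q-sap-rho=1} factors as $H((s,z),y,\hat v) = \tilde H((s,z),y,R\hat v)$, where $R$ is the operator of~\eqref{eq:R-def} and $\tilde H$ is given by~\eqref{eq:tH-sap=1}. Since $\xi_1 \in (0,1)$, lemmas~\ref{l:r-theta} and~\ref{l:R-mer} apply to $R$; and since $\eta > 1$, lemma~\ref{l:tq-explog} applies to $\tilde H$. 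A small point to check at the outset is that $R\hat v$ lands in $bm(\SS \times \ZZ \times \Theta)_{++}$, not merely the nonnegative cone, whenever $\hat v \in bm(\SS \times \ZZ)_{++}$: this is immediate because $\hat v \geq \epsilon > 0$ forces $R_\theta \hat v \geq \epsilon$, so that lemma~\ref{l:tq-explog} is genuinely applicable.

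For isotonicity, measurability and continuity I would argue verbatim as in lemma~\ref{l:q-sap}. If $w \leq w'$ in $bm(\SS \times \ZZ)_{++}$ then $Rw \leq Rw'$ by the isotonicity of each $R_\theta$ (lemma~\ref{l:r-theta}), hence $H((s,z),y,w) = \tilde H((s,z),y,Rw) \leq \tilde H((s,z),y,Rw') = H((s,z),y,w')$ by the isotonicity of $\tilde H$ in its third argument (lemma~\ref{l:tq-explog}). Borel measurability of $((s,z),y) \mapsto H((s,z),y,v)$ follows from lemma~\ref{l:R-mer} together with the measurability clause of lemma~\ref{l:tq-explog}. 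Finally, when $v \in bc(\SS \times \ZZ)_{++}$, lemma~\ref{l:r-theta} gives that each $(Rv)(\cdot,\cdot,\theta) = R_\theta v$ is continuous, so the continuity clause of lemma~\ref{l:tq-explog} delivers continuity of $((s,z),y) \mapsto H((s,z),y,v)$ on $\GG$.

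The only substantive point is concavity in the third argument, and here the argument is again the one from lemma~\ref{l:q-sap}. Fix $((s,z),y) \in \GG$, $\lambda \in [0,1]$, and $w, w' \in bm(\SS \times \ZZ)_{++}$. Concavity of each $R_\theta$ (lemma~\ref{l:r-theta}) gives $R[\lambda w + (1-\lambda)w'] \geq \lambda Rw + (1-\lambda)Rw'$ pointwise on $\SS \times \ZZ \times \Theta$; then, applying first the isotonicity and then the concavity of $\tilde H$ in its third argument (lemma~\ref{l:tq-explog}),
\begin{align*}
    H((s,z),y,\lambda w + (1-\lambda)w')
    &= \tilde H((s,z),y, R[\lambda w + (1-\lambda)w']) \\
    &\geq \tilde H((s,z),y, \lambda Rw + (1-\lambda)Rw') \\
    &\geq \lambda \tilde H((s,z),y, Rw) + (1-\lambda)\tilde H((s,z),y, Rw') \\
    &= \lambda H((s,z),y,w) + (1-\lambda)H((s,z),y,w').
\end{align*}

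I do not expect any real obstacle: the entire content has already been isolated in lemmas~\ref{l:r-theta}, \ref{l:R-mer} and~\ref{l:tq-explog}, and the present lemma is just the bookkeeping of composing a concave isotone map with a concave operator. If anything demands care it is the domain tracking already flagged above (strict positivity is preserved by $R$) and the fact that, unlike section~\ref{sss:rho<1-sap}, the relevant hypothesis on the inner aggregator is $\eta > 1$ rather than $\xi_2 < 0$; otherwise the proof is word-for-word that of lemma~\ref{l:q-sap} with lemma~\ref{l:tq-explog} substituted for lemma~\ref{l:tq}.
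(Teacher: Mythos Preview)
Your proposal is correct and follows exactly the approach indicated in the paper, which simply says the proof is identical to that of lemma~\ref{l:q-sap} upon invoking lemmas~\ref{l:r-theta}, \ref{l:R-mer} and~\ref{l:tq-explog}. If anything, you have been more careful than the paper by explicitly flagging the strict-positivity point ensuring $R\hat v \in bm(\SS \times \ZZ \times \Theta)_{++}$ so that lemma~\ref{l:tq-explog} applies.
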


\begin{proof}[Proof of lemma~\ref{l:q-sap-rho=1}]
    Invoking lemmas~\ref{l:r-theta}, \ref{l:R-mer} and \ref{l:tq-explog},
    the proof is identical to that of lemma~\ref{l:q-sap},
    and hence is omitted.
\end{proof}

\subsection{Proofs for section~\ref{ss:narfram}}

\label{ss:prf-nf}

\begin{proof}[Proof of lemma~\ref{l:slu-nf}]
    Let constants $m$ and $M$ be as defined in~\eqref{eq:mm}.
    As $B$ is continuous on a compact set, there exists a finite constants
    \begin{equation*}
        \label{eq:ll}
        l :=  \min_{((s,z),y) \in \GG}  B(s, y, z) 
        \; \text{ and } \;
        L :=  \max_{((s,z),y) \in \GG} B(s, y, z) .
    \end{equation*}
    \textbf{Case I : $\rho<1<\gamma$.}
    To show condition (SL) of lemma~\ref{l:slu-nf},
    we first claim that there exists a positive constant function $w_1$
    such that
    for fixed $((s,z),y) \in \GG$, we have
    \begin{align}
        \label{eq:strictl-nf}
        H((s,z), y, w_1) 
        = \left\{
        r(s, y, z) + \beta \left[ w_1^{\frac{1}{1-\gamma}} + B(s,y,z)
        \right]^{1-\rho} \right\}^\theta \nonumber 
        &\geq \left\{ M + \beta \left[ w_1^{\frac{1}{1-\gamma}} + L \right]^{1-\rho} \right\}^\theta \nonumber \\
        & > w_1 (s,z).
    \end{align}
    Evidently, the uniformly strict inequality~\eqref{eq:strictl-nf} implies 
    that such a positive constant function $w_1$ satisfies condition (SL).

    To prove our claim that 
    there exists a positive constant function $w_1$ satisfying~\eqref{eq:strictl-nf}, 
    we note that, since $0< 1- \rho < 1$ and $\theta < 0$, the following equivalence relation holds
    \begin{align*}
        \left\{ M + \beta \left[ w_1^{\frac{1}{1-\gamma}} + L \right]^{1-\rho} \right\}^\theta > w_1
        &\iff
        \left( \frac{w_1^{\frac{1}{\theta}}-M}{\beta} \right)^{\frac{1}{1-\rho}} - w_1^{\frac{1}{1-\gamma}} - L >0.
    \end{align*}
    Let $d := w_1^{\frac{1}{1-\gamma}}$ and set
    \begin{equation*}
        \varphi (d)
        := \left( \frac{d^{1-\rho}-M}{\beta} \right)^{\frac{1}{1-\rho}} - d - L
        \qquad (d > 0),
    \end{equation*}
    Showing that ~\eqref{eq:strictl-nf} holds
    is equivalent to showing that there exits a positive constant $d^*$ such that $\varphi(d^*) > 0$.
    To show that the latter holds true,
    one can verify that both the first and the second derivatives of $\varphi$ 
    on the interval $(\underline d, \, \infty) \subset \mathbbm{R}_{++}$ 
    are positive, where $\underline d := [M /(1-\beta^{1/\rho})
    ]^{1/(1-\rho)}$.  (We have $\underline d > 0$, since $M \geq m >0$.)
    Hence $\varphi$ is concave upward on $(\underline d, \, \infty)$. 
    This means that $\varphi(d)$ goes to $\infty$, 
    as $d \to \infty$, which in turn implies 
    that there exists a positive constant $d^* > \underline d$ 
    such that $\varphi(d^*) > 0$.
    Letting $w_1 := (d^*)^{1-\gamma}$ finishes the proof of condition (SL).

    Regarding condition (U) of lemma~\ref{l:slu-nf},
    we claim first that there is a positive constant function $w_2$ such that
    for fixed $((s,z),y) \in \GG$, we have 
    \begin{align}
        \label{eq:upper-nf}
        H((s,z), y, w_2) 
        = \left\{ r(s, y, z) + \beta \left[ w_2^{\frac{1}{1-\gamma}} + B(s,y,z)
        \right]^{1-\rho} \right\}^\theta \nonumber 
        &\leq \left\{ m + \beta \left[ w_2^{\frac{1}{1-\gamma}} + l \right]^{1-\rho} \right\}^\theta \nonumber \\
        &\leq w_2 (s,z).
    \end{align}
    Evidently, to show the existence of an upper solution $w_2$, it is sufficient to show that 
    there exists a positive constant function $w_2$ satisfying~\eqref{eq:upper-nf}.
    Further, after some rearrangement, 
    we note showing that ~\eqref{eq:upper-nf}
    holds is equivalent to showing that 
    
    \begin{equation*}
        \left( \frac{w_2^{\frac{1}{\theta}} - m}{\beta} \right)^{\frac{1}{1-\rho}}
        - w_2^{\frac{1}{1-\gamma}} - l \leq 0.
    \end{equation*}

    Let $w_2 := [ m/(1-\beta^{1/\rho}) ]^\theta$. 
    Then the left-hand side of the preceding inequality equals 
    \begin{align*}
        \left( \frac{ \frac{\beta^{1/\rho}}{1 - \beta^{1/\rho}} m}{\beta} \right)^{\frac{1}{1-\rho}}
        - \left( \frac{m}{1-\beta^{\frac{1}{\rho}}}\right)^{\frac{1}{1-\rho}} - l 
        = \left[ \beta^{\frac{1}{\rho} } - 1\right] \left( \frac{m}{1-\beta^{\frac{1}{\rho}}}\right)^{\frac{1}{1-\rho}}
        - l .
    \end{align*}
    Since $\beta \in (0,1)$ and $\rho \in (0,1)$, we have $\beta^{1/\rho} -1 <0$.
    Further, it follows from $m > 0$ and $l \geq 0$ that
    the right-hand side of the above equality is negative.
    This, in turn, implies that for $w_2$ defined above, \eqref{eq:upper-nf} is satisfied,
    which proves condition (U).

    To see that $w_1 < w_2$, observe that 
    $0 < \underline d < d^*$ and $1-\gamma < 0$ imply 
    $0 < w_1 \equiv (d^*)^{1-\gamma} < (\underline d)^{1-\gamma}$. 
    In addition, since $ m \leq  M$ and $\theta < 0$, we have 
    $(\underline d)^{1-\gamma} \equiv [ M/(1-\beta^{1/\rho}) ]^\theta 
    \leq [ m/(1-\beta^{1/\rho}) ]^\theta \equiv w_2$.
    We can now conclude that $w_1 < w_2$, as desired.

    \textbf{Case II : $1<\rho<\gamma$.}
    For this case, the proof is similar. 
    Regarding condition (SL) of lemma~\ref{l:slu-nf},
    we claim first that there exists a positive constant function $w_1$ such that
    for fixed $((s,z),y) \in \GG$, we have
    \begin{align}
        \label{eq:strictl-nf-rho>1}
        H((s,z), y, w_1) 
        = \left\{
        r(s, y, z) + \beta \left[ w_1^{\frac{1}{1-\gamma}} + B(s,y,z)
        \right]^{1-\rho} \right\}^\theta \nonumber 
        &\geq \left\{ m + \beta \left[ w_1^{\frac{1}{1-\gamma}} + L \right]^{1-\rho} \right\}^\theta \nonumber \\
        & > w_1 (s,z).
    \end{align}
    The uniformly strict inequality~\eqref{eq:strictl-nf-rho>1} implies that
    $w_1$ satisfies condition (SL).

    To show that there exists a positive constant function $w_1$
    satisfying~\eqref{eq:strictl-nf-rho>1}, 
    we note that, since $1- \rho < 0$ and $\theta > 1$,
    the following equivalence relation holds
    \begin{align*}
        \left\{ m + \beta \left[ w_1^{\frac{1}{1-\gamma}} + L \right]^{1-\rho} \right\}^\theta > w_1
        &\iff
        \left( \frac{w_1^{\frac{1}{\theta}}-m}{\beta} \right)^{\frac{1}{1-\rho}} - w_1^{\frac{1}{1-\gamma}} - L >0.
    \end{align*}
    Let $d \equiv w_1^{\frac{1}{1-\gamma}}$ and set
    \begin{equation*}
        \phi (d)
        := \left( \frac{d^{1-\rho}-m}{\beta} \right)^{\frac{1}{1-\rho}} - d - L
        \qquad (d > 0),
    \end{equation*}
    Showing that ~\eqref{eq:strictl-nf-rho>1} holds
    is equivalent to showing that there exits a positive constant $d^*$ such that $\phi(d^*) > 0$.
    To show the latter holds,
    one can check that both the first and second derivatives of $\phi$ 
    on the interval $(\underline d, \, m^{1/(1-\rho)}) \subset \mathbbm{R}_{++}$ 
    are positive, where $ \underline d := [m /(1-\beta^{1/\rho}) ]^{1/(1-\rho)}$.
    Hence, the graph of $\phi$ 
    on $(\underline d, \, m^{1/(1-\rho)})$ is concave upward. 
    Hence $\phi(d)$ approaches $+\infty$
    as $d$ approaches $m^{1/(1-\rho)}$.
    It follows that there exists a positive constant $d^* \in (\underline d, \, m^{1/(1-\rho)})$ 
    satisfying $\phi(d^*) > 0$.
    Finally, for such $d^*$, 
    letting $w_1 \equiv (d^*)^{1-\gamma}$ finishes the proof of condition (SL).

    Next, to show condition (U), we claim first that
    there is a positive constant function $w_2$ such that
    for fixed $((s,z),y) \in \GG$,
    we have 
    \begin{align}
        \label{eq:upper-nf-rho>1}
        H((s,z), y, w_2) 
        = \left\{ r(s, y, z) + \beta \left[ w_2^{\frac{1}{1-\gamma}} + B(s,y,z)
        \right]^{1-\rho} \right\}^\theta \nonumber 
        &\leq \left\{ M + \beta \left[ w_2^{\frac{1}{1-\gamma}} + l \right]^{1-\rho} \right\}^\theta \nonumber \\
        &\leq w_2 (s,z).
    \end{align}
    To show the existence of an upper solution $w_2$, it suffices to show that 
    there exists a positive constant function $w_2$ satisfying~\eqref{eq:upper-nf-rho>1},
    or equivalently,
    \begin{equation*}
        \left( \frac{w_2^{\frac{1}{\theta}} - M}{\beta} \right)^{\frac{1}{1-\rho}}
        - w_2^{\frac{1}{1-\gamma}} - l \leq 0.
    \end{equation*}

    Let $w_2 := [M /(1-\beta^{1/\rho}) ]^\theta$. The left-hand side of the above inequality is equal to
    \begin{align*}
        \left( \frac{ \frac{\beta^{1/\rho}}{1 - \beta^{1/\rho}} M}{\beta} \right)^{\frac{1}{1-\rho}}
        - \left( \frac{M}{1-\beta^{\frac{1}{\rho}}}\right)^{\frac{1}{1-\rho}} - l  
        &= \left[ \beta^{\frac{1}{\rho} } - 1\right] \left( \frac{M}{1-\beta^{\frac{1}{\rho}}}\right)^{\frac{1}{1-\rho}}
        - l 
        < 0.
    \end{align*}
    This in turn implies that for such $w_2$ defined above, 
    \eqref{eq:upper-nf-rho>1} is naturally satisfied,
    which is what we needed to show for condition (U).

    Our choices of $w_1$ and $w_2$ satisfy $w_1 < w_2$.
    To see that this is so, observe that 
    $w_1 \equiv (d^*)^{1-\gamma} < (\underline d)^{1-\gamma} \equiv [m /(1-\beta^{1/\rho}) ]^\theta$.
    Furthermore, it follows from $\theta > 1$ and $m \leq M$
    that $[m /(1-\beta^{1/\rho}) ]^\theta \leq [M /(1-\beta^{1/\rho}) ]^\theta \equiv w_2$,
    from which we conclude that $w_1 < w_2$, as was to be shown. 
\end{proof}

\subsection{Proofs in section~\ref{s:ubdd}}

\label{ss:prf-ubdd}

Recalling the definition of the weight function $\ell$ in section~\ref{s:ubdd},
the finite $\ell$-norm turns the real normed vector space 
$b_\ell m\XX := \setntn{f \in m\XX}{\text{$f$ is $\ell$-bounded}}$
into a real Banach space.\footnote{
    The \emph{$\ell$-norm} of $f$
    is defined by $\| f \|_\ell := \sup_{(s,z) \in \SS \times \ZZ} \{ |f(s,z)| / \ell(s,z) \}$.
    It is worth noting that when the weight function $\ell$ is bounded,
    the $\ell$-norm $\| \cdot \|_\ell$ and the supremum norm $\| \cdot \|$ are equivalent.
    Therefore, the weighted supremum norms become relevant when $\ell$ is unbounded.
    }
Recalling the definition of $H$ and the construction of bracketing functions $w_1$ and $w_2$
in section~\ref{s:ubdd},
we obtain the following results.

\begin{lemma}
    \label{l:ic-ext}
    If assumption~\ref{a:app-ext} holds,
    then the state-action aggregator $H$ defined in~\eqref{eq:q-ez}
    is isotone and concave in its third argument on $\vV$.
\end{lemma}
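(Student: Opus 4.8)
The plan is to reduce the statement to scalar-function properties, exactly as in the bounded case of section~\ref{sss:theta>1}: for fixed $((s,z),y)\in\GG$ the map $\hat v\mapsto H((s,z),y,\hat v)$ factors as $\hat v\mapsto \int \hat v(y,z')P(z,\diff z')\mapsto \psi\big(\int \hat v(y,z')P(z,\diff z')\big)$, where $\psi(t):=(r(s,y,z)+\beta t^{1/\theta})^\theta$ is the scalar map from \eqref{eq:psi} with $b:=r(s,y,z)$. Isotonicity follows because the integral is monotone and $\psi$ is increasing; concavity follows because the integral is affine and $\psi$ is concave when $\theta>1$. The only genuinely new point relative to the compact case is that one must first check $H$ actually takes finite real values on the enlarged candidate class $\vV$, and here assumption~\ref{a:app-ext} does the work.

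Concretely, I would proceed as follows. First, fix $((s,z),y)\in\GG$ and $\hat v\in\vV$, so that $0<w_1\leq \hat v\leq w_2$ with $w_1=(L-\delta)^\theta\kappa$ and $w_2=[M/(1-\beta c^{1/\theta})]^\theta\kappa^\theta$. Since $\kappa\geq 1$ and $\theta>1$ give $\kappa\leq\kappa^\theta$, the bound \eqref{eq:a-ext-sk} yields $0<\int \hat v(y,z')P(z,\diff z')\leq [M/(1-\beta c^{1/\theta})]^\theta\, c\,\kappa(s,z)^\theta<\infty$, so the inner integral $t:=\int \hat v(y,z')P(z,\diff z')$ is a finite strictly positive number; together with $r(s,y,z)\geq L\kappa(s,z)>0$ from \eqref{eq:a-ext-if} this makes $H((s,z),y,\hat v)=\psi(t)$ a well-defined real number. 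Second, for isotonicity, if $v\leq v'$ in $\vV$ then monotonicity of the integral gives $\int v(y,z')P(z,\diff z')\leq\int v'(y,z')P(z,\diff z')$, and since $\psi$ is increasing on $(0,\infty)$ (both $t\mapsto t^{1/\theta}$ and $u\mapsto u^\theta$ are increasing, using $b>0$), $H((s,z),y,v)\leq H((s,z),y,v')$; as $((s,z),y)$ was arbitrary, $H$ is isotone in its third argument. Third, for concavity, fix $\lambda\in[0,1]$ and $v,v'\in\vV$; linearity of the integral gives $\int(\lambda v+(1-\lambda)v')(y,z')P(z,\diff z')=\lambda t+(1-\lambda)t'$ with $t,t'$ the corresponding integrals, and since $\theta>1$ and $b=r(s,y,z)>0$ the scalar map $\psi$ is concave on $(0,\infty)$ (the same computation invoked in section~\ref{sss:theta>1}; one has $\psi''(t)=-b\beta\,\tfrac{\theta-1}{\theta}\,t^{(1-2\theta)/\theta}(r(s,y,z)+\beta t^{1/\theta})^{\theta-2}<0$), so $H((s,z),y,\lambda v+(1-\lambda)v')=\psi(\lambda t+(1-\lambda)t')\geq\lambda\psi(t)+(1-\lambda)\psi(t')=\lambda H((s,z),y,v)+(1-\lambda)H((s,z),y,v')$; since $((s,z),y)$ was arbitrary, $H$ is concave in its third argument on $\vV$.

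The main obstacle is not any of the convexity/monotonicity bookkeeping, which is a direct transcription of section~\ref{sss:theta>1}, but rather the first step: ensuring that $\int\hat v(y,z')P(z,\diff z')$ is finite for \emph{every} $\hat v\in\vV$ so that $H$ genuinely maps into $\RR$. This is exactly the point at which the weighting conditions of assumption~\ref{a:app-ext}, in particular the supremum bound \eqref{eq:a-ext-sk} on $\int\kappa(y,z')^\theta P(z,\diff z')$ combined with $\kappa\geq1$ and $\theta>1$, are essential; once finiteness is in hand the remainder of the argument is the scalar-function reasoning already used in the bounded case.
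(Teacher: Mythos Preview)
Your proposal is correct and follows essentially the same route as the paper: the paper's own proof simply says the argument is identical to the isotonicity and value-concavity reasoning in section~\ref{sss:theta>1}, i.e., reduce to the scalar map $\psi(t)=(b+\beta t^{1/\theta})^\theta$ and use its monotonicity and concavity together with linearity of the integral. Your additional verification that $\int \hat v(y,z')P(z,\diff z')$ is finite on the enlarged class $\vV$ via \eqref{eq:a-ext-sk} is a sensible supplement that the paper leaves implicit.
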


\begin{proof}
    The proof is essentially the same as the isotonicity and value-concavity arguments on
    $H$ provided in section~\ref{sss:theta>1}.
\end{proof}

\begin{lemma}
    \label{l:uls-ext}
    If assumption~\ref{a:app-ext} holds,
    then the state-action aggregator $H$ defined in~\eqref{eq:q-ez} possesses
    a strict lower solution $w_1$ and an upper solution $w_2$ in the sense that
    \begin{enumerate}
        \item[(SL)] there exists an $\epsilon > 0$ such that 
            $H((s,z), y, w_1) \geq w_1(s,z) + \epsilon \kappa (s,z)^\theta$ 
            for all $((s,z), y) \in \GG$.
        \item[(U)] $H((s,z), y, w_2) \leq w_2(s,z)$ for all $((s,z), y) \in \GG$.
    \end{enumerate}
\end{lemma}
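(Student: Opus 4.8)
The plan is to establish (SL) and (U) by substituting the explicit bracketing functions from section~\ref{s:ubdd} into the formula~\eqref{eq:q-ez} for $H$ and then estimating the integrals against $P$ using the four bounds in assumption~\ref{a:app-ext}. It is convenient to abbreviate $A := M/(1-\beta c^{1/\theta})$, so that $w_1 = (L-\delta)^\theta\kappa$ and $w_2 = A^\theta\kappa^\theta$, and to record the following elementary facts at the outset: since $\theta>1$ and $\kappa\geq 1$ we have $\kappa\leq\kappa^\theta$; since $c<1/\beta^\theta$ we have $\beta c^{1/\theta}<1$, so $A>0$; the return function is strictly positive, because $r(s,y,z)\geq L\kappa(s,z)\geq L>0$ by~\eqref{eq:a-ext-if}; and, exactly as in section~\ref{sss:theta>1}, the map $(b,t)\mapsto(b+\beta t^{1/\theta})^\theta$ of~\eqref{eq:psi} is nondecreasing in each of $b\geq 0$ and $t\geq 0$. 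Finally, $\int w_i(y,z')\,P(z,\diff z')$ is finite and $H((s,z),y,w_i)$ is a well-defined real number: Jensen's inequality applied to the convex map $t\mapsto t^\theta$, together with~\eqref{eq:a-ext-sk}, gives $\int\kappa(y,z')\,P(z,\diff z')\leq c^{1/\theta}\kappa(s,z)<\infty$, and positivity of $r$ makes the base of the outer power strictly positive.

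First I would prove (U). Fix $((s,z),y)\in\GG$. By~\eqref{eq:a-ext-sf}, $r(s,y,z)\leq M\kappa(s,z)$, while $\int w_2(y,z')\,P(z,\diff z')=A^\theta\int\kappa(y,z')^\theta P(z,\diff z')\leq A^\theta c\,\kappa(s,z)^\theta$ by~\eqref{eq:a-ext-sk}. Using the monotonicity of the scalar map in both of its arguments,
\[
    H((s,z),y,w_2)
    \leq\left\{M\kappa(s,z)+\beta\left(A^\theta c\,\kappa(s,z)^\theta\right)^{1/\theta}\right\}^\theta
    =\left(M+\beta Ac^{1/\theta}\right)^\theta\kappa(s,z)^\theta.
\]
The definition of $A$ gives $M+\beta Ac^{1/\theta}=A$, so the right-hand side is exactly $A^\theta\kappa(s,z)^\theta=w_2(s,z)$, which is (U).

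Next I would prove (SL). Fix $((s,z),y)\in\GG$. By~\eqref{eq:a-ext-if}, $r(s,y,z)\geq L\kappa(s,z)$, and the integral term is nonnegative, so monotonicity of the scalar map yields
\[
    H((s,z),y,w_1)
    \geq\left\{L\kappa(s,z)+\beta\left[(L-\delta)^\theta\int\kappa(y,z')\,P(z,\diff z')\right]^{1/\theta}\right\}^\theta
    \geq L^\theta\kappa(s,z)^\theta.
\]
On the other hand $w_1(s,z)=(L-\delta)^\theta\kappa(s,z)\leq(L-\delta)^\theta\kappa(s,z)^\theta$ because $\kappa\leq\kappa^\theta$, so
\[
    H((s,z),y,w_1)-w_1(s,z)\geq\left[L^\theta-(L-\delta)^\theta\right]\kappa(s,z)^\theta.
\]
Since $0<\delta<L$ and $\theta>1$, the constant $\epsilon:=L^\theta-(L-\delta)^\theta$ is strictly positive, and this is precisely (SL).

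There is no substantive obstacle here: once the bracketing functions are fixed the lemma reduces to a short computation. The only points requiring attention are the preliminary checks that every base raised to a noninteger power is strictly positive and that the integrals against $P$ are finite — both handled by the positivity of $r$ and the Jensen bound above — and the observation $\kappa\leq\kappa^\theta$, which is exactly what allows the $\kappa$-scaled lower bracket $w_1$ to be absorbed into the $\kappa^\theta$-scaled slack in (SL). It is worth flagging that the lower bound~\eqref{eq:a-ext-ik} on $\int\kappa\,\diff P$ is not actually used in proving (SL) as stated, since the corresponding term is discarded; it is presumably needed elsewhere (e.g.\ for proposition~\ref{p:regp-ccv-ext}).
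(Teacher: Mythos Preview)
Your proof is correct and follows essentially the same route as the paper, arriving at the identical constant $\epsilon = L^\theta - (L-\delta)^\theta$. Your argument for (SL) is in fact slightly cleaner: the paper first invokes~\eqref{eq:a-ext-ik} to obtain the intermediate bound $H((s,z),y,w_1)\geq\{L\kappa(s,z)+\beta(L-\delta)[d\kappa(s,z)]^{1/\theta}\}^\theta$ and only later drops the $d$-term when passing to the ratio over $\kappa^\theta$, whereas you discard the nonnegative integral contribution immediately; your closing remark that~\eqref{eq:a-ext-ik} is not actually needed for this lemma is therefore accurate.
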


\begin{proof}[Proof of lemma~\ref{l:uls-ext}]
    Observe that,
    for fixed $((s,z),y) \in \GG$, we have
    \begin{align}
        \label{eq:ineq-ext}
        H((s,z),y, w_1)
        &= \left\{ r(s,y,z)
        + \beta  \left[ \int (L - \delta )^\theta \cdot \kappa(y,z') P(z, \diff z')
        \right]^{1/\theta} \right\}^\theta \nonumber \\
        & \geq \left\{ L \kappa (s,z)
        + \beta  \left[ \int (L - \delta )^\theta \cdot \kappa(y,z') P(z, \diff z')
        \right]^{1/\theta} \right\}^\theta \nonumber \\
        & \geq \left\{ L \kappa (s,z)
        + \beta (L - \delta )  \left[ d \kappa (s,z)
        \right]^{1/\theta} \right\}^\theta \\
        & \geq \left\{ \left[ L 
        + \beta (L - \delta )  d^{1/\theta} \right] 
        \cdot \kappa (s,z)^{1/\theta} \right\}^\theta, \nonumber
    \end{align}
    where the first and second inequalities immediately 
    follow from~\eqref{eq:a-ext-if} and ~\eqref{eq:a-ext-ik} in assumption~\ref{a:app-ext}, respectively,
    while the last one from the fact that $\kappa^{1/\theta} \leq \kappa$.
    Further, with some rearranging, we obtain
    \begin{equation*}
        H((s,z),y, w_1)
        \geq \left[ L -\delta
        + \beta L d^{1/\theta} + \delta (1-\beta d^{1/\theta})
        \right]^\theta \kappa (s,z)
        > (L - \delta)^\theta \kappa(s,z),
    \end{equation*}
    and the last term is equal to $w_1(s,z)$.
    Similarly, we have
    \begin{align*}
        H((s,z),y, w_2)
        &= \left\{ r(s,y,z)
        + \beta \left( \frac{M}{1-\beta c^{1/\theta}} 
        \right) \left[ \int \kappa(y,z')^\theta P(z, \diff z')
        \right]^{1/\theta} \right\}^\theta \\
        & \leq \left\{ M \kappa(s,z)
        + \beta \left( \frac{M}{1-\beta c^{1/\theta}} 
        \right) \left[ c \kappa(s,z)^\theta
        \right]^{1/\theta} \right\}^\theta \\
        & = \left[ \frac{M}{1-\beta c^{1/\theta}} \right]^\theta
        \kappa(s,z)^\theta
        = w_2(s,z)
    \end{align*}
    where the inequality follows from~\eqref{eq:a-ext-sf} and \eqref{eq:a-ext-sk}
    in assumption~\ref{a:app-ext}.
    Hence condition (U) of lemma~\ref{l:uls-ext} is satisfied.

    So far we only show $w_1$ is a lower solution of $H$. 
    It remains to prove that it is a strict lower solution. 
    Observe from~\eqref{eq:ineq-ext} that to show condition (SL), it is sufficient to show
    that there exists an $\epsilon >0$ such that 
    \begin{equation}
        \label{eq:sls-ext}
        \left\{ L \kappa (s,z)
        + \beta (L - \delta )  \left[ d \kappa (s,z)
        \right]^{1/\theta} \right\}^\theta
        \geq w_1(s,z) + \epsilon \kappa(s,z)^\theta
    \end{equation}
    for all $(s,z) \in \SS \times \ZZ$.
    To this end, for fixed $(s,z) \in \SS \times \ZZ$, consider
    \begin{align*}
        &\frac{\left\{ L \kappa (s,z)
        + \beta (L - \delta )  \left[ d \kappa (s,z)
        \right]^{1/\theta} \right\}^\theta - w_1(s,z)}{\kappa(s,z)^\theta} \\
        & = \left\{ L 
        + \beta (L - \delta )  d^{1/\theta} 
        \cdot \kappa (s,z)^{1/\theta - 1} \right\}^\theta 
        - (L - \delta)^\theta \kappa(s,z)^{1-\theta} \\
        & \geq L^\theta - (L - \delta)^\theta \kappa(s,z)^{1-\theta}
        \geq L^\theta - (L - \delta)^\theta > 0
    \end{align*}
    where the first and second inequalities follow from the facts that 
    $\kappa^{1/\theta -1} \geq 0$ and that $\kappa^{1-\theta} \leq 1$, respectively.
    Hence, 
    condition~\eqref{eq:sls-ext} holds 
    when we take $\epsilon := L^\theta - (L - \delta)^\theta$, 
    which is what we needed to show for condition (SL).
\end{proof}

\begin{lemma}
    \label{l:conti-ext}
    If assumption~\ref{a:app-ext} holds, then the map
    $((s,z),y) \mapsto H((s,z),y, \hat v)$ is continuous on $\GG$ whenever $\hat v \in \cC$.
\end{lemma}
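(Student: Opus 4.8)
The plan is to factor $H(\cdot,\cdot,\hat v)$ through the single auxiliary map $I(y,z):=\int\hat v(y,z')\,P(z,\diff z')$, reduce the claim to continuity of $I$ on $\GG$, and then obtain continuity of $I$ from a dominated-convergence argument for the weakly converging kernels $z\mapsto P(z,\cdot)$. First I would record that $I(y,z)$ is finite whenever $((s,z),y)\in\GG$: since $\hat v\le w_2=\left(M/(1-\beta c^{1/\theta})\right)^\theta\kappa^\theta$ and $y\in\Gamma(s,z)$, inequality~\eqref{eq:a-ext-sk} gives $I(y,z)\le\left(M/(1-\beta c^{1/\theta})\right)^\theta\int\kappa(y,z')^\theta P(z,\diff z')\le\left(M/(1-\beta c^{1/\theta})\right)^\theta c\,\kappa(s,z)^\theta<\infty$. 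Granting continuity of $I$, the lemma follows immediately: with $\psi$ as in~\eqref{eq:psi} we have $H((s,z),y,\hat v)=\psi(I(y,z))$ taking $b=r(s,y,z)$, and $(b,t)\mapsto(b+\beta t^{1/\theta})^\theta$ is jointly continuous on $(0,\infty)\times[0,\infty)$ — here $r(s,y,z)\ge L\kappa(s,z)\ge L>0$ by~\eqref{eq:a-ext-if} and $\kappa\ge1$ — so $H(\cdot,\cdot,\hat v)$ is continuous on $\GG$ as a composition of continuous maps, combining continuity of $r$ with that of $I$.

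For continuity of $I$, fix a sequence $((s_n,z_n),y_n)\to((s,z),y)$ in $\GG$ and put $\mu_n:=P(z_n,\cdot)$, $\mu:=P(z,\cdot)$, $h_n:=\hat v(y_n,\cdot)$, $h:=\hat v(y,\cdot)$, and $g_n:=\left(M/(1-\beta c^{1/\theta})\right)^\theta\kappa(y_n,\cdot)^\theta$, $g:=\left(M/(1-\beta c^{1/\theta})\right)^\theta\kappa(y,\cdot)^\theta$. Then: (i) $\mu_n\Rightarrow\mu$ weakly, by the Feller property of $P$; (ii) $h_n\to h$ and $g_n\to g$ \emph{continuously} — that is, $h_n(z'_n)\to h(z')$ whenever $z'_n\to z'$, and likewise for $g$ — by joint continuity of $\hat v$ and of $\kappa$ on $\SS\times\ZZ$; (iii) $0\le h_n\le g_n$ pointwise, since $w_1\le\hat v\le w_2$ with $w_1>0$; and (iv) $\int g_n\,d\mu_n\to\int g\,d\mu<\infty$, because the final clause of assumption~\ref{a:app-ext} is precisely continuity of $(y,z)\mapsto\int\kappa(y,z')^\theta P(z,\diff z')$ on $\SS\times\ZZ$. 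A generalized dominated convergence theorem for varying measures then delivers $I(y_n,z_n)=\int h_n\,d\mu_n\to\int h\,d\mu=I(y,z)$.

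The main obstacle is exactly this last step: passing to the limit in $\int h_n\,d\mu_n$ when \emph{both} the integrands and the measures vary with $n$. I would settle it via Skorokhod's representation theorem, which applies because $\ZZ$ is separable metric and each $P(z,\cdot)$ is a Borel probability measure: realize $X_n\sim\mu_n$ and $X\sim\mu$ on one probability space with $X_n\to X$ almost surely, so continuous convergence upgrades to $h_n(X_n)\to h(X)$ and $g_n(X_n)\to g(X)$ almost surely, with $0\le h_n(X_n)\le g_n(X_n)$ and $\EE g_n(X_n)=\int g_n\,d\mu_n\to\int g\,d\mu=\EE g(X)<\infty$; Pratt's lemma (dominated convergence with moving dominating functions) then gives $\EE h_n(X_n)\to\EE h(X)$, i.e.\ $\int h_n\,d\mu_n\to\int h\,d\mu$. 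Alternatively one may simply cite a packaged version of this fact — e.g.\ Serfozo's theorem on convergence of integrals under weak convergence of measures, or the weak-convergence form of Pratt's lemma — which avoids reproducing the Skorokhod argument.
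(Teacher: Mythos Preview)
Your proof is correct and follows the same decomposition as the paper: reduce the claim to continuity of $I(y,z)=\int\hat v(y,z')\,P(z,\diff z')$, then combine with continuity of $r$ and of the outer map $(b,t)\mapsto(b+\beta t^{1/\theta})^\theta$. The difference lies only in how continuity of $I$ is established. The paper dispatches it in one line by citing lemma~12.2.20 of \cite{stachurski2009economic} --- a packaged generalized-Feller result for weighted-bounded integrands, which uses precisely the final clause of assumption~\ref{a:app-ext} (continuity of $(y,z)\mapsto\int\kappa(y,z')^\theta P(z,\diff z')$) as its dominating-function hypothesis. You instead prove this step from scratch via Skorokhod representation plus Pratt's lemma, making explicit where that final clause enters: it supplies the convergence $\int g_n\,d\mu_n\to\int g\,d\mu$ of the dominating integrals. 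Your argument is thus a self-contained unpacking of the cited lemma; it buys independence from the external reference at the cost of length, while the paper's version is terse but opaque about exactly which ingredients of assumption~\ref{a:app-ext} are doing the work.
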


\begin{proof}[Proof of lemma~\ref{l:conti-ext}]
    To see that this is so, pick any $\hat v \in \cC$.  By
    assumption~\ref{a:app-ext}, and by lemma 12.2.20 in
    \cite{stachurski2009economic}, we know that $(y,z) \mapsto \int \hat
    v(y,z') P(z, \diff z')$ is continuous on $\SS \times \ZZ$.
    It then follows from the continuity of $r$ that 
    the map $((s,z),y) \mapsto H((s,z),y, \hat v)$ is continuous on $\GG$,
    as was to be shown.
\end{proof}

Recall that the $\sigma$-value operator $T_\sigma$ on $\vV$ is defined by
\begin{equation*}
    T_\sigma \hat v(s,z) 
    = H((s,z), \sigma(s,z), \hat v)
    = \left\{ r_\sigma(s,z)
      + \beta  \left[ \int \hat v(\sigma(s,z),z') P(z, \diff z')
      \right]^{1/\theta} \right\}^\theta
\end{equation*}
for all $(s,z) \in \SS \times \ZZ$ and $\hat v \in \vV$,
where $r_\sigma(s,z):= r(s,\sigma(s,z),z)$.

\begin{lemma}
    \label{l:ts-ext}
    If assumption~\ref{a:app-ext} holds,
    then, for each $\sigma \in \Sigma$,
    the operator $T_\sigma$ is asymptotically stable on $\vV$.
\end{lemma}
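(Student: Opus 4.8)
The plan is to strip off the weighting by a linear change of variables that conjugates $T_\sigma$ into an operator acting on an order interval of \emph{bounded} Borel functions, after which the machinery already developed for the bounded concave case applies directly. Write $e := \kappa^\theta$, a continuous function with $e \geq 1$, and recall that $\vV$ here denotes the order interval $\{\, v : w_1 \leq v \leq w_2 \,\}$ inside the Banach space of $e$-bounded Borel functions. Dividing the brackets by $e$ yields $\hat w_1 := w_1/e = (L-\delta)^\theta \kappa^{1-\theta}$ and $\hat w_2 := w_2/e = [M/(1-\beta c^{1/\theta})]^\theta$; since $\theta > 1$ and $\kappa \geq 1$, both are bounded and continuous, and $\hat w_1 \leq \hat w_2$ because $w_1 \leq w_2$. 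The map $\Phi v := v/e$ is a positive linear bijection of $\vV$ onto the order interval $\hat\vV := \{\, \hat v : \hat w_1 \leq \hat v \leq \hat w_2 \,\}$ of bounded Borel functions, and it is an isometry from the $e$-norm on $\vV$ to the supremum norm on $\hat\vV$.

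First I would set $\tilde T_\sigma := \Phi \circ T_\sigma \circ \Phi^{-1}$, equivalently $\tilde T_\sigma \hat v(s,z) = H((s,z), \sigma(s,z), e\hat v)/e(s,z)$, and observe that, because $\Phi$ is simultaneously an order isomorphism and an isometry, $T_\sigma$ is asymptotically stable on $\vV$ if and only if $\tilde T_\sigma$ is asymptotically stable on $\hat\vV$. Next I would check that $\tilde T_\sigma$ is a well-defined self-map of $\hat\vV$: Borel measurability is inherited from that of $H$ in its state-action argument and of $\sigma$, while the bracketing $\hat w_1 \leq \tilde T_\sigma \hat v \leq \hat w_2$ follows from monotonicity of $H$ in its third argument combined with parts (SL) and (U) of lemma~\ref{l:uls-ext} (this is the analogue of lemma~\ref{l:cst}). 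Isotonicity and concavity of $\tilde T_\sigma$ on $\hat\vV$ are then immediate from lemma~\ref{l:ic-ext}, since pre- and post-composing with the linear maps $\Phi^{-1}$ and $\Phi$ preserves both properties.

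Then I would extract the two remaining inequalities from lemma~\ref{l:uls-ext}: dividing (SL) by $e$ gives $\tilde T_\sigma \hat w_1 \geq \hat w_1 + \epsilon$, i.e.\ $\hat w_1 \ll \tilde T_\sigma \hat w_1$, and dividing (U) by $e$ gives $\tilde T_\sigma \hat w_2 \leq \hat w_2$. At this point $\tilde T_\sigma$ is an isotone, concave self-map of the order interval $[\hat w_1, \hat w_2]$ of bounded Borel functions possessing a strict lower solution and an upper solution — precisely the configuration treated in the bounded concave case. Concretely, writing $\tilde H((s,z), y, v) := H((s,z), y, ev)/e(s,z)$, the foregoing checks show that $\tilde H$ satisfies assumptions~\ref{a:ath} and \ref{a:ccv} with bracketing functions $\hat w_1, \hat w_2$, so lemma~\ref{l:checkH} applies to the conjugate aggregator and lemma~\ref{l:sts} (through theorem~3.1 of \cite{du1990}) delivers asymptotic stability of $\tilde T_\sigma$ on $\hat\vV$. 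Transferring back through $\Phi$ then gives the claim.

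I expect the main obstacle to be the bookkeeping around the change of variables rather than any single hard estimate: one has to confirm that $\hat w_1 = (L-\delta)^\theta \kappa^{1-\theta}$ and $\hat w_2$ genuinely are bounded continuous functions so that the transformed problem lands inside the bounded framework, and — more delicately — that asymptotic stability in the $e$-weighted norm corresponds exactly to asymptotic stability in the supremum norm of the transformed interval. It is precisely this correspondence that forces the strict lower-solution condition to be stated relative to the weight, as in~\eqref{eq:sls-ext}, rather than as a plain uniform gap; once $\tilde H$ is seen to satisfy assumptions~\ref{a:ath} and~\ref{a:ccv} verbatim, the earlier results apply with no further work.
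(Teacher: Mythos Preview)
Your argument is correct, but it takes a different route from the paper. The paper applies theorem~3.1 of \cite{du1990} \emph{directly} in the weighted Banach space $b_{\kappa^\theta} m(\SS\times\ZZ)$: after verifying that $T_\sigma$ is a self-map on $\vV$ (via lemma~\ref{l:ic-ext} and lemma~\ref{l:uls-ext}), it simply notes that the strict-lower-solution inequality $T_\sigma w_1 \geq w_1 + \epsilon\kappa^\theta$ is exactly the statement that $T_\sigma w_1 - w_1$ lies in the interior of the positive cone of that weighted space, and Du's theorem does the rest. You instead conjugate by $\Phi v = v/\kappa^\theta$ to pull the problem back to an order interval of \emph{bounded} functions, where the bounded concave machinery (lemma~\ref{l:checkH} followed by lemma~\ref{l:sts}) applies verbatim, and then transfer back through the isometry.

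Both are valid. The paper's route is shorter because Du's theorem is stated for general ordered Banach spaces, so there is no need to leave the weighted setting. Your route is more modular: it makes fully explicit why the weighted gap $\epsilon\kappa^\theta$ in condition~(SL) is the correct formulation (it becomes a uniform gap $\epsilon$ after division), and it reuses the bounded-case lemmas wholesale rather than re-checking Du's hypotheses in a new space. A minor remark: when you say $\tilde H$ satisfies assumption~\ref{a:ath}, note that part~(b) (continuity in the state-action pair) is not actually needed for lemma~\ref{l:sts}; only parts~(c)--(d) and assumption~\ref{a:ccv} enter that argument, so you need not worry about continuity of $\tilde H$ at this stage.
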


\begin{proof}[Proof of lemma~\ref{l:ts-ext}]
    First, we show that $T_\sigma$ is a self-map on $\vV$.
    Fix $\hat v $ in $\vV$.
    Together with continuity of $r$, measurabilities of $\hat v$ and $\sigma$
    imply that $T_\sigma \hat v $ is Borel measurable on $\SS \times \ZZ$.
    In addition, since $w_1 \leq \hat v$, making use of isotonicity of $H$ 
    (as shown in lemma~\ref{l:ic-ext}), we have 
    $w_1(s,z) \leq H((s,z), \sigma(s,z), w_1) \leq H((s,z), \sigma(s,z), \hat v)$
    for all $(s,z) \in \SS \times \ZZ$, 
    which in turn implies that $w_1 \leq T_\sigma \hat v$. 
    A similar argument gives $T_\sigma \hat v \leq w_2$. 
    Therefore, $T_\sigma \hat v \in \vV$, as was to be shown.

    Now, invoking lemmas~\ref{l:ic-ext} and \ref{l:uls-ext}, 
    theorem 3.1 of \cite{du1990} applies and implies the stated result.
\end{proof}

\begin{proof}[Proof of proposition~\ref{p:regp-ccv-ext}]
    This follows immediately from lemma~\ref{l:ts-ext}.
\end{proof}

Given $\hat v \in \vV$, a policy $\sigma$ in $\Sigma$
will be called \emph{$\hat v$-greedy} if 
\begin{equation*}
    \sigma(s,z) 
    \in \argmin_{y \in \Gamma(s,z)} H((s,z), y, \hat v)
    \: \text{for all } (s,z) \in \SS \times \ZZ.
\end{equation*}

\begin{lemma}
    If $\hat v \in \cC$,
    then there exists at least one $\hat v$-greedy policy.
\end{lemma}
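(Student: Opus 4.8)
The plan is to mimic the argument used for lemma~\ref{l:egp} in the convex (maximization) case, with $\argmax$ replaced by $\argmin$. Fix $\hat v \in \cC$. Two ingredients are needed: (i) for each $(s,z) \in \SS \times \ZZ$ the infimum of $y \mapsto H((s,z),y,\hat v)$ over $\Gamma(s,z)$ is attained, and (ii) a minimizer can be chosen to depend Borel measurably on $(s,z)$.

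For (i), recall that $\Gamma$ is nonempty, compact valued and continuous by assumption~\ref{a:ath}(a), while lemma~\ref{l:conti-ext} tells us that $((s,z),y) \mapsto H((s,z),y,\hat v)$ is continuous on $\GG$ whenever $\hat v \in \cC$. In particular, for fixed $(s,z)$ the map $y \mapsto H((s,z),y,\hat v)$ is continuous on the nonempty compact set $\Gamma(s,z)$, so Weierstrass' theorem gives that $\argmin_{y \in \Gamma(s,z)} H((s,z),y,\hat v)$ is nonempty for every $(s,z)$. For (ii), I would apply the measurable maximum theorem, namely theorem 18.19 of \cite{aliprantis2006border}, to the function $((s,z),y) \mapsto - H((s,z),y,\hat v)$ and the correspondence $\Gamma$; minimizing $H$ is the same as maximizing $-H$. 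This yields a Borel measurable selector $\sigma \colon \SS \times \ZZ \to \SS$ with $\sigma(s,z) \in \Gamma(s,z)$ and $\sigma(s,z) \in \argmin_{y \in \Gamma(s,z)} H((s,z),y,\hat v)$ for all $(s,z) \in \SS \times \ZZ$. Then $\sigma \in \Sigma$ and $\sigma$ is $\hat v$-greedy by definition, completing the proof.

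The step requiring the most care is verifying the hypotheses of theorem 18.19: that $\Gamma$ is weakly measurable and that the objective is jointly measurable in the state and continuous in the action (a Carath\'eodory condition). Here both are immediate: lemma~\ref{l:conti-ext} delivers \emph{joint} continuity of $((s,z),y)\mapsto H((s,z),y,\hat v)$ on $\GG$, which is much stronger than the Carath\'eodory property, and weak measurability of $\Gamma$ is standard for a continuous, compact-valued correspondence between separable metric spaces. Consequently no genuine obstacle arises; the lemma is essentially a repackaging of a measurable-selection result, exactly parallel to lemma~\ref{l:egp}, and its proof can simply refer back to that argument.
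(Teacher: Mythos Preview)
Your proof is correct and follows the same approach as the paper, which simply states that the argument is essentially identical to that of lemma~\ref{l:egp} and omits the details. Your explicit invocation of lemma~\ref{l:conti-ext} (in place of assumption~\ref{a:ath}(b), which does not directly apply in the unbounded setting) and of theorem~18.19 of \cite{aliprantis2006border} spells out precisely what the paper leaves implicit.
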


\begin{proof}
    The proof is essentially identical to that of lemma~\ref{l:egp},
    and hence omitted.
\end{proof}

\begin{lemma}
    \label{l:s-ext}
    If assumption~\ref{a:app-ext} holds, then $S$ is asymptotically stable on $\cC$.
\end{lemma}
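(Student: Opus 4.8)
The plan is to mirror the proofs of Lemmas~\ref{l:st} and \ref{l:ts-ext}, with two differences: $S$ is a minimization operator (so ``convex'' is replaced by ``concave'' and suprema by infima throughout), and the ambient space is the weighted Banach space $b_\ell m\XX$ with weight $\ell := \kappa^\theta$. Concretely, I would show that $S$ is a self-map on $\cC$, that it is isotone and concave there, and that it admits an upper solution together with a strict lower solution measured against $\ell$, and then invoke theorem~3.1 of \cite{du1990}.

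First I would verify $S\colon\cC\to\cC$. For $\hat v\in\cC$, Lemma~\ref{l:conti-ext} gives continuity of $((s,z),y)\mapsto H((s,z),y,\hat v)$ on $\GG$, so Berge's theorem of the maximum (in its minimization form, using that $\Gamma$ is nonempty, compact valued and continuous) yields $S\hat v\in c\XX$; combining isotonicity of $H$ in its third argument (Lemma~\ref{l:ic-ext}) with the bracketing inequalities $H((s,z),y,w_1)\geq w_1(s,z)$ and $H((s,z),y,w_2)\leq w_2(s,z)$ of Lemma~\ref{l:uls-ext} then gives $w_1\leq S\hat v\leq w_2$, so $S\hat v\in\cC$. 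I would also record that $\cC$ is $\|\cdot\|_\ell$-complete: since $w_1,w_2$ are $\kappa^\theta$-bounded, $\|\cdot\|_\ell$-convergence forces pointwise convergence and uniform convergence on the sets where $\kappa$ is bounded (in particular on compacta), so the order interval $[w_1,w_2]$ is $\|\cdot\|_\ell$-closed and its continuous members form a complete space.

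Next I would read off the structural properties directly from Lemma~\ref{l:ic-ext}. The map $\hat v\mapsto H((s,z),y,\hat v)$ is isotone and concave on $\vV$ for each feasible $(s,z),y$, and $S$ is the pointwise infimum over $y\in\Gamma(s,z)$ of these maps; since a pointwise infimum of isotone concave maps is again isotone and concave (dually to the supremum-of-convex property exploited in Lemma~\ref{l:st}), $S$ is isotone and concave on $\cC$. For the solutions, Lemma~\ref{l:uls-ext}(U) gives $Sw_2(s,z)=\min_{y\in\Gamma(s,z)}H((s,z),y,w_2)\leq w_2(s,z)$, and Lemma~\ref{l:uls-ext}(SL) supplies an $\epsilon>0$ with $Sw_1(s,z)=\min_{y\in\Gamma(s,z)}H((s,z),y,w_1)\geq w_1(s,z)+\epsilon\kappa(s,z)^\theta$; in the geometry of $\|\cdot\|_\ell$ with $\ell=\kappa^\theta$, the latter is precisely the strict lower-solution condition $w_1\ll Sw_1$.

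Finally I would apply theorem~3.1 of \cite{du1990} to $S$ on the order interval $[w_1,w_2]=\cC$ inside $b_\ell m\XX$: the four properties just established are its hypotheses, and it produces a unique fixed point $\hat v^*\in\cC$ with $S^n\hat v\to\hat v^*$ in $\|\cdot\|_\ell$ for every $\hat v\in\cC$, which is the claimed asymptotic stability. (Alternatively, as in the proofs of the concave case, one may pass to the conjugate aggregator $\check H(x,a,\check v):=-H(x,a,-\check v)$ and apply the convex form of Du's theorem used in Lemma~\ref{l:st}; the two routes are equivalent.) The step I expect to demand the most care is not any single estimate but the weighted-norm bookkeeping: confirming that the relevant strict ordering $\ll$ is the one induced by $\ell=\kappa^\theta$ (which is exactly why Lemma~\ref{l:uls-ext}(SL) carries the $\kappa^\theta$ factor), that $\cC$ is genuinely a $\|\cdot\|_\ell$-complete order interval so that the $S$-iterates stay inside it, and that convergence here is in $\|\cdot\|_\ell$ --- strictly weaker than uniform convergence when $\kappa$ is unbounded, which is the point of the extension in section~\ref{s:ubdd}.
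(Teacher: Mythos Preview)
Your proposal is correct and follows essentially the same approach as the paper: verify $S$ is a self-map on $\cC$ via Lemma~\ref{l:conti-ext} and Berge's theorem, read off isotonicity and concavity from Lemma~\ref{l:ic-ext}, obtain the upper solution and strict lower solution from Lemma~\ref{l:uls-ext}, and conclude by Du's theorem in the weighted space $b_{\kappa^\theta} c(\SS\times\ZZ)$. Your explicit treatment of the $\|\cdot\|_\ell$-completeness of $\cC$ and the identification of $\ll$ with the strong order induced by the weight $\kappa^\theta$ spells out details the paper only gestures at in a footnote, but the argument is the same.
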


\begin{proof}[Proof of lemma~\ref{l:s-ext}]
    By virtue of lemma~\ref{l:conti-ext},
    it follows from Berge's theorem of the minimum that,
    when $\hat v$ is in $\cC$, we have
    \begin{equation*}
        S \hat v (s,z)
        = \min_{y \in \Gamma(s,z)} H((s,z), y, \hat v)
        = \min_{y \in \Gamma(s,z)}
      \left\{ r(s,y,z)
      + \beta  \left[ \int \hat v(y,z') P(z, \diff z')
      \right]^{1/\theta} \right\}^\theta
    \end{equation*}
    and $S \hat v$ is an element of $\cC$.

    In order to apply \citeauthor{du1990}'s theorem to the Bellman operator $S$, 
    it suffices to show that
    \begin{enumerate}
        \item[(i)] $S$ is isotone and concave on $\cC$, and
        \item[(ii)] $S w_1 \gg w_1$ and $S w_2 \leq w_2$.\footnote{
            The symbol $\gg$ denotes the strong partial order in 
            the Banach space $b_{\kappa^\theta} c(\SS \times \ZZ)$ 
            of all $\kappa^\theta$-bounded continuous functions on $\SS \times \ZZ$, 
            in the sense that $w \gg v$ means $w- v$ lies in the interior of 
            $b_{\kappa^\theta} c(\SS \times \ZZ)_+$. 
            For more details, please refer to \cite{zhang2012variational}.
            }
    \end{enumerate}
    Regarding part (i), making use of the result of lemma~\ref{l:ic-ext},
    the proof is essentially identical to the dual proof of lemma~\ref{l:st}.
    In addition, making use of the result of lemma~\ref{l:uls-ext},
    the proof of part (ii) is also essentially identical to 
    the dual proof of lemma~\ref{l:st}.
\end{proof}

\begin{proof}[Proof of theorem~\ref{t:bkcv-ext}]
    By lemmas~\ref{l:ts-ext} to \ref{l:s-ext},
    applying the dual proof of theorem~\ref{t:ccv-app} yields the stated results
    in theorem~\ref{t:bkcv-ext}.
\end{proof}

\bibliographystyle{ecta}

\bibliography{rcdp}

\end{document}